\documentclass[12pt]{article}

\usepackage[margin=1in]{geometry}
\usepackage{CJKutf8, amsmath, amssymb, amsthm, authblk, hyperref, enumitem, booktabs, orcidlink, longtable, bbm}
\usepackage{cleveref}

\newtheorem{conjecture}{Conjecture}
\newtheorem{lemma}{Lemma}
\newtheorem{theorem}{Theorem}

\theoremstyle{definition}
\newtheorem{definition}{Definition}

\theoremstyle{remark}
\newtheorem*{remark}{Remark}

\DeclareMathOperator{\Span}{span}
\DeclareMathOperator{\tr}{tr}
\DeclareMathOperator{\rank}{rank}
\DeclareMathOperator{\pf}{Pf}
\DeclareMathOperator*{\e}{\mathbb E}
\DeclareMathOperator{\poly}{poly}

\usepackage[style=trad-unsrt, citestyle=numeric-comp, giveninits=true, doi=false, url=false, eprint=false, isbn=false, backend=bibtex]{biblatex}

\addbibresource{main.bib}

\begin{document}

\title{Quantum entropy thermalization}

\author[1,2]{Yichen Huang (黄溢辰)\orcidlink{0000-0002-8496-9251}\thanks{\href{mailto:huangtbcmh@gmail.com}{huangtbcmh@gmail.com}}}
\author[1]{Aram W. Harrow\orcidlink{0000-0003-3220-7682}\thanks{\href{mailto:aram@mit.edu}{aram@mit.edu}}}
\affil[1]{Center for Theoretical Physics, Massachusetts Institute of Technology, Cambridge, Massachusetts 02139, USA}
\affil[2]{Department of Physics, Harvard University, Cambridge, Massachusetts 02138, USA}

\begin{CJK}{UTF8}{gbsn}

\maketitle

\end{CJK}

\begin{abstract}

In an isolated quantum many-body system undergoing unitary evolution, the entropy of a subsystem (smaller than half the system size) thermalizes if at long times, it is to leading order equal to the thermodynamic entropy of the subsystem at the same energy. In this paper, we prove entropy thermalization for a nearly integrable Sachdev-Ye-Kitaev model initialized in a pure product state. The model is obtained by adding random all-to-all $4$-body interactions as a perturbation to a random free-fermion model.

\end{abstract}

Preprint number: MIT-CTP/5468

\tableofcontents

\section{Introduction}\label{sec:intro}

\subsection{Entropy thermalization} \label{ss:et}

The success of statistical mechanics is due in large part to the explanatory power of thermal equilibrium.  In particular, thermal equilibrium is not only simple to describe but also an accurate description of many systems in nature.  Its ubiquity comes from the fact that many systems that are far from thermal equilibrium tend to evolve towards thermal equilibrium. This process is called thermalization. Understanding the reasons for thermalization is a fundamental mystery in physics dating back to the origins of statistical mechanics and thermodynamics.

More specifically, we say that a system in contact with a bath thermalizes if it evolves to a Gibbs state described by the canonical ensemble. Suppose an isolated quantum many-body system is initialized in a pure state. Under unitary evolution the system stays in a pure state and never thermalizes. To observe thermalization, we divide the system into two parts. Let $A$ be a subsystem smaller than or equal to half the system size and $\bar A$ be the complement of $A$ (rest of the system). We view $\bar A$ as a bath of $A$ and consider the thermalization of $A$: At long times, do expectation values of observables on $A$ evolve to those in a Gibbs state?

In textbooks and introductory courses, we learned that the canonical ensemble can be derived from the ergodic hypothesis or the principle of equal a priori probabilities. The derivation is simple, elegant, and independent of the microscopic details of the system under consideration. However, the hypothesis or principle is an assumption that is usually difficult to verify or disprove for a particular system. Therefore, it would be desirable if thermalization (described in the preceding paragraph) could be derived from the fundamental laws of physics, in particular, the axioms of quantum mechanics.

In this paper, we study entropy thermalization.  We will see how this is an information-theoretic way of measuring thermalization.

\begin{definition} [Entropy thermalization] \label{def:etrpt}
The entropy of subsystem $A$ thermalizes if after long-time evolution, it is to leading order (in the size of $A$) equal to the thermodynamic entropy of $A$ at the same energy. The thermodynamic entropy of $A$ means the entropy of the reduced density matrix for $A$ of a thermal state of the whole system.
\end{definition}

Throughout this paper, entropy refers to the von Neumann entropy unless otherwise stated. 

Energy is conserved in the evolution under a (time-independent) Hamiltonian and is thus a macroscopic constraint. Although not exactly conserved, the energy of subsystem $A$ usually equilibrates in that its temporal fluctuations at long times vanish in the thermodynamic limit (equilibration can be proved under mild assumptions \cite{Rei08, LPSW09, Sho11}). Since the thermal state maximizes the entropy among all states with the same energy \cite{Weh78}, the entropy of $A$ is upper bounded by the thermodynamic entropy of $A$ at the same energy. Entropy thermalization (Definition \ref{def:etrpt}) says that equilibrated states saturate the upper bound to leading order and is thus a maximum entropy principle.

The above discussion can be rephrased in the language of free energy, which is minimized by the thermal state. Neglecting the temporal fluctuations of the energy of subsystem $A$, entropy thermalization means that the free energy of $A$ at long times is only slightly higher than the minimum such that their difference is subextensive in the size of $A$. In this sense, entropy thermalization is an information-theoretic statement that a subsystem looks thermal at long times.

Since chaotic dynamics tends to maximize the entropy, it is widely believed that:

\begin{conjecture}
Entropy thermalizes in quantum chaotic systems where energy is the only local conserved quantity.
\end{conjecture}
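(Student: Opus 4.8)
The conjecture as stated quantifies over all ``quantum chaotic'' systems, and since we have no rigorous and checkable definition of chaos at the level of a generic many-body Hamiltonian, I do not expect the general statement to be provable with current techniques. The plan is therefore to establish it in a concrete, analytically controllable model that nonetheless exhibits the essential physics: the nearly integrable Sachdev--Ye--Kitaev Hamiltonian $H = H_0 + \lambda V$, where $H_0$ is a random quadratic (free-fermion) term and $V$ is the random all-to-all $4$-body interaction, with $\lambda$ small. This model is the right target because the free theory is exactly solvable---its eigenstates are Gaussian, so reduced density matrices and their entropies are fixed by single-particle correlation matrices---while the perturbation $\lambda V$ introduces genuine interactions that break the extensively many conserved quantities of $H_0$, leaving energy as (conjecturally) the only surviving local conserved quantity. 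Smallness of $\lambda$ is what makes rigorous estimates possible, via perturbation theory supplemented by averaging over the random couplings.

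First I would establish \emph{equilibration}: that the long-time average reduces the dynamics to the diagonal ensemble, so that the entropy at late times is controlled by $S(\overline{\rho_A})$, the von Neumann entropy of the time-averaged reduced density matrix $\overline{\rho_A}$ on $A$, and that temporal fluctuations of this entropy are subextensive in $|A|$ and vanish in the thermodynamic limit. This step requires control of the spectrum of $H$---absence of exact degeneracies and of resonant gap coincidences---so that dephasing occurs; generically the perturbation lifts the degeneracies of the free spectrum, which is what I would verify. The upper bound $S(\overline{\rho_A}) \le S_{\mathrm{th}}$ is already in hand, since the thermal state maximizes entropy at fixed energy, so the content of entropy thermalization (\cref{def:etrpt}) is the matching lower bound $S(\overline{\rho_A}) \ge S_{\mathrm{th}} - o(|A|)$.

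The heart of the argument is a subsystem eigenstate-thermalization estimate: I would show that the reduced density matrix of the diagonal ensemble on $A$ is close in trace distance to the thermal reduced state at the same energy density. Starting from the Gaussian eigenbasis of $H_0$, for which RDMs are explicit, I would treat $\lambda V$ perturbatively and average over the random couplings to show that typical eigenstates of $H$ spread across exponentially many free-fermion configurations with the statistics predicted by the eigenstate thermalization hypothesis; continuity of the von Neumann entropy (e.g.\ via a Fannes-type bound) then upgrades trace-distance closeness of the RDMs to the desired entropy estimate. Combining equilibration with this estimate would yield $S(\overline{\rho_A}) = S_{\mathrm{th}} + o(|A|)$ for a pure product initial state.

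The main obstacle is precisely this last step, and it reflects the tension built into ``nearly integrable.'' The perturbation must be weak enough that perturbative and random-matrix control is valid, yet strong enough to scramble the free-fermion integrals of motion so that the diagonal-ensemble RDM genuinely forgets everything about the initial state except its energy; if $\lambda$ is too small the extra conserved quantities survive and the subsystem relaxes to a generalized Gibbs ensemble rather than a thermal one, and entropy thermalization fails. Quantifying this crossover---determining how small $\lambda$ may be taken as a function of the system size while still guaranteeing thermalization up to subextensive corrections, with honest bounds on the eigenstate structure rather than heuristic scaling arguments---is where the real difficulty lies.
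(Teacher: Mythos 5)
Your framing matches the paper's: the conjecture itself is left unproved (the paper states it precisely because ``chaotic'' has no clean definition), and both you and the authors instead instantiate it in the nearly integrable SYK model. Your equilibration step, the role of the perturbation in lifting gap degeneracies, and the upper bound via the maximum-entropy property of the thermal state all appear in the paper. But the step you call the heart of the argument --- showing the diagonal-ensemble reduced state on $A$ is close in trace distance to the thermal reduced state, via ETH-style eigenstate estimates, and then applying Fannes-type continuity --- is not merely hard in this model; it is provably false in exactly the regime that matters. First, because $\epsilon_2$ is infinitesimal and the spectrum of $H_\textnormal{SYK2}$ is non-degenerate (Lemma \ref{l:nd}), the eigenstates of $H$ are, up to infinitesimal error, \emph{exactly} the eigenstates of the free model: random Gaussian states with definite fermion number (Lemma \ref{l:haar}). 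The perturbation changes only the gap structure of the spectrum, not the eigenvectors, so nothing ``spreads across exponentially many free-fermion configurations,'' and these eigenstates violate the ETH (and the ETH for entropy) when $L=\Theta(N)$. Second, the companion work cited in the paper (see Table \ref{t:comp}) shows thermalization in trace distance \emph{fails} for $L\gtrsim N$, while entropy thermalization must be established up to $L=N/2$; since Fannes continuity needs trace distance $o(1)$ to yield an $o(L)$ entropy error, your route cannot reach the half-system Page-curve regime. The paper's slogan is precisely ``entropy thermalization without thermalization.''

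The paper's lower bound avoids any comparison with the thermal state. It (i) lower-bounds the \emph{effective dimension} of the initial product state in the Gaussian eigenbasis by a random-matrix computation (Jacobi unitary ensemble / Selberg integral, Lemma \ref{l:eff}), giving $D^\textnormal{eff}_h\ge e^{H_b(\nu)N}/\poly(N)$, so the diagonal ensemble $\bar\phi_n$ has near-maximal entropy $H_b(\nu)N-O(\ln N)$; (ii) uses \emph{strong subadditivity} of the von Neumann entropy to push this volume law down to subsystems, $\frac1m\sum_jS(\bar\phi_{n,A_j})\ge LS(\bar\phi_n)/N$; (iii) uses non-degenerate gaps (Lemma \ref{l:ndg}) together with the effective-dimension bound to prove equilibration of $\phi_n(t)_{A}$ to $\bar\phi_{n,A}$ in trace norm at most times (Lemmas \ref{l:eq}, \ref{l:short}); and (iv) invokes entropy continuity (Lemma \ref{l:cont}) only to pass from the time average to a typical time, never to compare with $\sigma_{\beta,A}$. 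Note also that your worry about $\lambda$ being ``too small'' (a crossover to a generalized Gibbs ensemble) is moot in the paper's setup: with the order of limits $\lim_{\epsilon\to0}\lim_{t\to\infty}$, any nonzero generic perturbation suffices, because all that is needed from it is the measure-one condition of non-degenerate gaps.
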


\begin{remark}
This conjecture is not mathematically precise because ``chaotic'' is not defined. We do not attempt to define it here, for there is no clear-cut definition of quantum chaos. There are also subtleties in defining subsystem energy because there are multiple ways to handle boundary terms in the Hamiltonian.  This will be discussed further at the end of \Cref{sec:intro}.
\end{remark}

Entanglement, a concept of quantum information theory, has been widely used in condensed matter and statistical physics to provide insights beyond those obtained via ``conventional'' quantities. A large body of literature is available on the static \cite{HLW94, VLRK03, RM04, KP06, LW06, Has07, HM14,  VHBR17} and dynamical \cite{CC05, ZPP08, BPM12, KH13} behavior of entanglement in various systems. Entanglement is of experimental interest \cite{IMP+15, KTL+16, LRS+19, BEJ+19}, and its scaling reflects the classical simulability of quantum many-body systems \cite{Vid03, VC06, SWVC08, Osb12, GHLS15}.

For a pure state, the entanglement entropy between subsystems $A$ and $\bar A$ is defined as the von Neumann entropy of $A$. Thus, Definition \ref{def:etrpt} is mathematically equivalent to

\begin{definition} [entanglement thermalization \cite{ZKH15}] \label{def:et}
The entanglement entropy between $A$ and $\bar A$ thermalizes if after long-time evolution, it is to leading order equal to the thermodynamic entropy of $A$ (the smaller subsystem) at the same energy.
\end{definition}

To study entanglement thermalization, we usually initialize the system in an unentangled (or slightly entangled) pure state.

Most studies on thermalization only consider local observables or subsystems of constant size. Since entropy is usually referred to as an extensive quantity, we must consider larger subsystems in order to study entropy scaling with subsystem size. Subsystem entropy as a function of subsystem size is called the ``Page curve,'' which was originally calculated for random states (corresponding to infinite temperature) \cite{Pag93}. In high-energy physics, the Page curve also refers to the entanglement between a black hole and its Hawking radiation \cite{Pag93I, AEMM19, Pen20, AHM+21}.  Our main results can be viewed as deriving Page curves for long-time evolved states (corresponding to any temperature) in an interacting quantum many-body system.

\subsection{Model} \label{ss:m}

Due to the difficulty of solving generic quantum many-body systems, significant analytical effort in condensed matter physics is devoted to constructing models in which desired physical properties can be shown rigorously. For example, the transverse-field Ising chain \cite{Pfe70} and the toric code \cite{Kit03} are representative models, which appear in tutorials on quantum phase transition and topological order, respectively. In this paper, we prove entropy thermalization in a nearly integrable Sachdev-Ye-Kitaev (SYK) model. To our knowledge, this is the first proof of entropy thermalization in a particular quantum system.

Over the past several years, SYK models \cite{SY93, Kit15, Sac15, MS16, GQS17, GKST20} have become an active research topic in condensed matter and high-energy physics. The real (complex) SYK$q$ model is a quantum mechanical model of Majorana (Dirac) fermions with random all-to-all $q$-body interactions (``$q$-body'' means that each term in the Hamiltonian acts non-trivially only on $q$ sites). For $q=2$, the model is an integrable model of free fermions \cite{Mag16}. For even $q\ge4$, the model is chaotic \cite{Kit15, MS16}. Entanglement dynamics in SYK models has been studied using a combination of analytical and numerical methods \cite{GLQ17, Mag17, CQZ20, Zha20, SNSF22, PSSY22, Zha22, KM17}.

Let $[N]:=\{1,2,\ldots,N\}$ be the set of integers from $1$ to $N$. Consider an $N$-mode fermionic system with creation and annihilation operators $a_j^\dag,a_j$ indexed by $j\in[N]$.

\begin{definition} [complex SYK2 model] \label{def:SYK2}
Let $h$ be a random matrix of order $N$ from the Gaussian unitary ensemble. The Hamiltonian of the complex SYK2 model is
\begin{equation} \label{eq:SYK2}
H_\textnormal{SYK2}=a^\dag ha,
\end{equation}
where $a:=(a_1,a_2,\ldots,a_N)^T$ is a column vector of $N$ operators.
\end{definition}

\begin{definition} [complex SYK4 model \cite{Sac15, GKST20}] \label{def:SYK4}
Let
\begin{equation} 
\mathcal I:=\left\{(j,k,l,m)\in[N]^{\times4}:(j<k)~\textnormal{and}~(l<m)~\textnormal{and}~(jN+k\le lN+m)\right\}
\end{equation}
and $J:=\{J_{jklm}\}_{(j,k,l,m)\in\mathcal I}$ be a collection of $|\mathcal I|$ independent complex Gaussian random variables with zero mean $\overline{J_{jklm}}=0$ and unit variance $\overline{|J_{jklm}|^2}=1$. The Hamiltonian of the complex SYK4 model is
\begin{equation}  \label{eq:SYK4}
H_\textnormal{SYK4}=\sum_{(j,k,l,m)\in\mathcal I}J_{jklm}a_j^\dag a_k^\dag a_la_m+\textnormal{H.c.},
\end{equation}
where ``H.c.'' means Hermitian conjugate.
\end{definition}

The complex SYK$q$ model is also known as the embedded Gaussian unitary ensemble \cite{Kot01, BW03} and has been studied under this name for decades.

Let
\begin{equation} \label{eq:Q}
Q:=\sum_{j=1}^Na_j^\dag a_j
\end{equation}
be the fermion number operator. Let $\epsilon_1,\epsilon_2$ be infinitesimal and
\begin{equation} \label{eq:SYK24}
H_\textnormal{SYK}:=H_\textnormal{SYK2}+\epsilon_2H_\textnormal{SYK4}.
\end{equation}
Our model is
\begin{equation} \label{eq:model}
H=Q+\epsilon_1H_\textnormal{SYK}=Q+\epsilon_1H_\textnormal{SYK2}+\epsilon_1\epsilon_2H_\textnormal{SYK4}.
\end{equation}
By definition, $H_\textnormal{SYK2}$ and $H_\textnormal{SYK4}$ and hence $H_\textnormal{SYK}$ and $H$ conserve fermion number in that
\begin{equation} \label{eq:comm}
[H_\textnormal{SYK2},Q]=[H_\textnormal{SYK4},Q]=[H_\textnormal{SYK},Q]=[H,Q]=0.
\end{equation}

Both $H_\textnormal{SYK}$ and $H$ are nearly integrable as both $H_\textnormal{SYK2}$ and $Q+\epsilon_1H_\textnormal{SYK2}$ are integrable.

\paragraph{Effects of infinitesimal perturbations.}The Hamiltonian (\ref{eq:model}) can be viewed as perturbing $Q$ with $\epsilon_1H_\textnormal{SYK}$, and $H_\textnormal{SYK}$ can be viewed as perturbing $H_\textnormal{SYK2}$ with $\epsilon_2H_\textnormal{SYK4}$. The effects of the infinitesimal perturbations on most but not all properties are infinitesimal.

Let
\begin{equation} \label{eq:thermal}
\sigma_\beta:=\frac{e^{-\beta H}}{\tr(e^{-\beta H})}
\end{equation}
be a thermal state at inverse temperature $\beta$. Since $\sigma_\beta$, as a matrix-valued function of $\epsilon_1,\epsilon_2$, is continuous,
\begin{equation}
\sigma_\beta=\frac{e^{-\beta Q}}{\tr(e^{-\beta Q})} 
\end{equation}
up to an infinitesimal error. Thus, the thermal properties of $H$ are infinitesimally close to those of $Q$.

Due to fermion number conservation (\ref{eq:comm}), $H$ and $H_\textnormal{SYK}$ have exactly the same set of eigenstates. Let $\{|j\rangle_h\}_{j=1}^{2^N}$ be a complete set of eigenstates of $a^\dag ha$. If the spectrum of $a^\dag ha$ is non-degenerate, first-order perturbation theory implies that the eigenbasis of $a^\dag ha+\epsilon_2H_\textnormal{SYK4}$ is $\{|j\rangle_h\}$ up to an infinitesimal error.

Although the integrability-breaking perturbation (SYK4 term in $H$) is infinitesimal, we will show that its effects on the dynamics become significant in the long-time limit $\lim_{\epsilon_1,\epsilon_2\to0}\lim_{t\to\infty}$. Because the perturbation typically has operator norm $O( \epsilon_1\epsilon_2)$ (neglecting the $N$-dependence), its effects on the dynamics can become significant only after $t \gtrsim 1/(\epsilon_1\epsilon_2)$ (see Lemma 1 in Ref.~\cite{HC15}).  However, this is only a lower bound, and it is possible that entropy thermalization would need significantly more time.


\subsection{Results (informal)} \label{ss:ri}

We initialize the system in a (random) product state, where each fermionic mode is either vacant or occupied. Let $n$ be the fermion number of the state and $\nu:=n/N$ be the filling fraction. Assume without loss of generality that $\nu\le1/2$, and suppose that $\nu$ is lower bounded by an arbitrarily small positive constant. Since the ensemble of SYK$q$ Hamiltonians is invariant with respect to permutations of indices, we may further assume without loss of generality that the initial state is
\begin{equation} \label{eq:ini}
|\phi\rangle=a_1^\dag a_2^\dag\cdots a_n^\dag|0\rangle,
\end{equation}
where $|0\rangle$ is the vacuum state with no fermions. Since $|\phi\rangle$ has a definite fermion number, $H$ and $H_\textnormal{SYK}$ generate the same dynamics in the sense that
\begin{equation} \label{eq:24}
e^{-iHt}|\phi\rangle=e^{-int}e^{-iH_\textnormal{SYK}\epsilon_1t}|\phi\rangle=e^{-int}e^{-i\epsilon_1H_\textnormal{SYK2}t-i\epsilon_1\epsilon_2H_\textnormal{SYK4}t}|\phi\rangle.
\end{equation}

\begin{theorem} [main result, informal]
In the model (\ref{eq:model}), entropy thermalizes in the sense of Definition \ref{def:etrpt}.
\end{theorem}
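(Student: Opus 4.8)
The plan is to identify the thermodynamic entropy as the fermionic Page value and then trap the long-time entanglement entropy between matching bounds. Write $N_A$ for the number of modes in the smaller subsystem $A$, set $N_B:=N-N_A$, and let $H_2(\nu):=-\nu\log\nu-(1-\nu)\log(1-\nu)$ be the binary entropy. First I would fix the target. Since $[H,Q]=0$, the evolution \eqref{eq:24} stays in the $n$-particle sector, and as $\epsilon_1\to0$ the energy is dominated by $Q$; the thermal state at the energy of $|\phi_n\rangle$ is therefore, to leading order, the grand-canonical product state $\bigotimes_{j=1}^N\big[(1-\nu)|0\rangle\langle0|+\nu|1\rangle\langle1|\big]_j$ with chemical potential tuned so that $\langle Q\rangle=n$. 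This factorizes across modes, so its restriction to $A$ has entropy exactly $N_A H_2(\nu)$; passing to the fixed-$n$ ensemble changes this only by the $O(\log N)$ cost of the particle-number constraint. Hence the thermodynamic entropy of $A$ is $N_A H_2(\nu)+o(N_A)$. The matching \emph{upper} bound $S(\rho_A)\le N_A H_2(\nu)+o(N_A)$ is immediate from the maximum-entropy principle recalled in \Cref{sec:intro}, since the equilibrated state has definite energy and particle number.

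The content of the theorem is the \emph{lower} bound $S(\rho_A)\ge N_A H_2(\nu)-o(N_A)$, and here the SYK4 term is indispensable. With $H_{\textnormal{SYK4}}$ switched off the evolution is Gaussian and the reduced state on $A$ is a fermionic Gaussian state fixed by the $N_A\times N_A$ correlation block $C_A(t)$, the restriction to $A$ of $e^{i\epsilon_1 ht}De^{-i\epsilon_1 ht}$ with $D=\operatorname{diag}(1,\dots,1,0,\dots,0)$ ($n$ ones marking the initially occupied modes). For GUE $h$ at long times this block behaves like the corner of a random rank-$n$ projection, whose eigenvalues $\lambda_i$ spread over an interval about $\nu$ rather than collapsing to it once $N_A$ is a constant fraction of $N$. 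Since $S(\rho_A)=\sum_i H_2(\lambda_i)$ and $H_2$ is strictly concave, Jensen's inequality gives $S(\rho_A)\le N_A H_2(\nu)-\Theta(N_A)$: the integrable model leaves $A$ strictly \emph{under-entangled}, and the perturbation must supply the missing $\Theta(N_A)$ of entropy.

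To recover it I would pass to the R\'enyi-$2$ entropy $S_2(\rho_A)=-\log\tr\rho_A^2\le S(\rho_A)$, whose disorder- and long-time average is computable. The target is $\overline{\tr\rho_A^2}\le 2^{-N_A H_2(\nu)+o(N_A)}$, matching the Page purity $\binom{N_A}{\nu N_A}\binom{N_B}{\nu N_B}^2\big/\binom{N}{n}^2=2^{-N_A H_2(\nu)+o(N_A)}$ of the maximally mixed state on the $n$-particle sector. Writing the purity as a two-replica four-point function with a swap on $A$, the long-time average retains only the phase-cancelling terms (the diagonal ensemble of $H$) and the average over the Gaussian couplings $J$ is done by Wick contractions. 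The mechanism that lowers the purity is the strong mixing produced by $H_{\textnormal{SYK4}}$: restricted to the $n$-particle sector, $\epsilon_1 H_{\textnormal{SYK2}}$ is diagonal in the Slater-determinant basis with level spacing $\sim\epsilon_1 N/\binom{N}{n}$, which is exponentially small, whereas the off-diagonal SYK4 matrix elements are of order $\epsilon_1\epsilon_2$; hence even an infinitesimal $\epsilon_2$ hybridizes exponentially many Slater determinants inside each thin energy window. This is the ``significant effect'' flagged after \eqref{eq:model}, and it dictates the order of limits $\lim_{\epsilon_1,\epsilon_2\to0}\lim_{t\to\infty}$: one sends $t\to\infty$ at fixed small $\epsilon_2$ so that the negligible spacing is overwhelmed by the couplings, and only afterwards lets $\epsilon_2\to0$. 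Once $\overline{\tr\rho_A^2}$ is shown to reach the Page value, Markov's inequality forces $\tr\rho_A^2\le 2^{-N_A H_2(\nu)+o(N_A)}$ for a typical realization and time, whence $S(\rho_A)\ge S_2(\rho_A)\ge N_A H_2(\nu)-o(N_A)$, closing the sandwich.

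The main obstacle is the evaluation of this interacting purity. The relevant object is quartic in the couplings $J$, and the embedded-GUE combinatorics---organizing the Wick contractions of four $H_{\textnormal{SYK4}}$ insertions inside a fixed-number sector, dressed by the SYK2 phases and the swap on $A$---are far more involved than the Gaussian free-fermion count. The two delicate points are (i) showing that the non-Gaussian contractions suppress the free-fermion purity all the way down to the Page value $2^{-N_A H_2(\nu)}$, i.e.\ that the embedded-GUE eigenvectors are ``random enough'' to reproduce Page at leading order despite not being Haar-distributed; and (ii) interchanging the thermodynamic limit with the ordered $t$- and $\epsilon_2$-limits, so that the exponentially small level spacing is dominated by $\epsilon_1\epsilon_2$ uniformly as $N\to\infty$.
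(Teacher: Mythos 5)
Your upper bound and your identification of the thermodynamic entropy match the paper (Theorem \ref{thm:u} is proved by exactly the maximum-entropy argument you give), and at infinite temperature ($\nu=1/2$) your R\'enyi-2 strategy is essentially the paper's proof of (\ref{eq:50}). But the lower bound at finite temperature has a fatal flaw: the reduction $S\ge S_2$ cannot reach the target when $\nu\neq1/2$. Your ``Page purity'' $2^{-N_AH_2(\nu)+o(N_A)}$ keeps only the typical block $j=\nu L$ of the maximally mixed state on $\mathcal M_n$; the actual purity is the sum over particle-number sectors of $A$,
\begin{equation}
\tr(\rho_A^2)=\sum_j\binom{L}{j}\left(\frac{\binom{N-L}{n-j}}{\binom{N}{n}}\right)^2,
\end{equation}
whose summand pits the squared sector weight against an inverse block dimension that grows by a factor $\frac{1-\nu}{\nu}$ per unit of $j$. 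For $\nu\neq1/2$ and constant $f=L/N$ this competition is won at $|j-\nu L|=\Theta(L)$, giving $\tr(\rho_A^2)=e^{-H_b(\nu)L+\Theta(L)}$ and hence $S_2\le H_b(\nu)L-\Theta(L)$. The same extensive deficit afflicts every state in $\mathcal M_n$ whose $Q_A$-distribution is close to hypergeometric---in particular the equilibrated state and (by equilibration) the instantaneous state at most times. So the chain $S\ge S_2\ge H_b(\nu)L-o(L)$ is unobtainable: its second inequality is false for every state the dynamics can reach, no matter how chaotic the model is. The von Neumann entropy does not suffer this because it weights sectors linearly rather than quadratically; this is precisely why the paper uses $S_2$ only at $\nu=1/2$ and, for $\nu\neq1/2$, works directly with $S$ via truncation projectors $P_A^{>l}$, subadditivity, and the continuity bound (Lemma \ref{l:cont}) in the proofs of (\ref{eq:less}) and (\ref{eq:half}).

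The mechanism you invoke is also not the one operating in the regime of the theorem, which makes your proposed $J$-averaged Wick-contraction computation the wrong object to study. With $\epsilon_2$ infinitesimal at fixed $N$, the spectrum of $H_\textnormal{SYK2}$ is almost surely non-degenerate (Lemma \ref{l:nd}), so the eigenvectors of $H_\textnormal{SYK}$ coincide with those of $H_\textnormal{SYK2}$ up to infinitesimal error: they stay fermionic Gaussian states, which individually retain extensive entanglement deficits (the paper stresses they violate the ETH for entropy). The SYK4 term never hybridizes them; its sole role is to break the exponentially many \emph{gap} degeneracies $E_j-E_k=E_{j'}-E_{k'}$ of the free-fermion spectrum (Lemma \ref{l:ndg}), which is what makes the fluctuation bound (Lemma \ref{l:op}) and hence equilibration (Lemma \ref{l:eq}) available. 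The entropy then comes not from any single eigenvector but from the dephased mixture $\bar\phi_n$: because $h$ is GUE, the eigenvectors are Haar-random Gaussian states (Lemma \ref{l:haar}), the overlaps $|\langle\phi_n|j\rangle_h|^2$ are governed by the Jacobi unitary ensemble, and the effective dimension is $\binom{N}{n}/\poly(N)$ with high probability (Lemma \ref{l:eff}). In short, the disorder average that does the work is over $h$, not over $J$---indeed in the infinitesimal-$\epsilon_2$ regime the eigenvectors, and therefore the long-time purity, do not depend on $J$ at all, so there are no SYK4 Wick contractions left to compute.
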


Specifically, we prove upper and lower bounds on the entanglement entropy at long times (Table \ref{t:summary}). The leading terms in both bounds are equal to the thermodynamic entropy of the subsystem at the same energy. Furthermore, the subleading terms in our upper and lower bounds match up to at most a logarithmic factor in $N$. Thus, both bounds are pretty tight and neither of them can be substantially improved.

\begin{table}
\caption{Summary of results in the thermodynamic limit $N\to\infty$. $L$ is the subsystem size. $\nu=n/N\le1/2$ and $f:=L/N\le1/2$ are fixed positive constants. $H_b(\nu):=-\nu\ln\nu-(1-\nu)\ln(1-\nu)$ is the binary entropy function. $C$'s in different cells are \emph{different} positive constants that depend only on $\nu$ and $f$. $\nu$ is related to temperature by (\ref{eq:temp}). The second and third columns are our upper (Theorem \ref{thm:ub}) and lower bounds (Theorems \ref{thm:half}, \ref{thm:l}) on the entanglement entropy at long times, respectively. The fourth column is the thermodynamic entropy (\ref{eq:the}) of the subsystem at the same energy. The last column is the average entanglement entropy of a random state with $n$ fermions (Eq.~(\ref{eq:random-ent}) \cite{BHK+22}).}
\centering
\begin{tabular}{c|cccc} 
\toprule
& upper bound & lower bound & thermal & random \\
\midrule
$\nu=1/2$ or $f\neq 1/2$ & $H_b(\nu)L-C$ & $H_b(\nu)L-C\ln L$  & $H_b(\nu)L$ & $H_b(\nu)L-C$ \\
$\nu\neq1/2$ and $f=1/2$ & $H_b(\nu)L-C\sqrt L$ & $H_b(\nu)L-C\sqrt{L\ln L}$ & $H_b(\nu)L$ & $H_b(\nu)L-C\sqrt L$\\
\bottomrule
\end{tabular}
\label{t:summary}
\end{table}

\subsection{Discussion}

\paragraph{Typicality.}As a quantum analogue of the principle of equal a priori probabilities, the typicality argument \cite{slloyd88, GLTZ06, PSW06} postulates that the properties of a quantum system are described by a typical state subject to macroscopic constraints (if any). It is widely believed that the argument leads to quantitatively accurate predictions in chaotic systems. In our context, the constraint is the energy or fermion number, which is conserved under time evolution. The argument claims that \emph{the entanglement entropy of $e^{-iHt}|\phi\rangle$ at long times is approximately equal to that of a random pure state with $n$ fermions.}

The average entanglement entropy of a random state with definite particle number was calculated exactly in Refs.~\cite{Hua19NPB, BD19, Hua21NPB, BHK+22}. For any $\nu$ and subsystem size, it falls within the narrow gap between our upper and lower bounds (Table \ref{t:summary}). Thus, the claim (in italics above) of the typicality argument is rigorously confirmed in a particular model.

Random quantum circuits are minimal models of quantum chaotic dynamics and are therefore of particular interest \cite{FKNV22}. In order to more faithfully model the energy-conserving dynamics generated by chaotic local Hamiltonians, a conserved quantity was introduced into random quantum circuits \cite{KVH18, RPv18}. It allows us to define entanglement thermalization for quantum circuits. We conjecture that entanglement thermalizes in random quantum circuits with a conserved quantity.

\paragraph{Quantum chaos and integrability breaking.}Quantum integrable systems are usually exactly or partially solvable. As platforms where physics can be demonstrated analytically, integrable systems are of particular interest, but they are very special. Weak integrability-breaking perturbations are always present in real quantum systems. They make an integrable system chaotic and can thus drastically change some physical properties. For example, integrable systems often support ballistic or dissipationless transport \cite{SS90, ZNP97, Pro11, SPA11, KBM12}, while transport in chaotic systems is generically diffusive with finite conductivity \cite{JHR06, JR07, HKM13, Zni20}. Therefore, it is important to understand the onset of quantum chaos due to integrability breaking \cite{SR10, LSPR21, BHG21}. What is the finite-size scaling of the critical perturbation strength above which physical properties become chaotic? The answer to this question may depend on the physical property under consideration.

Recall that $Q+\epsilon_1H_\textnormal{SYK2}$ is an integrable free-fermion model, in which entanglement does not thermalize \cite{BHK21, BHK+22} because $e^{-i(Q+\epsilon_1H_\textnormal{SYK2})t}|\phi\rangle$ for any $t\in\mathbb R$ is a fermionic Gaussian state \cite{Bra05}. By contrast, we have proved entanglement thermalization in the model (\ref{eq:model}). Thus, an infinitesimal integrability-breaking perturbation leads to chaotic entanglement dynamics at long times. This result is independent of the microscopic details of the perturbation as long as the perturbation is generic. For example, both upper and lower bounds in Table \ref{t:summary} hold for the entanglement entropy of $e^{-i(Q+\epsilon_1H_\textnormal{SYK2}+\epsilon_1\epsilon_2 H_\textnormal{d})t}|\phi\rangle$ at long times, where $H_\textnormal{d}$, defined in Eq.~(\ref{eq:dd}) below, is a model with random all-to-all $2$-body density-density interactions.

\paragraph{Thermalization versus eigenstate thermalization.}In a quantum many-body system, thermalization with respect to subsystem $A$ means that the reduced density matrix for $A$ at long times is approximately equal to that of a thermal state with the same energy (``approximately equal to'' means that the trace distance between the two reduced density matrices vanishes in the thermodynamic limit). The eigenstate thermalization hypothesis (ETH) \cite{Deu91, Sre94, RDO08, DLL18, Deu18} was proposed to explain thermalization. It says that the reduced density matrix for $A$ of a single eigenstate is approximately equal to that of a thermal state with the same energy. Under mild additional assumptions, the ETH implies thermalization \cite{GE16, DKPR16}.  Thermalization, in turn, implies entropy thermalization due to the continuity of entropy \cite{Fan73, Aud07}.

In a companion paper \cite{HH22TET}, we consider the model (\ref{eq:model}) with the initial state (\ref{eq:ini}). When the size of $A$ is larger than the square root of but is (up to a logarithmic factor) a vanishing fraction of the system size, we prove thermalization with high probability, while almost all eigenstates violate the ETH. In this sense, the ETH is not a necessary condition for thermalization. When the size of $A$ is a finite fraction of the system size, we disprove thermalization. Thus, we find a regime where entropy thermalization occurs without thermalization.

The ETH for entropy says that for individual eigenstates, the entropy of subsystem $A$ (smaller than or equal to half the system size) is to leading order equal to the thermodynamic entropy of $A$ at the same energy \cite{Deu10, SPR12, DLS13, GG18, HG19}. Is the ETH for entropy a necessary condition for entropy thermalization? No, and an explicit counterexample is the model (\ref{eq:model}). Indeed, the spectrum of $H_\textnormal{SYK2}$ is non-degenerate with probability $1$ (Lemma \ref{l:nd}). Then, due to fermion number conservation (\ref{eq:comm}) and since the perturbation $\epsilon_2H_\textnormal{SYK4}$ is infinitesimal, $H$, $H_\textnormal{SYK}$, and $H_\textnormal{SYK2}$ have the same set of eigenstates (up to an infinitesimal error). The eigenstates of $H_\textnormal{SYK2}$ are random Gaussian states with definite fermion number (Lemma \ref{l:haar}). When the size of $A$ is a finite fraction of the system size, they do not satisfy the ETH for entropy \cite{LCB18, ZLC20, LRV20, BHK+22}.

Table \ref{t:comp} summarizes the conditions on the subsystem size under which various phenomena occur or do not occur in the model (\ref{eq:model}). It is very similar to Table 1 of Ref.~\cite{HH22TET}.

\begin{table}
\caption{Conditions on the subsystem size $L$ under which various phenomena occur or do not occur in the model (\ref{eq:model}). We write $x\ll y$ if $x/y\to 0$ in the thermodynamic limit $N\to\infty$; $x\gtrsim y$ if $x/y$ is lower bounded by an arbitrarily small positive constant. While the ETH and the ETH for entropy are statements about the (static) Hamiltonian, thermalization and entropy thermalization are dynamic processes from an initial product state. \emph{Thermalization without the ETH} and \emph{entropy thermalization without the ETH for entropy} are new phenomena, which occur when $\sqrt N\lesssim L\ll N/\ln N$ and $N\lesssim L\le N/2$, respectively.}
\centering
\begin{tabular}{c|cccc}
\toprule
& ETH & thermalization & ETH for entropy & entropy thermalization \\
\midrule
occur & $L\ll\sqrt N$ \cite{HH22TET} & $L\ll N/\ln N$ \cite{HH22TET} & $L\ll N$ \cite{BHK+22} & $L\le N/2$ [this work] \\
not occur & $L\gtrsim\sqrt N$ \cite{HH22TET} & $L\gtrsim N$ \cite{HH22TET} & $L\gtrsim N$ \cite{BHK+22} & N/A \\
\bottomrule
\end{tabular}
\label{t:comp}
\end{table}

\paragraph{Subsystem Hamiltonian.}
A general Hamiltonian $\mathbf H$ of the whole system (union of subsystems $A$ and $\bar A$) can be split into three parts: $\mathbf H=\mathbf H_A+\mathbf H_{\bar A}+\mathbf H_\partial$, where $\mathbf H_{A(\bar A)}$ contains terms acting only on $A(\bar A)$, and $\mathbf H_{\partial}$ consists of boundary terms. Thermalization of $A$ means that the state of $A$ at long times is approximately equal to the reduced state $\sigma_{\beta,A}:=\tr_{\bar A}\sigma_\beta$ of the thermal state $\sigma_\beta:=e^{-\beta\mathbf H}/\tr(e^{-\beta\mathbf H})$ of the whole system. One might also consider the thermal state $\sigma'_{\beta,A}:=e^{-\beta\mathbf H_A}/\tr(e^{-\beta\mathbf H_A})$ of $\mathbf H_A$. Due to the presence of boundary terms, $\sigma_{\beta,A}$ and $\sigma'_{\beta,A}$ are usually far in trace distance \cite{GG18}. On the other hand, for quantum lattice systems with sufficiently fast decay of correlations, the reduced density matrices of $\sigma_{\beta,A}$ and $\sigma'_{\beta,A}$ for a region that is deep enough in the interior of $A$ are close; see Theorem 5 in Ref.~\cite{BK19}. For a Hamiltonian with all-to-all interactions like the SYK model, even extensive quantities such as the entropies of $\sigma_{\beta,A}$ and $\sigma'_{\beta,A}$ may have different scaling~\cite{HG19}. In our model (\ref{eq:model}), $\sigma_{\beta,A}$ and $\sigma'_{\beta,A}$ are the same up to an infinitesimal error because $\mathbf H_\partial$ is infinitesimal.

\section{Results (formal)} \label{s:rf}

Throughout this paper, standard asymptotic notation is used extensively. Let $g_1,g_2:\mathbb R^+\to\mathbb R^+$ be two functions. One writes $g_1(x)=O(g_2(x))$ if and only if there exist constants $M,x_0>0$ such that $g_1(x)\le Mg_2(x)$ for all $x>x_0$; $g_1(x)=\Omega(g_2(x))$ if and only if there exist constants $M,x_0>0$ such that $g_1(x)\ge Mg_2(x)$ for all $x>x_0$; $g_1(x)=\Theta(g_2(x))$ if and only if there exist constants $M_1,M_2,x_0>0$ such that $M_1g_2(x)\le g_1(x)\le M_2g_2(x)$ for all $x>x_0$; $g_1(x)=o(g_2(x))$ if and only if for any constant $M>0$ there exists a constant $x_0>0$ such that $g_1(x)<Mg_2(x)$ for all $x>x_0$. We use a tilde to hide a polylogarithmic factor: $\tilde O(g_1(x)):=O(g_1(x))\poly|\ln g_1(x)|$.

Let $A$ be an arbitrary subsystem of $L$ fermionic modes and $\bar A$ be the complement of $A$ (rest of the system) so that $A\otimes\bar A$ defines a bipartition of the system.

\begin{definition} [entanglement entropy] \label{def:ee}
The R\'enyi entanglement entropy $S_\alpha$ with index $\alpha\in(0,1)\cup(1,\infty)$ of a pure state $|\xi\rangle$ is defined as
\begin{equation} \label{eq:Sa}
S_\alpha(\xi_A):=\frac1{1-\alpha}\ln\tr(\xi_A^\alpha),
\end{equation}
where $\xi_A=\tr_{\bar A}|\xi\rangle\langle\xi|$ is the reduced density matrix. The von Neumann entanglement entropy is given by
\begin{equation} \label{eq:S}
S(\xi_A):=\lim_{\alpha\to1}S_\alpha(\xi_A)=-\tr(\xi_A\ln\xi_A).
\end{equation}
\end{definition}

Neglecting infinitesimal quantities, the reduced density matrix of subsystem $A$ is
\begin{equation} \label{eq:thr}
\sigma_{\beta,A}:=\tr_{\bar A}\sigma_\beta=e^{-\beta Q_A}/\tr(e^{-\beta Q_A}),
\end{equation}
where
\begin{equation} \label{eq:qa}
Q_A:=\sum_{j\in A}a_j^\dag a_j
\end{equation}
is the restriction of $Q$ to $A$. If $\sigma_\beta$ and $|\phi\rangle$ have the same energy, then
\begin{equation} \label{eq:temp}
\tr(\sigma_\beta H)=\langle\phi|H|\phi\rangle=n\implies\beta=\ln(1/\nu-1).
\end{equation}
The von Neumann entropy of $\sigma_{\beta,A}$ is
\begin{equation} \label{eq:the}
S(\sigma_{\beta,A})=-\tr(\sigma_{\beta,A}\ln\sigma_{\beta,A})=H_b(\nu)L,
\end{equation}
where $H_b$ is the binary entropy function defined in the caption of Table \ref{t:summary}. Thus, the thermodynamic entropy obeys a volume law with coefficient $H_b(\nu)$.

Let $L,m$ be positive integers such that $Lm$ is a multiple of $N$. Let $A_1,A_2,\ldots,A_m$ be $m$ possibly overlapping subsystems, each of which has exactly $L$ fermionic modes. Suppose that each fermionic mode in the system is in exactly $Lm/N$ out of these $m$ subsystems. Let
\begin{equation} \label{eq:phit}
\phi(t):=e^{-iHt}|\phi\rangle\langle\phi|e^{iHt},\quad\phi(t)_{A_j}:=\tr_{\bar A_j}\phi(t)
\end{equation}
be the state and its reduced density matrix at time $t$, respectively.

\begin{theorem} [upper bound] \label{thm:u}
For any $N,h,J,t$,
\begin{equation} \label{eq:t2}
\frac1m\sum_{j=1}^mS(\phi(t)_{A_j})\le H_b(\nu)L.
\end{equation}
\end{theorem}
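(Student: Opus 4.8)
The plan is to bound each term $S(\phi_n(t)_{A_j})$ by a sum of single-mode entropies and then exploit the covering condition together with fermion-number conservation. Write $p_k := \tr\bigl(\phi_n(t)\, a_k^\dag a_k\bigr)$ for the occupation of mode $k$ in the time-evolved state; this is a global quantity that does not depend on which subsystem we examine. First I would record that the total occupation is fixed: since $[H,Q]=0$ by (\ref{eq:comm}) and $|\phi_n\rangle$ has fermion number $n$, we have $\sum_{k=1}^N p_k = \tr(\phi_n(t) Q) = n$ for every $t$.

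The core estimate is a \emph{single-subsystem} bound $S(\phi_n(t)_{A_j}) \le \sum_{k\in A_j} H_b(p_k)$. To obtain it I would combine two elementary facts. By subadditivity of the von Neumann entropy applied to the mode tensor-product structure of the Fock space on $A_j$, one has $S(\phi_n(t)_{A_j}) \le \sum_{k\in A_j} S(\phi_n(t)_{\{k\}})$, where $\phi_n(t)_{\{k\}}$ is the single-mode reduced state. For a single fermionic mode the Fock space is two-dimensional with occupation $p_k$ on its diagonal; by the Schur--Horn theorem the diagonal of a Hermitian matrix is majorized by its spectrum, and since Shannon entropy is Schur-concave this gives $S(\phi_n(t)_{\{k\}}) \le H_b(p_k)$. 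Equivalently, this is the grand-canonical maximum-entropy principle: among states of $A_j$ with prescribed expected particle number, entropy is maximized by $e^{-\beta Q_{A_j}}/Z$.

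Summing the single-subsystem bound over $j$ and swapping the order of summation, the covering hypothesis---each mode lies in exactly $Lm/N$ of the $A_j$---collapses the double sum:
\begin{equation}
\frac1m\sum_{j=1}^m S(\phi_n(t)_{A_j}) \le \frac1m\sum_{j=1}^m\sum_{k\in A_j} H_b(p_k) = \frac1m\sum_{k=1}^N \#\{j: k\in A_j\}\,H_b(p_k) = \frac{L}{N}\sum_{k=1}^N H_b(p_k).
\end{equation}
Finally, concavity of $H_b$ and Jensen's inequality, together with $\frac1N\sum_k p_k = \nu$, yield $\frac{L}{N}\sum_{k=1}^N H_b(p_k) \le L\,H_b(\nu)$, which is the claim.

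I do not expect a serious obstacle here: the argument is a short chain of standard entropy inequalities, and its only real content is recognizing that the covering condition plus conservation of $Q$ let one replace the local occupations by the global filling fraction $\nu$. The one point requiring minor care is the invocation of subadditivity for fermionic modes; because $\phi_n(t)$ has definite fermion number, its reductions are well defined and block-diagonal in local parity, so the tensor-product entropy inequalities apply without the sign subtleties that can otherwise arise for fermions. Notably the bound holds for \emph{every} $N,h,J,t$ and uses neither the randomness of $h,J$ nor the SYK4 perturbation---only fermion-number conservation.
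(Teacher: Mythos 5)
Your proposal is correct and takes essentially the same route as the paper's own proof (Theorem~\ref{thm:dpn}): both rest on a maximum-entropy bound for each subsystem at its expected filling, followed by the covering condition and concavity of $H_b$ to replace local fillings by the global $\nu$. The only differences are cosmetic: you unpack the cited thermal maximum-entropy principle into subadditivity plus a Schur--Horn dephasing bound for single modes, and you apply Jensen at the level of modes rather than at the level of subsystems.
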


Let
\begin{equation} \label{eq:ncn}
{[N]\choose n}:=\{R\subseteq[N]:|R|=n\}
\end{equation}
be the set of size-$n$ subsets of $[N]=\{1,2,\ldots,N\}$. Let $\{|\psi_R\rangle\}_{R\in{[N]\choose n}}$ be the complete set of (pairwise orthogonal) computational basis states with $n$ fermions, where
\begin{equation} \label{eq:psi}
|\psi_R\rangle:=\left(\prod_{j\in R}a_j^\dag\right)|0\rangle.
\end{equation}
Let
\begin{equation} \label{eq:psit}
\psi_R(t):=e^{-iHt}|\psi_R\rangle\langle\psi_R|e^{iHt},\quad\psi_R(t)_A:=\tr_{\bar A}\psi_R(t)
\end{equation}
be the state and its reduced density matrix at time $t$, respectively.

\begin{theorem} [upper bound] \label{thm:ub}
Let $\nu\le1/2$ and $f=L/N\le1/2$ be fixed positive constants. For any $h,J,t$,
\begin{multline} \label{eq:up}
\frac1{{N\choose n}}\sum_{R\in{[N]\choose n}}S(\psi_R(t)_A)\\
\le H_b(\nu)L-\delta_{f,\frac12}\left|\ln\frac{1-\nu}{\nu}\right|\sqrt{\frac{\nu(1-\nu)N}{2\pi}}+\frac{f+\ln(1-f)}2+\frac{O(\delta_{f,\frac12})}{\sqrt L}+\frac{O(1)}L
\end{multline}
in the limit $N\to\infty$, where $\delta$ is the Kronecker delta.
\end{theorem}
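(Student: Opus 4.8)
My plan is to remove the time dependence entirely by averaging over the basis first, and then reduce the problem to a sharp asymptotic analysis of the hypergeometric distribution. The starting observation is that since $H$ conserves fermion number (\ref{eq:comm}) and $\{|\psi_R\rangle\}_{R\in\binom{[N]}{n}}$ is an orthonormal basis of the $n$-fermion sector, the average $\frac1{\binom{N}{n}}\sum_R\psi_R(t)$ equals the normalized projector onto that sector and is therefore \emph{independent of $t$}. Tracing out $\bar A$ and using that each $|\psi_R\rangle$ factorizes across the cut as $|\psi_{R\cap A}\rangle_A\otimes|\psi_{R\cap\bar A}\rangle_{\bar A}$, I would show that the averaged reduced state $\bar\rho_A:=\frac1{\binom{N}{n}}\sum_R\psi_R(t)_A$ is block diagonal in the fermion number $n_A$ of $A$, equal to $\binom{N-L}{n-n_A}/\binom{N}{n}$ times the identity on each sector. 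In particular the weight of sector $n_A$ is the hypergeometric probability $q_{n_A}=\binom{L}{n_A}\binom{N-L}{n-n_A}/\binom{N}{n}$, and $\frac1{\binom{N}{n}}\sum_R p_{n_A}^{(R)}=q_{n_A}$, where $p_{n_A}^{(R)}:=\tr\bigl(P_{n_A}^{(A)}\psi_R(t)_A\bigr)$ is the sector distribution of the individual state.

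Next I would establish a per-state bound. For each $R$, since $\psi_R(t)_A$ is block diagonal in $n_A$, its entropy splits into the Shannon entropy of $\{p_{n_A}^{(R)}\}$ plus the $p$-weighted average of the block entropies. Because $\psi_R(t)$ is pure, the Schmidt rank within sector $n_A$ is at most $\min\bigl(\binom{L}{n_A},\binom{N-L}{n-n_A}\bigr)$, so each block entropy is at most the log of this. Averaging over $R$, using concavity of the Shannon entropy together with $\frac1{\binom{N}{n}}\sum_R p_{n_A}^{(R)}=q_{n_A}$, yields the master bound
\begin{equation}
\frac1{\binom{N}{n}}\sum_{R\in\binom{[N]}{n}}S(\psi_R(t)_A)\le H_b\text{-term}:=-\sum_{n_A}q_{n_A}\ln q_{n_A}+\sum_{n_A}q_{n_A}\ln\min\left(\binom{L}{n_A},\binom{N-L}{n-n_A}\right),
\end{equation}
which is again manifestly independent of $t$. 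This already captures why the two rows of Table \ref{t:summary} differ: only the $\min$ distinguishes $f=1/2$ from $f\neq1/2$.

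The remaining work is asymptotic. For $f<1/2$ the minimum is $\binom{L}{n_A}$ throughout the effective support, so the bound becomes $S(\bar\rho_A)$. Writing $\ln\binom{L}{n_A}\approx L\,H_b(n_A/L)-\tfrac12\ln\bigl(2\pi L\nu(1-\nu)\bigr)$ and Taylor-expanding $H_b$ about $\nu$, the linear term drops out because the hypergeometric mean is exactly $\nu L$, and the quadratic term contributes $-\tfrac{1-f}{2}$ via $\sigma^2\approx\nu(1-\nu)(1-f)L$. The Shannon entropy of the near-Gaussian distribution is $\approx\tfrac12\ln(2\pi e\,\sigma^2)$; the $\ln$-of-variance pieces cancel the $-\tfrac12\ln\bigl(2\pi L\nu(1-\nu)\bigr)$ term, leaving $H_b(\nu)L+\tfrac{f+\ln(1-f)}2$. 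For $f=1/2$ one has $N-L=L$, and near the center $\binom{L}{n-n_A}<\binom{L}{n_A}$ precisely when $n_A>n/2$, so restoring the $\min$ subtracts
\begin{equation}
\Delta=\sum_{n_A>n/2}q_{n_A}\left[\ln\binom{L}{n_A}-\ln\binom{L}{n-n_A}\right]\approx2\left|\ln\frac{1-\nu}{\nu}\right|\sum_{n_A>n/2}q_{n_A}\left(n_A-\frac n2\right),
\end{equation}
and with $\sum_{n_A>n/2}q_{n_A}(n_A-n/2)\approx\sigma/\sqrt{2\pi}$ for $\sigma=\tfrac12\sqrt{\nu(1-\nu)N}$ this reproduces exactly the stated $\sqrt N$ term.

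The reduction to a $t$-independent quantity is the easy part; the real labor is the asymptotic expansion, which is where I expect the main obstacle. One must make the Stirling and local-central-limit estimates uniform enough to (i) confirm the $\tfrac12\ln N$ contributions genuinely cancel, leaving an honest $O(1)$ constant; (ii) control the hypergeometric tails so that replacing the distribution by a Gaussian and extending the sums costs only the stated $O(1)/L$ (and $O(\delta_{f,\frac12})/\sqrt L$); and (iii) verify for $f<1/2$ that the minimum is attained at $\binom{L}{n_A}$ across the entire effective support, the tails being super-polynomially suppressed. The most delicate point is obtaining the positive-part expectation $\sum_{n_A>n/2}q_{n_A}(n_A-n/2)=\sigma/\sqrt{2\pi}+o(\sigma)$ with a \emph{provable} error bound rather than the Gaussian heuristic, since this is precisely what pins down the leading $\sqrt N$ coefficient and generates the $O(1/\sqrt L)$ error term in the $f=1/2$ case.
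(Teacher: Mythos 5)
Your proposal follows essentially the same route as the paper. The paper proves Theorem \ref{thm:ub} via Theorem \ref{thm:dpnb}: for an arbitrary orthonormal basis $M'_n$ of the $n$-particle sector it derives exactly your master bound, by block-decomposing $\xi_A$ in the subsystem particle number, bounding each block entropy by $\ln\min\bigl\{\binom{L}{j},\binom{N-L}{n-j}\bigr\}$, and using completeness of the basis (which gives $\e_{|\xi\rangle\in M'_n}|c_j|^2=p_j$, your $q_{n_A}$) together with concavity of the Shannon entropy; time enters only through the observation that $\{e^{-iHt}|\psi_R\rangle\}$ is again such a basis for every $t$. The one place you diverge is the asymptotic evaluation, and it is worth seeing how the paper sidesteps the obstacle you correctly flag as the delicate point. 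You keep the Shannon entropy $-\sum_j q_j\ln q_j$ as a separate term and evaluate it by a Gaussian/local-CLT approximation ($\approx\tfrac12\ln(2\pi e\sigma^2)$), which requires uniform error control; the paper's own remark notes that Refs. \cite{VR17, Hua19NPB} did precisely this without bounding the approximation error. The paper instead absorbs $-\ln p_j$ into the $\min$ term: since $-\ln p_j=\ln\frac{|M_n|}{|M_j^A||M_{n-j}^{\bar A}|}$, the master bound becomes $\sum_j p_j\ln\frac{|M_n|}{|M_{n-j}^{\bar A}|}$ for $f<1/2$ (with a symmetrized analogue for $f=1/2$), and after Stirling this reduces to hypergeometric expectations of the Taylor polynomial of $H_b$, controlled by the exact mean and variance formulas plus tail bounds --- no Gaussian approximation of the distribution or of its entropy is ever invoked. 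Similarly, your half-Gaussian positive-part expectation at $f=1/2$ is replaced in the paper by an explicit Riemann-sum evaluation of ${N\choose n}^{-1}\sum_k k{N/2\choose n/2-k}{N/2\choose n/2+k}$ with a provable $O(1/\sqrt N)$ error. So your outline is sound and would yield the theorem once the local-limit estimates are made rigorous, but the paper's combining trick is the cleaner way to discharge points (i)--(iii) of your final paragraph.
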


Since $H$ conserves fermion number, we observe that for any $t\in\mathbb R$, $\tr(\phi(t)Q)=n$ and $\{e^{-iHt}|\psi_R\rangle\}_{R\in{[N]\choose n}}$ is a complete orthonormal basis of the subspace
\begin{equation} \label{eq:Mn}
\mathcal M_n:=\Span\left\{|\psi_R\rangle:R\in{[N]\choose n}\right\}.
\end{equation}
Theorems \ref{thm:u} and \ref{thm:ub} follow from these observations. Thus, their proofs do not depend on the microscopic details of $H$ and apply universally to any (possibly time-dependent) Hamiltonian that conserves fermion number.

The quantum recurrence theorem \cite{BL57} says that under the dynamics generated by a time-independent Hamiltonian, a finite-size system will, after a sufficiently long time, return to a state arbitrarily close to the initial state, i.e., for any $H$ and any $\varepsilon,t_0>0$ there exists $t>t_0$ such that $|\langle\phi|e^{-iHt}|\phi\rangle|>1-\varepsilon$. Since $|\phi\rangle$ is a product state,
\begin{equation}
\inf_{t>t_0}S(\phi(t)_A)=0,\quad\forall t_0>0,\quad\phi(t)_A:=\tr_{\bar A}\phi(t).
\end{equation}
However, we prove non-trivial lower bounds on the entanglement entropy at most times.

Let $\tau$ be sufficiently large and $t$ be uniformly distributed in the interval $[0,\tau]$. Conceptually, $\tau$ needs to be sufficiently large such that the effect of the SYK4 term in Eq.~(\ref{eq:24}) is crucial at most time $t\in[0,\tau]$. At a technical level, the proofs of Theorems \ref{thm:half} and \ref{thm:l} below approximate the infinite-time average $\lim_{\tau'\to\infty}\e_{t\in[0,\tau']}$ by the long-time average $\e_{t\in[0,\tau]}$. $\tau$ needs to be sufficiently large such that the approximation error is negligible.

\begin{theorem} [lower bound at infinite temperature] \label{thm:half}
Suppose that $\nu=1/2$. Then,
\begin{multline} \label{eq:10}
\Pr_h\left(\Pr_J\left(\Pr_{t\in[0,\tau]}\left(\frac1m\sum_{j=1}^mS(\phi(t)_{A_j})=L\ln2-\frac{O(L\ln N)}N\right)=1-e^{-\Omega(N)}\right)=1\right)\\
\ge1-1/\poly(N)
\end{multline}
for $L\le N/10$, and
\begin{equation} \label{eq:50}
\Pr_h\left(\Pr_J\left(\Pr_{t\in[0,\tau]}\big(S_2(\phi(t)_A)=L\ln2-O(\ln N)\big)\ge1-\frac{e^{-\Omega(\frac N2-L)}}{\poly(N)}\right)=1\right)\ge1-\frac1{\poly(N)}
\end{equation}
for $L\le N/2$, where $\poly(N)$ denotes a polynomial of sufficiently high degree in $N$.
\end{theorem}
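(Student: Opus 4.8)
The plan is to reduce both bounds to estimates on the purity $P(t):=\tr(\phi_n(t)_A^2)$. The key elementary fact is that for any state $\rho$ on a $d$-dimensional space, $\ln d-S(\rho)\le d\,\tr(\rho^2)-1$ — this is the $\chi^2$ upper bound on the relative entropy $D(\rho\,\|\,I/d)$, valid because $I/d$ commutes with $\rho$. Since $\nu=1/2$, the reduced state $\phi_n(t)_A$ has support of dimension $2^L$ and the thermal reference $\sigma_{\beta,A}$ is maximally mixed; hence $\ln d=L\ln2$, and a bound $P(t)\le2^{-L}(1+\eta)$ immediately gives $S(\phi_n(t)_A)\ge L\ln2-\eta$ and $S_2(\phi_n(t)_A)=-\ln P(t)\ge L\ln2-\ln(1+\eta)$. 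Thus \eqref{eq:50} follows from a loose bound $\eta=\poly(N)$ up to $L=N/2$, while \eqref{eq:10} follows from a tight bound $\eta=O(L\ln N/N)$ for $L\le N/10$.

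To estimate $P(t)$ I would use the replica identity $P(t)=\tr[S_A(\phi_n(t)\otimes\phi_n(t))]$, where $S_A$ swaps the two copies of $A$, and then carry out the long-time average $\e_{t\in[0,\tau]}$. Expanding $|\phi_n(t)\rangle=\sum_R c_R e^{-i\mathcal E_R t}|R\rangle$ in the common eigenbasis of $H$, $H_\text{SYK}$, and $H_\text{SYK2}$ (which agree up to the infinitesimal $\epsilon_2$ shift), the average kills every term except the frequency-resonant quadruples $\mathcal E_R+\mathcal E_{R''}=\mathcal E_{R'}+\mathcal E_{R'''}$. By non-degeneracy of the SYK2 spectrum (\Cref{l:nd}) and genericity of $J$, which lifts all accidental resonances (an event of probability $1$ over the SYK4 couplings), only the trivial pairings $\{R',R'''\}=\{R,R''\}$ survive. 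The time-averaged purity therefore splits as $\overline P=\tr(\bar\rho_A^2)+\Delta$, where $\bar\rho=\sum_R|c_R|^2|R\rangle\langle R|$ is the diagonal ensemble and $\Delta=\sum_{R\ne R''}|c_R|^2|c_{R''}|^2\,\|\tr_{\bar A}(|R\rangle\langle R''|)\|_2^2\ge0$ is a swap correction.

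The only remaining randomness is the GUE matrix $h$, whose eigenvector matrix is Haar distributed, so the $|R\rangle$ are random Slater determinants (\Cref{l:haar}) and the overlaps $c_R$ are minors of a Haar unitary. I would compute $\e_h[\overline P]=\e_h[\tr(\bar\rho_A^2)]+\e_h[\Delta]$ by Weingarten calculus. The leading value is $2^{-L}$: the $A$-particle number is, up to finite-size corrections, $\mathrm{Bin}(L,1/2)$-distributed and the reduced state is nearly maximally mixed in each sector, so the total purity approaches $\sum_k\binom Lk^{-1}(\binom Lk2^{-L})^2=2^{-L}$. The discrepancy between the true hypergeometric sector weights and the binomial ones, together with $\Delta$, produces the subleading term, whose size is set by the concentration of the particle-number distribution; this is the origin of both the $e^{-\Omega(N/2-L)}$ factor and the $\sqrt L$-type scaling near $f=1/2$. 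Bounding the second Haar moment $\e_h[\overline P^2]$ (or invoking L\'evy's lemma) then promotes the mean estimate to $\overline P\le2^{-L}(1+\eta)$ with probability $\ge1-1/\poly(N)$ over $h$.

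Finally I would pass from $\overline P$ to typical times. For \eqref{eq:50} I would control the time variance $\mathrm{Var}_t(P)=\sum_{\omega\ne0}|d_\omega|^2$ (Parseval for the almost-periodic $P(t)$, using that generic $J$ makes the frequencies distinct) through a four-replica average over $h$, showing it is smaller than $\overline P^2$ by a factor $e^{-\Omega(N/2-L)}\poly(N)$; Chebyshev together with the pointwise bound $S_2\le L\ln2$ then yields the stated time-probability. For \eqref{eq:10} I would instead average the purity over the $m$ tiling subsystems, $\frac1m\sum_j\tr(\phi_n(t)_{A_j}^2)$; this self-averaging suppresses the fluctuations, and for $L\le N/10$ the finite-size correction is exponentially small, so the averaged purity stays within $2^{-L}(1+O(L\ln N/N))$ for a $1-e^{-\Omega(N)}$ fraction of times — whence summing the per-subsystem $\chi^2$ bound over $j$ gives \eqref{eq:10}. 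The main obstacle is the Haar moment computation itself: evaluating $\e_h[\overline P]$ and its higher-replica analogues for products of Slater-determinant overlaps, and extracting the finite-size corrections sharply enough to produce the exponent $\Omega(N/2-L)$ and the near-$f=1/2$ scaling. Controlling the four-replica time variance tightly enough to beat the $\poly(N)$ slack, and thereby obtain exponentially small time-failure probabilities, is the most delicate step.
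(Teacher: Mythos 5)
Your skeleton overlaps with the paper's proof---non-degenerate gaps (Lemmas \ref{l:nd}, \ref{l:ndg}) reduce the long-time average to the diagonal ensemble plus a swap correction, and the Haar randomness of the SYK2 eigenbasis (Lemma \ref{l:haar}) supplies the only remaining randomness---but your central estimate takes a genuinely different and much harder route, and two steps have real gaps. First, the heart of your plan, a Weingarten evaluation of $\e_h[\overline P]$ sharp enough to give $2^{-L}(1+O(L\ln N/N))$ together with its concentration over $h$, is exactly what you leave as ``the main obstacle,'' and it is where all the difficulty lives: the overlaps $c_R$ and the eigenstates $|R\rangle$ are functions of the \emph{same} Haar unitary, so the moments you need couple determinantal overlaps with reduced density matrices of random Slater determinants, and no clean closed form is available. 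The paper is structured precisely to avoid any subsystem Haar computation: it evaluates one \emph{global} scalar, $L_2=\e_h\sum_j|\langle\phi_n|j\rangle_h|^4$, in closed form via the Jacobi ensemble/Selberg integral, concluding $D^{\mathrm{eff}}_h\ge 2^N/\poly(N)$ with high probability (Lemma \ref{l:eff}); this lower-bounds the \emph{global} entropy of the diagonal ensemble, which is then transferred to subsystems by strong subadditivity of the von Neumann entropy (for (\ref{eq:10})) and by weak subadditivity of the R\'enyi-2 entropy (for (\ref{eq:50})). Relatedly, your swap correction needs no Haar averaging at all: given non-degenerate gaps, $\Delta=\lim_{\tau\to\infty}\frac1\tau\int_0^\tau\|\phi_n(t)_A-\bar\rho_A\|_2^2\,\mathrm dt\le 2^L/D^{\mathrm{eff}}_h$ holds deterministically---this is the paper's Lemma \ref{l:short}, an instance of the standard equilibration bound (Lemma \ref{l:op}).

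Second, your passage from the time average to typical times is flawed as stated. For (\ref{eq:50}) you invoke Parseval and Chebyshev on $P(t)$; but bounding $\sum_{\omega\neq0}|d_\omega|^2$ requires knowing which quadruples of eigenvalues share a frequency, and expanding $|d_\omega|^2$ reintroduces cross terms between distinct quadruples with equal frequency sums, i.e., eight-term resonances $\mathcal E_{R_1}-\mathcal E_{R_2}+\mathcal E_{R_3}-\mathcal E_{R_4}=\mathcal E_{R_5}-\mathcal E_{R_6}+\mathcal E_{R_7}-\mathcal E_{R_8}$. Pairwise non-degenerate gaps---which is all that genericity of $J$ is shown to provide (Lemma \ref{l:ndg}), and all that your ``frequencies distinct'' remark amounts to---do not exclude these, and a genericity lemma for eight-term relations would require a new seed construction in place of Lemma \ref{l:ndgc}. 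The paper avoids variance estimates entirely: it applies Markov's inequality to the nonnegative quantity $\|\phi_n(t)_A-\bar\phi_{n,A}\|_2^2$, whose time average is at most $2^{L-N}\poly(N)$ by Lemma \ref{l:short}; since this is exponentially smaller than the threshold $2^{-L}\poly(N)$, Markov alone yields the failure probability $e^{-\Omega(N/2-L)}/\poly(N)$ in (\ref{eq:50}), and the same mechanism (in trace norm, with $L\le N/10$) yields the $e^{-\Omega(N)}$ in (\ref{eq:10}). Your claim that averaging over the $m$ tiling subsystems ``suppresses the fluctuations'' is also unsupported---the purities of different subsystems at equal times are correlated---but it is unnecessary once Markov is applied to the averaged deviation. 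Your $\chi^2$ conversion from purity to entropy is fine; if you replace the Chebyshev-on-$P(t)$ step by Markov-on-$\Delta$ and the Weingarten program by the global-entropy-plus-subadditivity transfer, you essentially recover the paper's argument.
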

\begin{remark}
Since the R\'enyi entropy $S_\alpha$ is monotonically non-increasing in $\alpha$, (\ref{eq:50}) remains valid upon replacing $S_2(\phi(t)_A)$ by $S(\phi(t)_A)$.
\end{remark}

Table \ref{t:program} interprets (\ref{eq:10}) as a program. (\ref{eq:50}), (\ref{eq:less}), (\ref{eq:half}) can be interpreted similarly.

\begin{table}
\caption{Interpretation of (\ref{eq:10}) as a program.}
\centering
\begin{tabular}{l}
\toprule
step 1: Sample $h$, and proceed to the next step with probability $1-1/\poly(N)$.\\
step 2: Sample $J$, and proceed to the next step with probability $1$.\\
step 3: Sample $t\in[0,\tau]$, and proceed to the next step with probability $1-e^{-\Omega(N)}$.\\
step 4: Entanglement thermalizes in that $\sum_{j=1}^mS(\phi(t)_{A_j})/m=L\ln2-O(L\ln N)/N$.\\
\bottomrule
\end{tabular}
\label{t:program}
\end{table}

\begin{theorem} [lower bound at finite temperature] \label{thm:l}
Suppose that $1/2>\nu=\Omega(1)$. Then,
\begin{multline} \label{eq:less}
\Pr_h\left(\Pr_J\left(\Pr_{t\in[0,\tau]}\left(\frac1m\sum_{j=1}^mS(\phi(t)_{A_j})\ge H_b(\nu)L-\frac{O(L\ln N)}N\right)=1-e^{-\frac{\Omega(N/2-L)^2}N}\right)=1\right)\\
\ge1-1/\poly(N)
\end{multline}
for $L\le N/2-c\sqrt{N\ln N}$, where $c$ is a sufficiently large constant, and
\begin{multline} \label{eq:half}
\Pr_h\bigg(\Pr_J\left(\Pr_{t\in[0,\tau]}\left(S(\phi(t)_A)\ge H_b(\nu)L-O(\sqrt{L\ln N})\right)\ge1-\min\left\{e^{-\frac{\Omega(N/2-L)^2}N},\frac1{\poly(N)}\right\}\right)\\
=1\bigg)\ge1-\frac1{\poly(N)}
\end{multline}
for $L\le N/2$.
\end{theorem}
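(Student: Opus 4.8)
The plan is to lower bound the von Neumann entropy by controlling the distance of the reduced state from the thermal target $\sigma_{\beta,A}=e^{-\beta Q_A}/\tr(e^{-\beta Q_A})$ of~(\ref{eq:thr}). Unlike the infinite-temperature case (Theorem~\ref{thm:half}), one cannot simply invoke $S\ge S_2$: at filling $\nu<1/2$ the thermal spectrum is not flat, so $S_2(\sigma_{\beta,A})=-L\ln\big(\nu^2+(1-\nu)^2\big)$ already falls below $H_b(\nu)L$ by an extensive amount, and a bare purity bound would miss the leading term. Instead I would use the exact identity
\begin{equation}
H_b(\nu)L-S(\phi_n(t)_A)=D\big(\phi_n(t)_A\,\big\|\,\sigma_{\beta,A}\big)-\beta\big(\langle Q_A\rangle_{\phi_n(t)}-\nu L\big),
\end{equation}
which follows from $S(\sigma_{\beta,A})=H_b(\nu)L$ and $\ln\sigma_{\beta,A}=-\beta Q_A-\ln\tr(e^{-\beta Q_A})$; here $D$ is the quantum relative entropy and $\langle Q_A\rangle_{\phi_n(t)}=\tr(\phi_n(t)_A Q_A)$ is the mean occupation of $A$. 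It therefore suffices to upper bound the thermally weighted deviation $D$ and to show that the mean occupation does not undershoot $\nu L$. At $\nu=1/2$ we have $\beta=0$ and $\sigma_{\beta,A}=I/2^L$, so $D$ collapses to the purity deficit controlled in Theorem~\ref{thm:half}; the finite-temperature case is the same argument carried through with the nontrivial diagonal reweighting $\sigma_{\beta,A}^{-1}$.

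To make $D$ computable I would bound it by the sandwiched R\'enyi-$2$ divergence, $D(\rho\|\sigma_{\beta,A})\le\ln\tr\big(\rho\,\sigma_{\beta,A}^{-1/2}\rho\,\sigma_{\beta,A}^{-1/2}\big)$ with $\rho=\phi_n(t)_A$, a quantity quadratic in $\rho$ and hence amenable to the replica/swap trick: it equals $\tr\big[(\phi_n(t)\otimes\phi_n(t))\,W\big]$ for the operator $W=\big(\sigma_{\beta,A}^{-1/2}\otimes\sigma_{\beta,A}^{-1/2}\big)\mathrm{SWAP}_A$ acting on the two copies of $A$, where $\sigma_{\beta,A}^{-1/2}$ is diagonal and factorizes over modes. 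Expanding $|\phi_n\rangle=\sum_\alpha c_\alpha|E_\alpha\rangle$ in the eigenbasis of $H$ and taking $\e_{t\in[0,\tau]}$, the phases $e^{-i(E_\alpha-E_\beta+E_\gamma-E_\delta)t}$ dephase; because the infinitesimal SYK4 term lifts the degeneracies and gap-degeneracies of the free-fermion spectrum, so that within the fixed-$n$ sector the energy gaps are non-degenerate for almost every $J$, only the pairings $\{\alpha=\beta,\gamma=\delta\}$ and $\{\alpha=\delta,\beta=\gamma\}$ survive and the time average reduces to a double sum over pairs of eigenstates of $H$. These coincide up to infinitesimal corrections with the random Gaussian states of $H_\textnormal{SYK2}$ built from the Haar-distributed orbitals of $h$ (Lemmas~\ref{l:nd} and~\ref{l:haar}), so averaging over $h$ reduces the expression to a Weingarten/determinantal computation of thermally weighted overlaps of reduced random Slater determinants at filling $\nu$. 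The goal of this step is $\e_h\,\e_t\,\tr\big(\rho\,\sigma_{\beta,A}^{-1/2}\rho\,\sigma_{\beta,A}^{-1/2}\big)\le1+(\textnormal{small})$, with the ``small'' term reproducing the subleading corrections in Table~\ref{t:summary}.

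The remaining work is concentration. Over $J$ the required non-resonance holds with probability $1$; over $h$ the spectral and orbital regularity used above holds with probability $1-1/\poly(N)$, which is the outer probability in~(\ref{eq:less}) and~(\ref{eq:half}). The delicate part is concentration over $t$: I must show that the instantaneous relative entropy stays close to its time average for all but an exponentially (or polynomially) small fraction of $t\in[0,\tau]$. This needs a second-moment estimate, i.e.\ control of $\e_t$ of the square of the weighted purity, which is a four-replica object whose time average is governed by resonance conditions among four energy differences. Establishing that the non-resonant terms dominate---that the SYK4-perturbed spectrum has no low-order gap resonances with overwhelming probability over $J$, and that the resonant remainder is negligible---is the main obstacle, and it is exactly where the genericity of the infinitesimal integrability-breaking perturbation is indispensable. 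A Chebyshev/Markov bound then converts the averaged smallness of $D$ into the stated inner probabilities, and the long-time average $\e_{t\in[0,\tau]}$ is justified as an approximation to the infinite-time average once $\tau$ exceeds the inverse minimal gap.

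Finally, the two regimes differ only in how the occupation term $\beta\big(\langle Q_A\rangle-\nu L\big)$ and the averaging are handled. For a single subsystem with $L\le N/2$ (Eq.~(\ref{eq:half})), the number of fermions in $A$ fluctuates by $\Theta(\sqrt L)$ when $f=L/N$ is near $1/2$, and through the factor $\beta=\ln(1/\nu-1)=\Theta(1)$ this produces the $O(\sqrt{L\ln N})$ correction once the $\sqrt{\ln N}$ deviation needed for the high-probability tail is included. For the subsystem-averaged bound (Eq.~(\ref{eq:less})), the hypothesis that each mode lies in exactly $Lm/N$ of the $m$ subsystems makes $\frac1m\sum_j\big(\langle Q_{A_j}\rangle-\nu L\big)$ reduce to a suppressed global fluctuation, and the averaging simultaneously tightens the relative-entropy deficit; restricting to $L\le N/2-c\sqrt{N\ln N}$ keeps each $A_j$ away from the half-system regime and yields the sharper $O(L\ln N)/N$ correction. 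All constants are expected to depend only on $\nu$ and $f$, as in Table~\ref{t:summary}.
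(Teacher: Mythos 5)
Your reformulation is sound as far as it goes, and it is a genuinely different route from the paper's. The identity $H_b(\nu)L-S(\phi_n(t)_A)=D\big(\phi_n(t)_A\,\big\|\,\sigma_{\beta,A}\big)-\beta\big(\langle Q_A\rangle_{\phi_n(t)}-\nu L\big)$ is correct, the observation that $\frac1m\sum_j\langle Q_{A_j}\rangle=\nu L$ exactly (so the occupation terms cancel in (\ref{eq:less})) is correct, and $D\le D_2$ is a valid inequality. By contrast, the paper never compares $\phi_n(t)_{A_j}$ to $\sigma_{\beta,A_j}$ at all: it dephases the global state (Lemmas \ref{l:ndg}, \ref{l:eff}), lower-bounds the dephased state's entropy via the effective dimension (a Jacobi-ensemble computation), passes to subsystems by strong subadditivity, and transfers to typical times by trace-norm equilibration (Lemma \ref{l:eq}) plus Fannes--Audenaert continuity (Lemma \ref{l:cont}). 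The problem with your proposal is that its entire weight rests on the step you explicitly label a ``goal,'' namely $\e_h\e_t\tr\big(\rho\,\sigma_{\beta,A}^{-1/2}\rho\,\sigma_{\beta,A}^{-1/2}\big)\le1+(\textnormal{small})$: this estimate is never carried out, and the concentration-in-$t$ issue you call ``the main obstacle'' is also left open (though it is actually the lesser problem, since the weighted purity is $\ge1$ by Cauchy--Schwarz, so Markov's inequality applied to $e^{D_2}-1$ suffices and no four-replica analysis is needed).

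There is moreover a concrete reason the missing step fails as stated. After dephasing, the two-replica time average contains the self-pairing term $\sum_\alpha|c_\alpha|^4\,\langle\alpha|\otimes\langle\alpha|\,W\,|\alpha\rangle\otimes|\alpha\rangle$ with $W=\big(\sigma_{\beta,A}^{-1/2}\otimes\sigma_{\beta,A}^{-1/2}\big)\mathrm{SWAP}_A$, whose generic bound is $\|W\|/D^\textnormal{eff}_h$. But $\|W\|=\|\sigma_{\beta,A}^{-1}\|=\nu^{-L}$ (the weight of the fully occupied $Q_A=L$ sector), while Lemma \ref{l:eff} gives only $D^\textnormal{eff}_h\ge e^{H_b(\nu)N}/\poly(N)$, so the bound is $e^{L\ln(1/\nu)-H_b(\nu)N}\poly(N)$, which is exponentially \emph{large} whenever $L\ln(1/\nu)>H_b(\nu)N$; for $L$ near $N/2$ this already happens for every constant $\nu\le1/4$, well inside the theorem's hypothesis $\nu=\Omega(1)$. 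Doing better requires controlling the thermally weighted purity of individual random Gaussian eigenstates, which recreates the same difficulty: any R\'enyi-2-type quantity carrying the weights $e^{\beta Q_A}$ is dominated by atypical fermion-number sectors, and for those sectors the equilibrated state's occupation weights are controlled only in expectation over $h$ (additively, with polynomial slack and a square-root loss from Cauchy--Schwarz), whereas the exponential weights demand two-sided multiplicative control of exponentially small quantities. This is precisely why the paper truncates to the typical window $|Q_A-\nu L|=O(\sqrt{L\ln N})$ with the projectors $P_A^{\leqslant l}$ (in Lemma \ref{l:eq} and in the proof of (\ref{eq:half})) \emph{before} running any norm or entropy estimate; the truncation is also what produces the $O(\sqrt{L\ln N})$ and $e^{-\Omega(N/2-L)^2/N}$ terms in the statement. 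To rescue your plan you would need to insert the same truncation before applying $D\le D_2$ and bound the truncation error in relative entropy separately, at which point you have essentially reconstructed the paper's argument; without it, the proposal has a genuine gap at its central step.
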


Theorems \ref{thm:half} and \ref{thm:l} are the main technical contributions of this paper.

Since the Hamiltonian $H$ conserves fermion number, $\phi(t)$ for any $t$ is a pure state with $n$ fermions. Thus, it is instructive to compare our bounds with the subsystem  entropy of a typical pure state and of the maximally mixed state, both with definite fermion number. (In the case of the pure state, this means entanglement entropy.)  Let $\xi_A:=\tr_{\bar A}|\xi\rangle\langle\xi|$ be the reduced density matrix of a state chosen uniformly at random with respect to the Haar measure from $\mathcal M_n$. Exact formulas for $\e S(\xi_A)$ and $S(\e\xi_A)$ are given by Eq.~(54) of Ref.~\cite{BHK+22} and Eq.~(13) of Ref.~\cite{VR17}, respectively. 

\begin{theorem} \label{thm:r}
  Let $\nu\le1/2$ and $f\le1/2$ be fixed positive constants, and consider the limit $N\to\infty$.
  \begin{enumerate}
  \item
    The expected entanglement entropy of a random pure state with definite particle number is
\begin{equation}
  \e_{|\xi\rangle\in\mathcal M_n}S(\xi_A)=H_b(\nu)L-\delta_{f,\frac12}\left|\ln\frac{1-\nu}{\nu}\right|\sqrt{\frac{\nu(1-\nu)N}{2\pi}}+\frac{f+\ln(1-f)-\delta_{\nu,\frac12}\delta_{f,\frac12}}2+o(1).
  \label{eq:random-ent}
\end{equation}
\item
  The subsystem entropy of the maximally mixed state with definite particle number is
\begin{equation}
  S\left(\e_{|\xi\rangle\in\mathcal M_n}\xi_A\right)
  = H_b(\nu) L 
  +\frac{f+\ln(1-f)}2+o(1).
  \label{eq:subsystem-mixed}
\end{equation}
\end{enumerate}
\end{theorem}

The right-hand side of Eq.~(\ref{eq:random-ent}) is very similar to that of (\ref{eq:up}).


\section{Proof sketches}

In this section, we sketch the proofs of our results. Full details are deferred to \Cref{sec:UB-proofs,app:B}.

\subsection{Upper bounds}

\paragraph{Proof sketch of Theorem \ref{thm:u}.}Since $\tr(\phi(t)Q)=n$ for any $t$ and since the thermal state maximizes the von Neumann entropy among all states with the same energy \cite{Weh78}, we obtain (\ref{eq:t2}) by viewing $Q$ as an artificial Hamiltonian. We average over subsystems $A_1,A_2,\ldots,A_m$ in (\ref{eq:t2}) because the average subsystem fermion number is $\nu L$ (invariant under time evolution).

\paragraph{Proof sketch of Theorem \ref{thm:ub}.}$\{e^{-iHt}|\psi_R\rangle\}_{R\in{[N]\choose n}}$ for any $t$ is a complete orthonormal basis of $\mathcal M_n$. Unlike (\ref{eq:t2}), (\ref{eq:up}) holds for every single subsystem $A$. We do not need to average over subsystems in (\ref{eq:up}) because the expectation value of $Q_A$, averaged over $\{\psi_R(t)\}_{R\in{[N]\choose n}}$, is $\nu L$ (invariant under time evolution). The proof of Theorem \ref{thm:u} (sketched above) implies that
\begin{equation} \label{eq:upw}
\frac1{{N\choose n}}\sum_{R\in{[N]\choose n}}S(\psi_R(t)_A)\le H_b(\nu)L.
\end{equation}

Comparing the right-hand sides of (\ref{eq:up}) and (\ref{eq:upw}), we see that they are the same to leading order in $L$. The former contains negative subleading terms and thus improves the latter. We now explain why the improvement is possible. For any $t$, the average $\sum_{R\in{[N]\choose n}}\psi_R(t)/{N\choose n}$ is the maximally mixed state on $\mathcal M_n$. Measuring $Q_A$ in this state results in a hypergeometric distribution with mean $\nu L$: The probability of observing $j$ fermions in $A$ is ${L\choose j}{N-L\choose n-j}/{N\choose n}$. By contrast, measuring $Q_A$ in a thermal state results in a binomial distribution. The discrepancy between hypergeometric and binomial distributions implies that (\ref{eq:upw}) cannot be saturated.

The calculations in the proof of Theorem \ref{thm:ub} have substantial overlap with but are significantly different from those in previous works \cite{VR17, Hua19NPB}. This will be discussed further in \Cref{app:ub}.

\subsection{Lower bounds} \label{ss:l}

The proofs of (\ref{eq:10}), (\ref{eq:50}), (\ref{eq:less}), (\ref{eq:half}) are similar at a high level but differ in some technical details. To provide an overview, here we only sketch the proof of (\ref{eq:less}). Complete proofs of all results are given in \Cref{app:B}.

\begin{lemma} \label{l:nd}
The spectrum of $H_\textnormal{SYK2}$ is almost surely (with respect to the randomness of $h$) non-degenerate.
\end{lemma}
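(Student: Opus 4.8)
The plan is to diagonalize the single-particle Hamiltonian and reduce the many-body non-degeneracy to a statement about subset sums of the single-particle eigenvalues. Since $h$ is Hermitian, I would write $h=U\Lambda U^\dagger$ with $\Lambda=\operatorname{diag}(\lambda_1,\ldots,\lambda_N)$ and $U$ unitary. The canonical transformation $b:=U^\dagger a$ preserves the fermionic anticommutation relations and brings $H_\textnormal{SYK2}=a^\dag ha$ into the diagonal free-fermion form $\sum_{i=1}^N\lambda_i b_i^\dag b_i$. The simultaneous eigenstates are then the Fock states $\prod_{i\in S}b_i^\dag|0\rangle$ indexed by subsets $S\subseteq[N]$, with eigenvalues given by the subset sums $E_S=\sum_{i\in S}\lambda_i$. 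Since these $2^N$ states exhaust the Hilbert space, the spectrum of $H_\textnormal{SYK2}$ is non-degenerate if and only if the $2^N$ subset sums $E_S$ are pairwise distinct.

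Next I would recast distinctness of subset sums as a linear condition on the eigenvalue vector. For $S\neq S'$, the equality $E_S=E_{S'}$ is equivalent to $\sum_{i=1}^N c_i\lambda_i=0$ with $c_i:=\mathbf 1[i\in S]-\mathbf 1[i\in S']\in\{-1,0,+1\}$ and $c\neq 0$; conversely any nonzero $c\in\{-1,0,+1\}^N$ with $\sum_i c_i\lambda_i=0$ produces a degeneracy by taking $S=\{i:c_i=1\}$ and $S'=\{i:c_i=-1\}$. Hence the spectrum is non-degenerate if and only if $\sum_{i=1}^N c_i\lambda_i\neq 0$ for every nonzero $c\in\{-1,0,+1\}^N$. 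For each fixed such $c$, the set $\{(\lambda_1,\ldots,\lambda_N)\in\mathbb R^N:\sum_i c_i\lambda_i=0\}$ is a hyperplane and thus has Lebesgue measure zero.

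I would then invoke absolute continuity of the GUE eigenvalue law: the joint distribution of $(\lambda_1,\ldots,\lambda_N)$ for a GUE matrix is absolutely continuous with respect to Lebesgue measure on $\mathbb R^N$ (explicitly proportional to $\prod_{i<j}(\lambda_i-\lambda_j)^2 e^{-\sum_i\lambda_i^2/2}$ on the ordered cone). Therefore $\Pr(\sum_i c_i\lambda_i=0)=0$ for each fixed nonzero $c$. Since there are only finitely many (namely $3^N-1$) nonzero vectors $c\in\{-1,0,+1\}^N$, a union bound gives $\Pr(\exists\, c\neq 0:\sum_i c_i\lambda_i=0)=0$. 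Almost surely, all subset sums are distinct and the spectrum of $H_\textnormal{SYK2}$ is non-degenerate.

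The only content beyond bookkeeping is the reduction to subset sums together with the absolute-continuity input, so the step I would be most careful about is justifying that the eigenvalue law places no mass on the relevant hyperplanes. This is immediate from the standard GUE eigenvalue density quoted above; if one preferred to avoid quoting it, one could instead argue directly that the entries of $h$ are jointly absolutely continuous and that, for fixed $c$, the locus of Hermitian matrices with $\sum_i c_i\lambda_i(h)=0$ is a proper lower-dimensional subset. That alternative route is slightly more delicate, since the eigenvalues are only piecewise-analytic functions of $h$, which is why I would favor the eigenvalue-density argument.
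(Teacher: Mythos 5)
Your proof is correct, and it reaches the conclusion by a genuinely different route than the paper. Both arguments start from the same reduction---the many-body spectrum consists of the $2^N$ subset sums $\sum_{i\in S}\lambda_i$, so non-degeneracy fails exactly when some nonzero $c\in\{-1,0,+1\}^N$ satisfies $\sum_i c_i\lambda_i=0$---but they use it differently. The paper uses it only to produce a \emph{witness}: if $h_1,\ldots,h_N$ are linearly independent over the integers, the diagonal Hamiltonian $\sum_j h_j a_j^\dag a_j$ has non-degenerate spectrum; it then invokes the polynomial argument of Lemma 2 in Ref.~\cite{KLW15} (the same device used for Lemma~\ref{l:ndgd} in this paper), under which the discriminant $\prod_{j<k}(E_j-E_k)$, being a symmetric polynomial in the many-body eigenvalues, is---via the power sums $\tr(H_\textnormal{SYK2}^k)$---a polynomial in the entries of $h$, not identically zero by the witness, and the zero set of a nonzero multivariate polynomial has Lebesgue measure zero. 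That trick is exactly what circumvents the regularity issue you flag at the end: one never treats individual eigenvalues as functions of $h$, only symmetric polynomials of them. Your route instead works at the level of the eigenvalue law: degeneracy means the eigenvalue vector lies on one of the $3^N-1$ hyperplanes $\{\sum_i c_i\lambda_i=0\}$, each of which is null for the absolutely continuous GUE joint eigenvalue density, and a finite union bound finishes. Your argument is more self-contained (no witness, no symmetric-function theorem, no external reference) at the price of quoting the GUE eigenvalue density; the paper's argument needs only absolute continuity of the distribution of the entries of $h$, so it applies verbatim to other ensembles and, more importantly, it is the version that generalizes to the interacting Hamiltonians of Lemmas~\ref{l:ndgd} and~\ref{l:ndg}, where no closed-form description of the eigenvalues exists and your hyperplane reduction is unavailable.
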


This lemma allows us to assume that the spectrum of $a^\dag ha$ is non-degenerate. Then, the eigenbasis $\{|j\rangle_h\}_{j=1}^{2^N}$ of $a^\dag ha$ is unambiguously defined. For infinitesimal $\epsilon_2$, $H_\textnormal{SYK}=a^\dag ha+\epsilon_2H_\textnormal{SYK4}$ also has a non-degenerate spectrum, and its eigenbasis is $\{|j\rangle_h\}$ up to an infinitesimal error, which is neglected in the following. Let
\begin{equation} \label{eq:bphi}
\bar\phi:=\lim_{\tau\to\infty}\frac1\tau\int_0^\tau\phi(t)\,\mathrm dt=\lim_{\tau\to\infty}\frac1\tau\int_0^\tau e^{-iH_\textnormal{SYK}t}|\phi\rangle\langle\phi|e^{iH_\textnormal{SYK}t}\,\mathrm dt,\quad\bar\phi_{A_j}:=\tr_{\bar A_j}\bar\phi
\end{equation}
be the time-averaged state and its reduced density matrix, respectively. Since the spectrum of $H_\textnormal{SYK}$ is non-degenerate, $\bar\phi$ is obtained by dephasing $|\phi\rangle$ in the eigenbasis of $H_\textnormal{SYK}$ or in the basis $\{|j\rangle_h\}$. Since $\langle\phi|j\rangle_h=0$ if the fermion number of $|j\rangle_h$ is not $n$,
\begin{equation} \label{eq:dep}
\bar\phi=\sum_{j=1}^{2^N}\big|\langle\phi|j\rangle_h\big|^2|j\rangle_h\langle j|_h=\sum_{j:~Q|j\rangle_h=n|j\rangle_h}\big|\langle\phi|j\rangle_h\big|^2|j\rangle_h\langle j|_h.
\end{equation}
The effective dimension of $|\phi\rangle$ is defined as
\begin{equation} \label{eq:eff}
D^\textnormal{eff}_h=e^{S_2(\bar\phi)}=1\Big/\sum_{j=1}^{2^N}\big|\langle\phi|j\rangle_h\big|^4\implies D^\textnormal{eff}_h\le e^{S(\bar\phi)}\le{N\choose n},\quad\forall h.
\end{equation}
Since $h$ is chosen from the Gaussian unitary ensemble, $|j\rangle_h$ is a uniformly random Gaussian state with definite fermion number (Lemma \ref{l:haar}). Using random matrix techniques, we show that with high probability (Lemma \ref{l:eff}),
\begin{equation} \label{eq:sk}
D^\textnormal{eff}_h\ge{N\choose n}\big/\poly(N).
\end{equation}
Thus, $S(\bar\phi)$ obeys a volume law with coefficient $H_b(\nu)$. Using the strong subadditivity of the von Neumann entropy \cite{LR73, AM03}, $\sum_{j=1}^mS(\bar\phi_{A_j})/m$ is to leading order lower bounded by $H_b(\nu)L$. Lemma \ref{l:ndg} allows us to assume that the spectrum of $H_\textnormal{SYK}$ has non-degenerate gaps (Definition \ref{d:ndg}). Then, (\ref{eq:sk}) implies equilibration in that $\phi(t)_{A_j}$ is close to $\bar\phi_{A_j}$ for most $t\in[0,\tau]$ (Lemma \ref{l:eq}). Finally, we obtain the lower bound (\ref{eq:less}) on $\sum_{j=1}^mS(\phi(t)_{A_j})/m$ from the continuity of the von Neumann entropy (Lemma \ref{l:cont}).

\begin{remark}
  As aforementioned, infinitesimal $\epsilon_2$ implies that the eigenbasis of $a^\dag ha+\epsilon_2H_\textnormal{SYK4}$ is infinitesimally close to $\{|j\rangle_h\}$. This is essential for the proof. If $\epsilon_2$ is not extraordinarily small, of course entropy thermalization is still expected. However, we do not have an analytical expression for the eigenstates of $H_\textnormal{SYK2}+\epsilon_2H_\textnormal{SYK4}$ and do not know how to estimate the effective dimension, which should be defined with respect to the eigenbasis of $H_\textnormal{SYK2}+\epsilon_2H_\textnormal{SYK4}$.
\end{remark}

\section{Proof of upper bounds} \label{sec:UB-proofs}

For spin systems, Theorems \ref{thm:dpn} and \ref{thm:dpnb} below provide upper bounds on the entanglement entropy: Theorem \ref{thm:dpn} applies to individual states with a certain expectation value of the particle number, and Theorem \ref{thm:dpnb} applies to a complete set of orthonormal basis states with definite particle number. Without essential modification the proofs and bounds remain valid for fermionic systems. Since $H$ conserves fermion number, for any $t\in\mathbb R$, $\tr(\phi(t)Q)=n$ and $\{e^{-iHt}|\psi_R\rangle\}_{R\in{[N]\choose n}}$ is a complete orthonormal basis of $\mathcal M_n$. Thus, Theorems \ref{thm:u} and \ref{thm:ub} follow directly from the fermionic analogues of Theorems \ref{thm:dpn} and \ref{thm:dpnb}, respectively.

Consider a system of $N$ spin-$1/2$'s (qubits) labeled by $1,2,\ldots,N$. Let $\nu:=n/N$.

\subsection{Entropy of a single state}

Let $L,m$ be positive integers such that $Lm$ is a multiple of $N$. Let $A_1,A_2,\ldots,A_m$ be $m$ possibly overlapping subsystems, each of which has exactly $L$ spins. Suppose that each spin in the system is in exactly $Lm/N$ out of these $m$ subsystems. For each $j$, let $\bar A_j$ be the complement of $A_j$ so that $A_j\otimes\bar A_j$ defines a bipartition of the system.

\begin{theorem} \label{thm:dpn}
Let
\begin{equation}
Z:=\sum_{k=1}^NZ_k,
\end{equation}
where $Z_k$ is the Pauli $z$ matrix acting on spin $k$. For any state $|\xi\rangle$ such that $\langle\xi|Z|\xi\rangle=2n-N$,
\begin{equation}
\frac1m\sum_{j=1}^mS(\xi_{A_j})\le H_b(\nu)L,
\end{equation}
where $\xi_{A_j}:=\tr_{\bar A_j}|\xi\rangle\langle\xi|$ is the reduced density matrix of subsystem $A_j$.
\end{theorem}

\begin{proof}
Regarding $Z$ as a Hamiltonian, let
\begin{equation}
Z_{A_j}:=\sum_{k\in A_j}Z_k,\quad E_{A_j}=\tr(\xi_{A_j}Z_{A_j})
\end{equation}
be the restriction of $Z$ to subsystem $A_j$ and the energy of $A_j$, respectively. Since each spin is in exactly $Lm/N$ out of the $m$ subsystems $A_1,A_2,\ldots,A_m$,
\begin{equation}
\frac1m\sum_{j=1}^mZ_{A_j}=LZ/N\implies\frac1m\sum_{j=1}^mE_{A_j}=(2\nu-1)L.
\end{equation}
Since the thermal state maximizes the von Neumann entropy among all states with the same energy \cite{Weh78},
\begin{equation}
\frac1m\sum_{j=1}^mS(\xi_{A_j})\le\frac1m\sum_{j=1}^mH_b\left(\frac12+\frac{E_{A_j}}{2L}\right)L\le H_b\left(\frac12+\frac1m\sum_{j=1}^m\frac{E_{A_j}}{2L}\right)L=H_b(\nu)L,
\end{equation}
where we used the concavity of $H_b$.
\end{proof}

\subsection{Average entropy over a basis} \label{app:ub}

Let $M_n$ be the set of computational basis states with $n$ spins up and $N-n$ spins down. Let $M'_n$ be an arbitrary orthonormal basis of $\Span M_n$ so that
\begin{equation}
|M_n|=|M'_n|={N\choose n}.
\end{equation}
Let $A$ be an arbitrary subsystem of $L$ spins and $\bar A$ be the complement of $A$.

\begin{theorem} \label{thm:dpnb}
Let $\nu=n/N\le1/2$ and $f:=L/N\le1/2$ be fixed positive constants. In the limit $N\to\infty$,
\begin{equation} \label{eq:dpnb}
\e_{|\xi\rangle\in M'_n}S(\xi_A)\le H_b(\nu)L-\delta_{f,\frac12}\left|\ln\frac{1-\nu}{\nu}\right|\sqrt{\frac{\nu(1-\nu)N}{2\pi}}+\frac{f+\ln(1-f)}2+\frac{O(\delta_{f,\frac12})}{\sqrt L}+\frac{O(1)}L,
\end{equation}
where $\xi_A:=\tr_{\bar A}|\xi\rangle\langle\xi|$ is the reduced density matrix of subsystem $A$.
\end{theorem}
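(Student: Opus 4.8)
The plan is to reduce the average over the unknown orthonormal basis $M'_n$ to an explicit calculation involving the hypergeometric distribution, and then carry out a careful Stirling-type asymptotic expansion. The starting point is that every $|\xi\rangle\in\Span M_n$ has definite total up-spin number $n$, so $\xi_A$ is block diagonal with respect to the number of up-spins in $A$. Writing $\Pi_{A,j}$ for the projector onto $A$-configurations with exactly $j$ up-spins, $q_j^\xi:=\tr(\Pi_{A,j}\xi_A)$, and $\xi_A^{(j)}$ for the normalized block, one has the exact identity
\[
S(\xi_A)=H\big(\{q_j^\xi\}\big)+\sum_{j}q_j^\xi\,S\big(\xi_A^{(j)}\big),
\]
where $H$ is the Shannon entropy. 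The crucial structural input is that the component of $|\xi\rangle$ with $j$ up-spins in $A$ lives in a tensor product of spaces of dimensions $\binom Lj$ and $\binom{N-L}{n-j}$, so its Schmidt rank, and hence the rank of $\xi_A^{(j)}$, is at most $\min\big(\binom Lj,\binom{N-L}{n-j}\big)$; thus $S(\xi_A^{(j)})\le\ln\min\big(\binom Lj,\binom{N-L}{n-j}\big)$ for every $\xi$ and $j$.

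Next I would average over the basis. Since $\sum_{|\xi\rangle\in M'_n}|\xi\rangle\langle\xi|$ equals the projector onto $\Span M_n$ independently of the basis, the averaged reduced state $\e_{\xi}\xi_A={N\choose n}^{-1}\tr_{\bar A}\big(\sum_\xi|\xi\rangle\langle\xi|\big)$ is the reduced maximally mixed state, and a direct count gives $\e_{\xi}q_j^\xi=p_j:=\binom Lj\binom{N-L}{n-j}/\binom Nn$, the hypergeometric distribution with mean $\nu L$. Averaging the identity, bounding the second term termwise by the Schmidt-rank estimate, and using concavity of $H$ to replace $\e_{\xi}H(\{q_j^\xi\})$ by $H(\{p_j\})$, I obtain the basis-independent bound
\[
\e_{|\xi\rangle\in M'_n}S(\xi_A)\le H\big(\{p_j\}\big)+\sum_{j}p_j\ln\min\left(\binom Lj,\binom{N-L}{n-j}\right),
\]
whose right-hand side is an explicit function of $N,n,L$.

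The remaining work is the asymptotic evaluation. For $f<1/2$ one checks that $\binom Lj\le\binom{N-L}{n-j}$ throughout the effective support of $p_j$, so the minimum is $\binom Lj$ and the bound equals $H(\{p_j\})+\sum_j p_j\ln\binom Lj$. Expanding $\ln\binom Lj$ by Stirling around $j=\nu L$, the linear term drops since $\e[j]=\nu L$; the quadratic term contributes $\tfrac12 H_b''(\nu)\,\mathrm{Var}(j)/L\to-\tfrac{1-f}2$ using $H_b''(\nu)=-1/(\nu(1-\nu))$ and $\mathrm{Var}(j)=L\nu(1-\nu)(N-L)/(N-1)$; and the $\Theta(\ln L)$ pieces of $H(\{p_j\})$ and of the $-\tfrac12\ln(2\pi L x(1-x))$ Stirling correction cancel, combining to $\tfrac12(1+\ln(1-f))$. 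Together these yield the constant $\tfrac{f+\ln(1-f)}2$ with remainder $O(1/L)$. For $f=1/2$ both $j$ and $n-j$ have mean $\nu L$, so the minimum genuinely matters. Using $\ln\min(a,b)=\tfrac12(\ln a+\ln b)-\tfrac12|\ln a-\ln b|$ and the reflection symmetry $p_{n-j}=p_j$ of the hypergeometric distribution at $f=1/2$, the symmetric part reproduces the $f<1/2$ answer, while the antisymmetric part contributes $-\tfrac12\e_{p}\big|\ln\binom Lj-\ln\binom L{n-j}\big|\approx-\big|\ln\tfrac{1-\nu}\nu\big|\,\e_p|j-\nu L|$ from the linear Stirling expansion; evaluating $\e_p|j-\nu L|\approx\sigma\sqrt{2/\pi}$ with $\sigma^2\to N\nu(1-\nu)/4$ gives exactly $-\big|\ln\tfrac{1-\nu}\nu\big|\sqrt{\nu(1-\nu)N/(2\pi)}$.

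The main obstacle is this final asymptotic analysis rather than the structural reduction. One must (i) control the Stirling approximation and the moment expansions uniformly in $j$, including the tails where $|j-\nu L|$ is large and either the Gaussian approximation to $p_j$ or the low-order expansion of $H_b$ degrades, in order to certify the stated $O(1/L)$ and $O(\delta_{f,\frac12})/\sqrt L$ errors rather than merely $o(1)$; and (ii) at $f=1/2$, justify replacing $\e_p|j-\nu L|$ by the Gaussian value $\sigma\sqrt{2/\pi}$ and handle the cubic and higher terms in the odd expansion of $H_b(\nu+\delta/L)-H_b(\nu-\delta/L)$, whose contribution is of order $L^{-2}\,\e_p|j-\nu L|^3=O(1/\sqrt L)$ and is precisely where the $O(1/\sqrt L)$ correction originates. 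That the resulting bound coincides with the random-state average of \Cref{thm:r} is a useful consistency check that the concavity and Schmidt-rank steps lose nothing to leading subleading order.
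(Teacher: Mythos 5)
Your proposal is correct and follows essentially the same route as the paper: the block decomposition of $\xi_A$ by subsystem particle number, the Schmidt-rank bound $S(\xi_A^{(j)})\le\ln\min\{\binom Lj,\binom{N-L}{n-j}\}$, the basis-independence giving $\e_\xi q_j^\xi=p_j$ (hypergeometric), concavity of the Shannon entropy, and then a Stirling/Taylor expansion whose second-order term uses the exact hypergeometric variance and whose $f=1/2$ correction reduces to the mean absolute deviation $\e_p|j-\nu L|\approx\sigma\sqrt{2/\pi}$. Your organization differs only cosmetically (expanding $H(\{p_j\})$ and the min term separately, and writing $\ln\min(a,b)=\tfrac12(\ln a+\ln b)-\tfrac12|\ln a-\ln b|$ where the paper instead combines the terms into $\ln\frac{|M_n|}{|M_{n-j}^{\bar A}|}$ and doubles the half-sum by symmetry), and you correctly flag the same tail-control and error-tracking issues the paper handles via the hypergeometric tail bound and order-by-order bookkeeping.
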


\begin{proof}
Let $M_j^A$ ($M_j^{\bar A}$) be the set of computational basis states of subsystem $A$ ($\bar A$) with $j$ spins up and $L-j$ ($N-L-j$) spins down so that
\begin{equation}
|M_j^A|={L\choose j},\quad|M_j^{\bar A}|={N-L\choose j},\quad M_n=\bigcup_{j=0}^{\min\{L,n\}}M_j^A\times M_{n-j}^{\bar A}.
\end{equation}
Thus, any state $|\xi\rangle$ in $M'_n$ can be decomposed as
\begin{equation}
|\xi\rangle=\sum_{j=0}^{\min\{L,n\}}c_j|\xi_j\rangle,
\end{equation}
where $|\xi_j\rangle$ is a normalized state in $\Span M_j^A\otimes\Span M_{n-j}^{\bar A}$. Let $\xi_{j,A}:=\tr_{\bar A}|\xi_j\rangle\langle\xi_j|$ be the reduced density matrix of subsystem $A$. Since $\Span M_n^{\bar A},\Span M_{n-1}^{\bar A},\ldots$ are pairwise orthogonal,
\begin{gather}
  \xi_A=\bigoplus_{j=0}^{\min\{L,n\}}|c_j|^2\xi_{j,A},\\
   S(\xi_A)=\sum_{j=0}^{\min\{L,n\}}|c_j|^2S(\xi_{j,A})-\sum_{j=0}^{\min\{L,n\}}|c_j|^2\ln(|c_j|^2).
\end{gather}
Since $M'_n$ is a complete basis of $\Span M_n$,
\begin{equation}
\e_{|\xi\rangle\in M'_n}|c_j|^2=p_j,\quad p_j:=|M_j^A||M_{n-j}^{\bar A}|/|M_n|.
\end{equation}
Using the trivial bound
\begin{equation}
S(\xi_{j,A})\le\ln\min\{|M_j^A|,|M_{n-j}^{\bar A}|\}
\end{equation}
and the concavity of the Shannon entropy,
\begin{equation}
\e_{|\xi\rangle\in M'_n}S(\xi_A)\le\sum_{j=0}^{\min\{L,n\}}p_j\ln\min\{|M_j^A|,|M_{n-j}^{\bar A}|\}-\sum_{j=0}^{\min\{L,n\}}p_j\ln p_j.
\end{equation}

For any fixed constant $f<1/2$,
\begin{equation} \label{eq:37}
\e_{|\xi\rangle\in M'_n}S(\xi_A)\le\sum_{j=0}^{\min\{L,n\}}p_j\ln\frac{|M_n|}{|M_{n-j}^{\bar A}|}\le\sum_{j:~|j/L-\nu|=\tilde O(1/\sqrt L)}p_j\ln\frac{|M_n|}{|M_{n-j}^{\bar A}|}+\frac1{\poly(L)},
\end{equation}
where we used the tail bound for the hypergeometric distribution \cite{Ska13}. Stirling's formula
\begin{equation}
\ln(N!)=N\ln N-N+\frac12\ln(2\pi N)+O(1/N)
\end{equation}
implies that
\begin{equation} \label{eq:bn}
{N\choose n}=\frac{N!}{n!(N-n)!}=e^{H_b(\nu)N+O(1/n)}\sqrt{\frac N{2\pi n(N-n)}}.
\end{equation}
Hence,
\begin{equation} \label{eq:40}
\ln\frac{|M_n|}{|M_{n-j}^{\bar A}|}=H_b(\nu)N-H_b\left(\frac{n-j}{N-L}\right)(N-L)+\frac12\ln\frac{(n-j)(N-L-n+j)}{(1-\nu)n(N-L)}+\frac{O(1)}{n-j}.
\end{equation}
It is not difficult to see that
\begin{gather} 
\sum_{j:~|j/L-\nu|=\tilde O(1/\sqrt L)}p_j\left(H_b(\nu)N+\frac{O(1)}{n-j}\right)=H_b(\nu)N+O(1/L),\label{eq:41}\\
\sum_{j:~|j/L-\nu|=\tilde O(1/\sqrt L)}\frac{p_j}2\ln\frac{(n-j)(N-L-n+j)}{(1-\nu)n(N-L)}=\frac{\ln(1-f)}2+O(1/L).\label{eq:42}
\end{gather}
Using the Taylor expansion
\begin{equation} \label{eq:tl}
H_b(\nu+x)=H_b(\nu)+x\ln\frac{1-\nu}{\nu}-\frac{x^2}{2\nu(1-\nu)}+\frac{(1-2\nu)x^3}{6\nu^2(1-\nu)^2}+O(x^4),
\end{equation}
a straightforward order-by-order calculation shows that
\begin{multline} \label{eq:exp}
(N-L)\sum_{j:~|j/L-\nu|=\tilde O(1/\sqrt L)}p_jH_b\left(\frac{n-j}{N-L}\right)\\
=\underbrace{H_b(\nu)(N-L)\vphantom{\frac1(}}_\textnormal{contribution from the zeroth-}+\underbrace{0\vphantom{\frac1(}}_\textnormal{first-}-\underbrace{\frac{L}{2(N-1)}}_\textnormal{second-order term in the Taylor expansion (\ref{eq:tl})}+\underbrace{O(1/L)\vphantom{\frac1(}}_\textnormal{total error},
\end{multline}
where we used the exact formula for the variance of the hypergeometric distribution \cite{Ska13}
\begin{equation}
\sum_{j=0}^{\min\{L,n\}}p_j(j-\nu L)^2=\frac{nL(N-L)(N-n)}{N^2(N-1)}
\end{equation}
in calculating the contribution from the second-order term in Eq.~(\ref{eq:tl}). We complete the proof for $f<1/2$ by combining (\ref{eq:37}), (\ref{eq:40}), (\ref{eq:41}), (\ref{eq:42}), (\ref{eq:exp}).

For $f=1/2$, assume without loss of generality that $n$ is odd (the case of even $n$ can be handled similarly) so that 
\begin{equation}
\e_{|\xi\rangle\in M'_n}S(\xi_A)\le2\sum_{j=0}^{(n-1)/2}p_j\ln\frac{|M_n|}{|M_{n-j}^{\bar A}|}\le2\sum_{j:~0<\nu-2j/N=\tilde O(1/\sqrt N)}p_j\ln\frac{|M_n|}{|M_{n-j}^{\bar A}|}+\frac1{\poly(N)}.
\end{equation}
We perform a similar calculation as in the case $f<1/2$ with the following modifications. Equation (\ref{eq:42}) becomes
\begin{equation}
2\sum_{j:~0<\nu-2j/N=\tilde O(1/\sqrt N)}\frac{p_j}2\ln\frac{(n-j)(N/2-n+j)}{(1-\nu)nN/2}=-\frac{\ln2}2+O(1/\sqrt L).
\end{equation}
Moreover, the contribution from the first-order term in Eq.~(\ref{eq:tl}) to Eq.~(\ref{eq:exp}) is non-zero. The contribution is
\begin{multline}
\frac N2\left(\ln\frac{1-\nu}{\nu}\right)2\sum_{j:~0<\nu-2j/N=\tilde O(1/\sqrt N)}p_j\left(\frac{n-j}{N/2}-\nu\right)\\
=2\left(\ln\frac{1-\nu}{\nu}\right){N\choose n}^{-1}\sum_{k=0.5,1.5,2.5,\ldots,\tilde O(\sqrt N)}k{N/2\choose n/2-k}{N/2\choose n/2+k},
\end{multline}
where we define $k=n/2-j$. Using Eqs.~(\ref{eq:bn}), (\ref{eq:tl}),
\begin{align}
&{N\choose n}^{-1}\sum_{k=0.5,1.5,2.5,\ldots,\tilde O(\sqrt N)}k{N/2\choose n/2-k}{N/2\choose n/2+k}\nonumber\\
&=\sum_{k=0.5,1.5,2.5,\ldots,\tilde O(\sqrt N)}k\sqrt{\frac{2Nn(N-n)}{\pi(n^2-4k^2)((N-n)^2-4k^2)}}e^{\frac N2\left(H_b\left(\nu-\frac{2k}N\right)+H_b\left(\nu+\frac{2k}N\right)-2H_b(\nu)\right)+O(1/N)}\nonumber\\
&=\sqrt{\frac{2 N}{\pi n(N-n)}}\sum_{k=0.5,1.5,2.5,\ldots,\tilde O(\sqrt N)}ke^{-\frac{2k^2}{\nu(1-\nu)N}}\left(1+O(k^4/N^3+k^2/N^2+1/N)\right)\nonumber\\
&=\sqrt{\frac{2 N}{\pi\nu(1-\nu)}}\left(\sum_{k=0.5,1.5,2.5,\ldots,\infty}\frac k{\sqrt N}e^{-\frac2{\nu(1-\nu)}\left(\frac k{\sqrt N}\right)^2}\frac1{\sqrt N}\right)+O(1/\sqrt N).
\end{align}
The last sum in the big parentheses is a Riemann sum for the integral
\begin{equation}
\int_0^\infty xe^{-\frac{2x^2}{\nu(1-\nu)}}\,\mathrm dx=\frac{\nu(1-\nu)}4.
\end{equation}
The approximation error of the Riemann sum is $O(1/N)$. We complete the proof for $f=1/2$ by combining the calculations above.
\end{proof}

\begin{remark}
A bound very similar to (\ref{eq:dpnb}) was derived in Ref.~\cite{VR17}, and the calculations in the proof of Theorem \ref{thm:dpnb} have substantial overlap with those in Refs.~\cite{VR17, Hua19NPB}. The differences are

\begin{itemize} [nosep]
\item The bound of Ref.~\cite{VR17} is for random states with definite particle number and normally distributed real coefficients in the computational basis, while our bound (\ref{eq:dpnb}) is for an arbitrary set of complex orthonormal basis states with definite particle number.
\item Both Refs.~\cite{VR17, Hua19NPB} approximate the hypergeometric distribution by a Gaussian distribution. However, neither reference attempts to upper bound the approximation error, which is necessary for ensuring the correctness of higher-order terms. By contrast, the proof of Theorem \ref{thm:dpnb} does not use the Gaussian approximation and is thus free of such error.
\end{itemize}
\end{remark}

Equation (\ref{eq:subsystem-mixed}) was derived in Ref.~\cite{VR17}, but the derivation again uses the Gaussian approximation without bounding the error. We give a proof of Eq.~(\ref{eq:subsystem-mixed}) by modifying that of Theorem \ref{thm:dpnb}.

\begin{proof} [Proof of Eq.~(\ref{eq:subsystem-mixed})]
Using the notation of the proof of Theorem \ref{thm:dpnb}, for any $0<f\le1/2$,
\begin{equation}
S\left(\e_{|\xi\rangle\in\mathcal M_n}\xi_A\right)=\sum_{j=0}^{\min\{L,n\}}p_j\ln\frac{|M_n|}{|M_{n-j}^{\bar A}|}.
\end{equation}
Then, Eq.~(\ref{eq:subsystem-mixed}) follows from this equation and Eqs.~(\ref{eq:40}), (\ref{eq:41}), (\ref{eq:42}), (\ref{eq:exp}).
\end{proof}

\section{Proof of lower bounds} \label{app:B}

In this section, we use the notation of \Cref{ss:l}.

\subsection{Spectrum}

\begin{definition} [non-degenerate spectrum] 
The spectrum of a Hamiltonian is non-degenerate if all eigenvalues are distinct.
\end{definition}

\begin{proof} [Proof of Lemma \ref{l:nd}]
The spectrum of $\sum_{j=1}^Nh_ja_j^\dag a_j$ is non-degenerate if $h_1,h_2,\ldots,h_N$ are linearly independent real numbers over the integers. Then, Lemma \ref{l:nd} can be proved in the same way as Lemma 2 in Ref.~\cite{KLW15}.
\end{proof}

\begin{definition} [non-degenerate gap] \label{d:ndg}
The spectrum $\{E_j\}$ of a Hamiltonian has non-degenerate gaps if the differences $\{E_j-E_k\}_{j\neq k}$ are all distinct, i.e., for any $j\neq k$,
\begin{equation} \label{eq:ndg}
E_j-E_k=E_{j'}-E_{k'}\implies(j=j')~\textnormal{and}~(k=k').
\end{equation}
\end{definition}

By definition, the non-degenerate gap condition implies that the spectrum is non-degenerate.

Let $K:=\{K_{jk}\}_{1\le j\le k\le N}$ be a collection of $N(N+1)/2$ continuous real random variables. Define
\begin{equation} \label{eq:dd}
H_\textnormal{d}:=\sum_{1\le j\le k\le N}K_{jk}a_j^\dag a_ja_k^\dag a_k.
\end{equation}
Lemma \ref{l:ndgd} below holds for almost any distribution of $K$.  For concreteness, let the elements of $K$ be independent real Gaussian random variables with zero mean $\overline{K_{jk}}=0$ and unit variance $\overline{K_{jk}^2}=1$.

\begin{lemma} \label{l:ndgd}
For any fixed Hermitian matrix $h'$ of order $N$ and any $\epsilon_2\in\mathbb R\setminus\{0\}$, the spectrum of $H_\textnormal{SYK2d}:=a^\dag h'a+\epsilon_2H_\textnormal{d}$ almost surely (with respect to the randomness of $K$) has non-degenerate gaps.
\end{lemma}

\begin{proof}
We begin by following the proof of Lemma 8 in Ref.~\cite{HH19}. Let $\{E_j\}_{j=1}^{2^N}$ be the eigenvalues of $H_\textnormal{SYK2d}$ and
\begin{equation}
G:=\prod_{((j\neq j')\textnormal{ or }(k\neq k'))\textnormal{ and }((j\neq k')\textnormal{ or }(k\neq j'))}(E_j+E_k-E_{j'}-E_{k'})
\end{equation}
so that $G=0$ if and only if the spectrum of $H_\textnormal{SYK2d}$ has degenerate gaps. It is easy to see that $G$ is a symmetric polynomial in $E_1,E_2,\ldots,E_{2^N}$. The fundamental theorem of symmetric polynomials implies that $G$ can be expressed as a polynomial in $F_1,F_2,\ldots$, where
\begin{equation}
F_k:=\sum_{j=1}^{2^N}E_j^k=\tr(H_\textnormal{SYK2d}^k).
\end{equation}
We see that every $F_k$ and hence $G$ are polynomials in the elements of $K$. Since the zeros of a multivariate polynomial are of measure zero unless the polynomial is identically zero, it suffices to find a particular sample of $K$ such that the spectrum of $H_\textnormal{SYK2d}$ has non-degenerate gaps. Let $K$ be a collection of real numbers that are linearly independent over the integers. Lemma \ref{l:ndgc} below implies that the spectrum of $H_\textnormal{d}$ has non-degenerate gaps. Hence, for sufficiently large $\Lambda$, the spectrum of $a^\dag h'a+\Lambda H_\textnormal{d}$ has non-degenerate gaps.
\end{proof}

\begin{lemma} \label{l:ndgc}
Let $p,q,r,s\in\{0,1\}^{\times N}$ be binary row vectors of length $N$ and $0_N$ denote the zero matrix of size $N\times N$. Then,
\begin{equation}
p^Tp-q^Tq=r^Tr-s^Ts\neq0_N\implies(p=r)~\textnormal{and}~(q=s).
\end{equation}
\end{lemma}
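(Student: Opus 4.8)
The plan is to recover the four vectors directly from the matrix identity $M:=p^{T}p-q^{T}q=r^{T}r-s^{T}s$, exploiting that each side is a difference of two rank-one outer products. Written entrywise, the hypothesis says $M_{ij}=p_ip_j-q_iq_j=r_ir_j-s_is_j$ for all $i,j$, and because the entries are binary (so $p_i^2=p_i$, and likewise for $q,r,s$) the diagonal reads $M_{ii}=p_i-q_i=r_i-s_i\in\{-1,0,1\}$. The assumption $M\neq0_N$ rules out $p=q$, so there is a coordinate $i_0$ with $p_{i_0}\neq q_{i_0}$, i.e.\ $M_{i_0i_0}=p_{i_0}-q_{i_0}=\pm1$. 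I treat the two signs in turn.

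The key step is that indexing the matrix equation by such an $i_0$ collapses one of the two rank-one terms on each side. Consider first $M_{i_0i_0}=1$, so $p_{i_0}=1$ and $q_{i_0}=0$; matching the diagonal, the only binary solution of $r_{i_0}-s_{i_0}=1$ is $r_{i_0}=1$, $s_{i_0}=0$. Inspecting the whole row $i_0$ then gives, for every $j$,
\[
p_j=p_{i_0}p_j-q_{i_0}q_j=M_{i_0j}=r_{i_0}r_j-s_{i_0}s_j=r_j,
\]
so $p=r$. In the opposite case $M_{i_0i_0}=-1$ one finds symmetrically $r_{i_0}=0$, $s_{i_0}=1$, and the same row computation yields $-q_j=M_{i_0j}=-s_j$ for all $j$, i.e.\ $q=s$.

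Finally I substitute the recovered equality back into the identity. If $p=r$, then $p^{T}p-q^{T}q=r^{T}r-s^{T}s$ reduces to $q^{T}q=s^{T}s$, and comparing diagonal entries gives $q_i=q_i^2=(q^{T}q)_{ii}=(s^{T}s)_{ii}=s_i$ for every $i$, hence $q=s$; the case $q=s$ is symmetric and yields $p^{T}p=r^{T}r$, whence $p=r$. Either way both conclusions follow. There is no real obstacle to surmount here—the one thing that must be seen clearly is that choosing a coordinate where $p$ and $q$ disagree makes one of the two outer products vanish along that row, so a single row of $M$ already encodes the equality of an entire vector; everything after that is immediate.
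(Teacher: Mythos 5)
Your proof is correct and follows essentially the same route as the paper's: locate a coordinate where $p$ and $q$ differ, use the diagonal to force the corresponding entries of $r,s$, read off an entire vector from that row of the matrix identity, and back-substitute to get the other equality. The only difference is cosmetic --- the paper disposes of the sign case via a without-loss-of-generality reduction (the statement is symmetric under swapping $(p,q)\leftrightarrow(q,s)$-style relabelings), whereas you treat both signs explicitly.
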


\begin{proof}
Let $p_1,q_1,r_1,s_1$ be the first elements of $p,q,r,s$, respectively. Since $p\neq q$, they differ by at least one element. Assume without loss of generality that $p_1=1$ and $q_1=0$. Then,
\begin{equation}
r_1-s_1=1\implies(r_1=1)~\textnormal{and}~(s_1=0)\implies p=r\implies q=s. 
\end{equation}
\end{proof}

Similar to Lemma \ref{l:ndgd}, we have

\begin{lemma} \label{l:ndg}
For any fixed Hermitian matrix $h'$ of order $N$ and any $\epsilon_2\in\mathbb R\setminus\{0\}$, the spectrum of $a^\dag h'a+\epsilon_2H_\textnormal{SYK4}$ almost surely (with respect to the randomness of $J$) has non-degenerate gaps.
\end{lemma}

\subsection{Gaussian states}

Define Majorana operators
\begin{equation} \label{eq:l}
    \chi_{2j-1}:=a_j+a_j^\dag,\quad\chi_{2j}:=i(a_j-a_j^\dag)
\end{equation}
with anticommutation relations $\{\chi_j,\chi_k\}=2\delta_{jk}$. Let
\begin{equation} \label{eq:etachi}
\eta:=(a_1,a_2,\ldots,a_N,a_1^\dag,a_2^\dag,\ldots,a_N^\dag)^T,\quad\chi:=(\chi_1,\chi_2,\ldots,\chi_{2N})^T
\end{equation}
be column vectors of $2N$ operators. The linear transformation (\ref{eq:l}) can be written as $\chi=Y\eta$, where $Y/\sqrt2$ is a unitary matrix of order $2N$.

For a density operator $\rho$, let $\langle B\rangle:=\tr(\rho B)$ denote the expectation value of an operator $B$. Let $\mathbf M$ be the covariance matrix in the Majorana representation with its elements given by
\begin{equation} \label{eq:covm}
    \mathbf M_{jk}=i\langle[\chi_j,\chi_k]\rangle/2.
\end{equation}
It is easy to see that $\mathbf M$ is a real antisymmetric matrix of order $2N$. Let $I_{N'}$ be the identity matrix of order $N'\in\mathbb N$. The covariance matrix can be written as
\begin{equation} \label{eq:72}
\mathbf M=i\langle\chi\chi^\dag\rangle-iI_{2N}=iY(\langle\eta\eta^\dag\rangle-I_{2N}/2)Y^\dag.
\end{equation}

If $\rho$ has a definite fermion number, then $\langle a_ja_k\rangle=\langle a_j^\dag a_k^\dag\rangle=0$ for any $j,k\in[N]$. Thus,
\begin{equation}
\langle\eta\eta^\dag\rangle-I_{2N}/2=\begin{pmatrix}\mathbf C&0\\0&-\mathbf C^*\end{pmatrix}
\end{equation}
is block diagonal, where (recall that $a=(a_1,a_2,\ldots,a_N)^T$ is a column vector of $N$ operators)
\begin{equation} \label{eq:def}
\mathbf C=\langle aa^\dag\rangle-I_N/2.
\end{equation}
is a Hermitian matrix of order $N$. The fermion number is
\begin{equation} \label{eq:pnum}
\sum_{j=1}^N\langle a_j^\dag a_j\rangle=N/2-\tr\mathbf C.
\end{equation}

\begin{definition} [fermionic Gaussian state]
A fermionic state is Gaussian if its density operator can be written as
\begin{equation}
\rho=e^{i\chi^TW\chi/2}/\tr(e^{i\chi^TW\chi/2})
\end{equation}
for some real antisymmetric matrix $W$ of order $2N$. Note that the elements of $W$ are allowed to be infinite.
\end{definition}

Since the covariance matrix $\mathbf M$ is antisymmetric, its eigenvalues $\{\pm i\lambda_1,\pm i\lambda_2,\ldots,\pm i\lambda_N\}$ are purely imaginary and come in conjugate pairs. A Gaussian state $\rho$ is pure if and only if $\lambda_j=\pm1$ for all $j$ \cite{Bra05}. If $\rho$ has a definite fermion number, $\rho$ is pure if and only if all eigenvalues of $\mathbf C$ are $\pm1/2$. Then, Eq.~(\ref{eq:pnum}) implies that the fermion number of $\rho$ is the number of negative eigenvalues of $\mathbf C$.

The overlap between two Gaussian states $\rho_1,\rho_2$ with covariance matrices $\mathbf M_1,\mathbf M_2$ is \cite{BG17}
\begin{equation}
\tr(\rho_1\rho_2)=|\pf(\mathbf M_1+\mathbf M_2)|/2^N,
\end{equation}
where $\pf$ denotes the Pfaffian of a matrix. If both $\rho_1$ and $\rho_2$ have definite fermion numbers, define $\mathbf C_1$ and $\mathbf C_2$ as in Eq.~(\ref{eq:def}) for $\rho_1$ and $\rho_2$, respectively. Since $\mathbf M_1+\mathbf M_2$ is antisymmetric and $Y/\sqrt2$ is unitary,
\begin{equation} \label{eq:overlap}
\tr(\rho_1\rho_2)=\frac{\sqrt{\det(\mathbf M_1+\mathbf M_2)}}{2^N}=\frac1{2^N}\sqrt{(2i)^{2N}\det\begin{pmatrix}\mathbf C_1+\mathbf C_2&0\\0&-\mathbf C_1^*-\mathbf C_2^*\end{pmatrix}}=|\det(\mathbf C_1+\mathbf C_2)|.
\end{equation}

\begin{definition} [uniformly random pure Gaussian state with definite fermion number] \label{def:haar}
A pure Gaussian state $\rho$ with fermion number $n$ is uniformly random if its correlation matrix (\ref{eq:def}) is given by
\begin{equation} \label{eq:haar}
\mathbf C=\frac12U^\dag\begin{pmatrix}-I_n&0\\0&I_{N-n}\end{pmatrix}U,
\end{equation}
where $U$ is a random unitary matrix uniformly distributed with respect to the Haar measure.
\end{definition}

It is not difficult to see that the eigenstates of $H_\textnormal{SYK2}$ are Gaussian states with definite fermion number. Furthermore,

\begin{lemma} [\cite{BHK+22}] \label{l:haar}
An eigenstate of $H_\textnormal{SYK2}$ with fermion number $n$ is a uniformly random Gaussian state in the sense of Definition \ref{def:haar}.
\end{lemma}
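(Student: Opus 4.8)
The plan is to diagonalize the single-particle Hamiltonian, read off the correlation matrix (\ref{eq:def}) of each number eigenstate explicitly, and then invoke the unitary invariance of the GUE to conclude that the unitary appearing in that correlation matrix is Haar-distributed. First I would write $h=W^\dagger D W$ with $W\in U(N)$ and $D=\operatorname{diag}(\epsilon_1,\dots,\epsilon_N)$ real; by Lemma \ref{l:nd} the $\epsilon_j$ are almost surely distinct, so $W$ is well defined up to the phases of the eigenvectors, i.e.\ up to a left diagonal unitary factor $W\mapsto\Theta W$. Setting $b:=Wa$ defines a new family of fermionic modes in terms of which $H_\textnormal{SYK2}=a^\dagger h a=b^\dagger D b=\sum_j\epsilon_j b_j^\dagger b_j$, so the $n$-fermion eigenstates are exactly the states $\prod_{j\in S}b_j^\dagger|0\rangle$ indexed by subsets $S\subseteq[N]$ with $|S|=n$.

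Next I would compute the correlation matrix of such an eigenstate. Using $a=W^\dagger b$ together with $\langle b_l b_m^\dagger\rangle=\delta_{lm}[l\notin S]$, a short calculation gives $\langle aa^\dagger\rangle=W^\dagger P_{\bar S}W$, where $P_{\bar S}$ is the diagonal projector onto the complement of $S$, and hence $\mathbf{C}=\tfrac12 W^\dagger\Pi_S W$ with $\Pi_S$ the diagonal matrix carrying $-1$ on $S$ and $+1$ off $S$. Choosing a permutation $\sigma$ that moves the indices of $S$ to the first $n$ coordinates turns this into $\mathbf{C}=\tfrac12 U^\dagger\operatorname{diag}(-I_n,I_{N-n})U$ with $U:=\sigma W$, which is precisely the form required by Definition \ref{def:haar}. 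I would also note that $\mathbf{C}$ is unchanged under the rephasing $W\mapsto\Theta W$ (since $\Theta$ commutes with the diagonal $\Pi_S$ and $\Theta^\dagger\Theta=I$), so the phase ambiguity in the eigenvectors is harmless for this computation.

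Finally I would argue that $U$ is Haar-distributed. Because the GUE density is invariant under the conjugation $h\mapsto VhV^\dagger$ for every fixed $V\in U(N)$, the eigenvalues $D$ and the eigenvector matrix $W$ are independent, with $W$ distributed according to the Haar measure on $U(N)$ modulo column/row phases (which, by the previous paragraph, do not affect $\mathbf{C}$). In particular $W$ is independent of which subset $S$ labels the chosen eigenstate, so conditioning on any selection rule for $S$ leaves $W$ Haar. Left-multiplying by the fixed permutation $\sigma$ preserves the Haar measure, so $U=\sigma W$ is Haar-distributed and $\mathbf{C}$ has exactly the distribution of Definition \ref{def:haar}.

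The step I expect to be the main obstacle is this last one: one must justify carefully that the eigenvector matrix of a GUE matrix is genuinely Haar-distributed and, crucially, that this remains true after the eigenstate has been singled out (for instance by its rank in the energy ordering, which depends on $D$). The clean resolution is to use the independence of eigenvalues and eigenvectors for the GUE, combined with the observation that $\mathbf{C}$ depends on $W$ only through the conjugation $W^\dagger\Pi_S W$; this makes both the undetermined eigenvector phases and any $D$-dependent choice of $S$ drop out of the final distribution of $\mathbf{C}$.
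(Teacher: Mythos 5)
Your proof is correct, but it takes a genuinely more constructive route than the paper's. The paper never diagonalizes $h$ and never invokes eigenvalue--eigenvector independence: it takes an arbitrary eigenstate $|\xi_1\rangle$ of $a^\dag ha$, observes that the substitution $a\to Ua$ turns it into the corresponding eigenstate $|\xi_2\rangle$ of $a^\dag U^\dag hUa$ (same eigenvalue, same fermion number) with $\mathbf C_1=U\mathbf C_2U^\dag$, and then uses the fact that $h$ and $U^\dag hU$ are identically distributed to conclude that the law of $\mathbf C_1$ is invariant under conjugation by every fixed unitary; since that law is supported on the unitary orbit of $\frac12\operatorname{diag}(-I_n,I_{N-n})$, it must be the unique invariant (Haar-induced) measure of Definition \ref{def:haar}. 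The two subtleties you flag as the main obstacle --- eigenvector phases and the $D$-dependent selection of the eigenstate --- are thereby absorbed into the equivariance of the map $h\mapsto\mathbf C$ and never need explicit treatment, which is what makes the paper's proof only a few lines long. Your route instead pays for an explicit description (Slater determinants, $\mathbf C=\tfrac12W^\dag\Pi_SW$, $U=\sigma W$) by importing the standard random-matrix theorem that the GUE eigenvector matrix is Haar modulo phases and independent of the spectrum --- a theorem whose own proof is essentially the invariance argument the paper runs directly on $\mathbf C$. What you gain is a derivation, rather than an assertion, of the fact that the eigenstates are Gaussian with definite fermion number, plus the explicit identity of the Haar unitary. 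One point to tighten: distinctness of the single-particle energies $\epsilon_j$ alone does not ensure that every $n$-fermion eigenstate is a Slater determinant, since the many-body spectrum could still be degenerate (e.g.\ if $\epsilon_1+\epsilon_4=\epsilon_2+\epsilon_3$, non-Gaussian superpositions would also be eigenstates); you should cite Lemma \ref{l:nd} for what it actually provides --- almost-sure non-degeneracy of the many-body spectrum --- rather than only for distinctness of the $\epsilon_j$.
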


\begin{proof}
Suppose that $|\xi_1\rangle$ is an eigenstate of $a^\dag ha$. For an arbitrary unitary matrix $V$ of order $N$, let $|\xi_2\rangle$ be the state obtained from $|\xi_1\rangle$ with the substitution $a\to Va$ so that $|\xi_2\rangle$ is an eigenstate of $a^\dag V^\dag hVa$ with the same eigenvalue. Let $\mathbf C_1$ and $\mathbf C_2$ be the correlation matrices (\ref{eq:def}) of $|\xi_1\rangle$ and $|\xi_2\rangle$, respectively. Then,
\begin{equation}
\mathbf C_1=V\mathbf C_2V^\dag.
\end{equation}
Since $h$ is chosen from the Gaussian unitary ensemble, $h$ and $V^\dag hV$ have the same probability density. Therefore, $\mathbf C_1$ and $V^\dag \mathbf C_1V$ have the same probability density for any $V$.
\end{proof}

\subsection{Effective dimension}

Let
\begin{equation} \label{eq:ldef}
\mu_\alpha=\e_h\sum_{j=1}^{2^N}\big|\langle\phi|j\rangle_h\big|^{2\alpha}=\e_h\sum_{j:~Q|j\rangle_h=n|j\rangle_h}\big|\langle\phi|j\rangle_h\big|^{2\alpha}=\e_he^{(1-\alpha)S_\alpha(\bar\phi)},\quad\alpha>0.
\end{equation}
Normalization implies that $\mu_1=1$.

\begin{lemma}
For $n\le N/2$,
\begin{equation} \label{eq:la}
\mu_\alpha={N\choose n}\prod_{j=1}^n\frac{\Gamma(j+\alpha)\Gamma(N-n+j)}{\Gamma(j)\Gamma(N-n+j+\alpha)},
\end{equation}
where $\Gamma$ is the gamma function. In particular,
\begin{equation} \label{eq:l2}
\mu_2={N\choose n}^{-1}\frac{(n+1)(N-n+1)}{N+1}.
\end{equation}
\end{lemma}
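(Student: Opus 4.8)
The plan is to reduce $L_\alpha$ to a single matrix integral over the unitary group and evaluate that integral with a Selberg integral. First I would use linearity of expectation together with the symmetry in Lemma~\ref{l:haar}. Since $H_\textnormal{SYK2}$ preserves fermion number and is generically non-degenerate, exactly $\binom{N}{n}$ of its eigenstates $|j\rangle_h$ have fermion number $n$, and by Lemma~\ref{l:haar} each such eigenstate is a uniformly random Gaussian state, i.e. its correlation matrix (\ref{eq:def}) has a $\mathrm U(N)$-invariant distribution. Hence the marginal of any fixed number-$n$ eigenstate is the same, every term $\e_h|\langle\phi_n|j\rangle_h|^{2\alpha}$ is equal, and
\begin{equation}
L_\alpha=\binom{N}{n}\,\e_h\big|\langle\phi_n|j_0\rangle_h\big|^{2\alpha}
\end{equation}
for a single number-$n$ eigenstate $|j_0\rangle_h$. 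It remains to compute this single-term moment.

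Second, I would turn the overlap into the determinant of a truncated Haar unitary. The product state $|\phi_n\rangle$ has correlation matrix (\ref{eq:def}) equal to $\mathbf C_{\phi_n}=\tfrac12\,\mathrm{diag}(-I_n,I_{N-n})$, while by Definition~\ref{def:haar} the random eigenstate has $\mathbf C=\tfrac12\,U^\dag\,\mathrm{diag}(-I_n,I_{N-n})\,U$ with $U$ Haar-distributed on $\mathrm U(N)$. Either from the Slater-determinant form of an overlap of Gaussian states with definite particle number, or directly from the overlap formula (\ref{eq:overlap}), I would show
\begin{equation}
\big|\langle\phi_n|j_0\rangle_h\big|^2=|\det M|^2=\det(M^\dag M),
\end{equation}
where $M$ is the top-left $n\times n$ block of $U$. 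Thus $L_\alpha=\binom{N}{n}\,\e\,\det(M^\dag M)^{\alpha}$.

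Third, I would invoke the distribution of the corner of a Haar unitary. For $n\le N/2$ the density of the $n\times n$ block $M$ is proportional to $\det(I_n-M^\dag M)^{N-2n}$ on $\{M^\dag M<I_n\}$, so the eigenvalues $x_1,\dots,x_n$ of $M^\dag M$ form a Jacobi ($\beta=2$) ensemble with density $\propto\prod_{i<j}(x_i-x_j)^2\prod_i(1-x_i)^{N-2n}$ on $[0,1]^n$ (there is no extra power of $x_i$ precisely because $M$ is square). Then
\begin{equation}
\e\,\det(M^\dag M)^\alpha=\frac{\displaystyle\int_{[0,1]^n}\prod_{i<j}(x_i-x_j)^2\prod_i x_i^{\alpha}(1-x_i)^{N-2n}\,\mathrm dx}{\displaystyle\int_{[0,1]^n}\prod_{i<j}(x_i-x_j)^2\prod_i(1-x_i)^{N-2n}\,\mathrm dx}
\end{equation}
is a ratio of Selberg integrals at $\gamma=1$. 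Evaluating both with the Selberg formula, the common $\Gamma$-factors cancel and a shift $j\mapsto j+1$ gives exactly $\prod_{j=1}^n\Gamma(j+\alpha)\Gamma(N-n+j)/[\Gamma(j)\Gamma(N-n+j+\alpha)]$, which is (\ref{eq:la}). Equation (\ref{eq:l2}) then follows by setting $\alpha=2$, using $\Gamma(j+2)/\Gamma(j)=j(j+1)$ and $\Gamma(N-n+j)/\Gamma(N-n+j+2)=1/[(N-n+j)(N-n+j+1)]$, and telescoping the resulting products against $\binom{N}{n}$.

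I expect the main obstacle to be step three: justifying the truncated-unitary density with the correct exponent $N-2n$ (equivalently, the Jacobi eigenvalue density), matching the Selberg parameters, and taking care that the square case contributes no $x_i$-power and that $n\le N/2$ is exactly the integrability condition. A self-contained alternative that sidesteps citing the truncation density is to realize the Haar orbitals as $\Psi=G(G^\dag G)^{-1/2}$ for an $N\times n$ complex Gaussian $G$; then $\det(M^\dag M)=\det(W_1)/\det(W_1+W_2)$ with $W_1,W_2$ independent complex Wishart matrices, whose matrix-variate Beta determinant has moments given by ratios of complex multivariate Gamma functions that reproduce the same product. I would keep the Selberg route as the main line and record the Wishart computation as an independent check.
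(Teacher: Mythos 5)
Your proposal is correct and follows essentially the same route as the paper: reduce $L_\alpha$ by symmetry (Lemma \ref{l:haar}) and the Gaussian-state overlap formula (\ref{eq:overlap}) to a single Haar-unitary average, identify the resulting eigenvalue density as the Jacobi unitary ensemble $\propto\prod_{i<j}(x_i-x_j)^2\prod_i(1-x_i)^{N-2n}$ on $[0,1]^n$, and evaluate its $\alpha$-th moment with the Selberg integral (the paper cites this moment formula from the Jacobi-ensemble literature). The only difference is the intermediate random-matrix fact invoked: you use the {\.Z}yczkowski--Sommers density of the square $n\times n$ corner $M$ of a Haar unitary together with $|\langle\phi_n|j_0\rangle_h|^2=\det(M^\dag M)$, whereas the paper writes the overlap as $|\det(I_N-P-U^\dag PU)|$ and cites the spectral distribution of a sum of two Haar-rotated rank-$n$ projectors; these are equivalent statements (the $x_j$ are in both cases the squared cosines of the principal angles between the two $n$-dimensional subspaces), so both yield the identical Jacobi ensemble and the identical Selberg evaluation.
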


\begin{proof}
It is easy to see that $|\phi\rangle$ is a Gaussian state with correlation matrix
\begin{equation}
\mathbf C_0=\frac12\begin{pmatrix}-I_n&0\\0&I_{N-n}\end{pmatrix}.
\end{equation}
Using Lemma \ref{l:haar} and Eq.~(\ref{eq:overlap}),
\begin{equation} \label{eq:lalpha}
\mu_\alpha={N\choose n}\e_U|\det(\mathbf C_0+U^\dag\mathbf C_0U)|^\alpha={N\choose n}\e_U|\det(I_N-P-U^\dag PU)|^\alpha,
\end{equation}
where $P:=I_N/2-\mathbf C_0$ is a Hermitian projector of rank $n$. Since $U$ is a Haar-random unitary, Theorem 1.1 in Ref.~\cite{Kar12} states that $P+U^\dag PU$ almost surely has $N-2n$ eigenvalues $0$. The other $2n$ eigenvalues are given by $\{1\pm\sqrt{x_j}\}_{j=1}^n$ with the joint probability density function
\begin{equation}
\mathbbm p(x_1,x_2,\ldots,x_n)\propto{\prod_{1\le j<k\le n}(x_j-x_k)^2}\prod_{j=1}^n(1-x_j)^{N-2n},\quad0\le x_j\le1.
\end{equation}
Hence, the eigenvalues of $I_N-P-U^\dag PU$ are $\pm\sqrt x_1,\ldots,\pm\sqrt x_n,1,\ldots,1$ so that
\begin{equation} \label{eq:88}
\e_U|\det(I_N-P-U^\dag PU)|^\alpha=\e_{(x_1,x_2,\ldots,x_n)\sim\mathbbm p}\prod_{j=1}^nx_j^\alpha.
\end{equation}
Since $\mathbbm p(x_1,x_2,\ldots,x_n)$ is the probability density function of the Jacobi unitary ensemble, Eq.~(2.15) of Ref.~\cite{Rou07} states that
\begin{equation} \label{eq:this}
\e_{(x_1,x_2,\ldots,x_n)\sim\mathbbm p}\prod_{j=1}^nx_j^\alpha=\prod_{j=1}^n\frac{\Gamma(j+\alpha)\Gamma(N-n+j)}{\Gamma(j)\Gamma(N-n+j+\alpha)},
\end{equation}
which can be easily derived from the Selberg integral \cite{FW08}. Equation (\ref{eq:la}) follows from Eqs.~(\ref{eq:lalpha}), (\ref{eq:88}), (\ref{eq:this}).
\end{proof}

\begin{lemma} \label{l:eff}
For any constant $\alpha>0$,
\begin{equation} \label{eq:ent}
\Pr_h\big(S_\alpha(\bar\phi)\ge H_b(\nu)N-O(\ln N)\big)\ge1-1/\poly(N).
\end{equation}
In particular,
\begin{equation}
\Pr_h\big(D^\textnormal{eff}_h\ge e^{H_b(\nu)N}/\poly(N)\big)\ge1-1/\poly(N).
\end{equation}
\end{lemma}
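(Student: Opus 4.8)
The plan is to obtain the high-probability bound from a single moment estimate via Markov's inequality, exploiting that the quantity $L_\alpha$ defined in (\ref{eq:ldef}) is exactly the expectation over $h$ of the nonnegative random variable $e^{(1-\alpha)S_\alpha(\bar\phi_n)}=\sum_j|\langle\phi_n|j\rangle_h|^{2\alpha}$. The crucial ingredient is the sharp asymptotic
\[
\ln L_\alpha=(1-\alpha)H_b(\nu)N+O(\ln N),
\]
valid for every fixed $\alpha>0$ at the fixed filling $\nu$ considered here. Granting this, I would first treat $\alpha>1$. Fix a constant $c>0$. Since $1-\alpha<0$, the event $\{S_\alpha(\bar\phi_n)<H_b(\nu)N-c\ln N\}$ coincides with $\{e^{(1-\alpha)S_\alpha(\bar\phi_n)}>e^{(1-\alpha)(H_b(\nu)N-c\ln N)}\}$, so Markov's inequality gives
\[
\Pr_h\big(S_\alpha(\bar\phi_n)<H_b(\nu)N-c\ln N\big)\le L_\alpha\,e^{(\alpha-1)(H_b(\nu)N-c\ln N)}\le\poly(N)\,N^{-(\alpha-1)c},
\]
where the last step inserts the asymptotic for $L_\alpha$. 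For fixed $\alpha>1$, choosing $c$ large enough makes the right-hand side $1/\poly(N)$, which is the desired bound (\ref{eq:ent}).

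The regime $0<\alpha\le 1$ cannot be handled by Markov directly, because there a small $S_\alpha$ corresponds to a \emph{small} value of $e^{(1-\alpha)S_\alpha}$, a lower-tail event that the first-moment method does not control. Instead I would invoke the monotonicity of the R\'enyi entropy in its index: since $S_\alpha(\bar\phi_n)\ge S_2(\bar\phi_n)$ for all $\alpha\le 2$, the bound for $\alpha=2$ (covered by the previous paragraph, and for which $L_2$ is available in closed form from (\ref{eq:l2})) immediately implies the bound for every $\alpha\le 1$. Combining the two cases yields (\ref{eq:ent}) for all constant $\alpha>0$. The ``in particular'' statement is then the special case $\alpha=2$: since $D^\textnormal{eff}_h=e^{S_2(\bar\phi_n)}$, the inequality $S_2(\bar\phi_n)\ge H_b(\nu)N-O(\ln N)$ exponentiates to $D^\textnormal{eff}_h\ge e^{H_b(\nu)N}/\poly(N)$.

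I expect the main obstacle to be establishing the asymptotic for $L_\alpha$, i.e.\ controlling the Gamma-function product in (\ref{eq:la}). For $\alpha=2$ this is immediate from the closed form (\ref{eq:l2}) together with Stirling's estimate (\ref{eq:bn}) for $\binom{N}{n}$, so the genuinely new work arises only for general (in particular large) $\alpha>2$, the cases not reachable by monotonicity from $\alpha=2$. There the plan is to write $\ln L_\alpha=\ln\binom{N}{n}+\sum_{j=1}^n[\ln\Gamma(j+\alpha)-\ln\Gamma(j)]-\sum_{j=1}^n[\ln\Gamma(N-n+j+\alpha)-\ln\Gamma(N-n+j)]$ and use the ratio asymptotic $\ln\Gamma(x+\alpha)-\ln\Gamma(x)=\alpha\ln x+O(\alpha/x)$. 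The leading terms then telescope into $\alpha[\ln(n!)-\ln(N!/(N-n)!)]=-\alpha\ln\binom{N}{n}=-\alpha H_b(\nu)N$, while the error terms sum to $O(\ln N)$: the small-$j$ terms of the first sum contribute $O(\ln N)$ through the correction $\psi(j)-\ln j$, and the terms of the second sum contribute only $O(1)$ because their arguments are $\Theta(N)$ for $\nu\le 1/2$. Care is required to confirm that all subleading contributions are genuinely $O(\ln N)$ with constants independent of $N$; since $\alpha$ is a fixed constant this is routine.
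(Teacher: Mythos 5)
Your proposal is correct and is essentially the paper's own proof: both arguments rest on Markov's inequality applied to $L_\alpha=\e_he^{(1-\alpha)S_\alpha(\bar\phi_n)}$, monotonicity of the R\'enyi entropy in $\alpha$, and an evaluation of the formula (\ref{eq:la}) via Stirling's estimate (\ref{eq:bn}). The only difference is bookkeeping: the paper uses monotonicity to reduce \emph{every} constant $\alpha>0$ to an integer $\alpha\ge2$, for which the Gamma-function product in (\ref{eq:la}) collapses to the binomial ratio $\prod_{j=1}^{\alpha-1}\binom{n+j}{j}\big/\binom{N+j}{n}$, so the separate Gamma-ratio asymptotics you develop for non-integer $\alpha>2$ (the part you flag as the main obstacle) can be avoided entirely.
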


\begin{proof}
Assume without loss of generality that $n\le N/2$. Using the monotonicity of the R\'enyi entropy, it suffices to prove (\ref{eq:ent}) for any integer $\alpha\ge2$. For such $\alpha$, Eq.~(\ref{eq:la}) implies that
\begin{equation} 
\mu_\alpha=\prod_{j=1}^{\alpha-1}\frac{{n+j\choose j}}{{N+j\choose n}}\le e^{-(\alpha-1)H_b(\nu)N}\poly(n),
\end{equation}
where we used Eq.~(\ref{eq:bn}). Finally, (\ref{eq:ent}) follows from Eq.~(\ref{eq:ldef}) and Markov's inequality.
\end{proof}

\subsection{Equilibration}

The time-averaged expectation value of a (not necessarily Hermitian) linear operator $B$ is
\begin{equation} \label{eq:bb}
\bar B:=\lim_{\tau\to\infty}\frac1\tau\int_0^\tau\tr\big(\phi(t)B\big)\,\mathrm dt=\tr(\bar\phi B),
\end{equation}
and the fluctuation is
\begin{equation} \label{eq:fl}
\Delta B:=\lim_{\tau\to\infty}\frac1\tau\int_0^\tau\big|\tr\big(\phi(t)B\big)-\bar B\big|^2\,\mathrm dt.
\end{equation}

\begin{lemma} [\cite{Tas98, Rei08, LPSW09, Sho11}] \label{l:op}
If the spectrum of $H$ has non-degenerate gaps, then
\begin{equation}
\Delta B\le\|B\|^2/D^\textnormal{eff}_h.
\end{equation}
\end{lemma}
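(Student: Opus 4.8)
The plan is to diagonalize the dynamics in the eigenbasis $\{|j\rangle_h\}$ of $H$ and reduce $\Delta B$ to a manifestly non-negative off-diagonal sum that the non-degenerate gap hypothesis renders incoherent. Writing $E_j$ for the eigenvalues, $c_j:=\langle\phi_n|j\rangle_h$, $p_j:=|c_j|^2$, and $B_{jk}:=\langle j|_hB|k\rangle_h$, I would first expand
\[
\tr\big(\phi_n(t)B\big)=\sum_{j,k}\overline{c_j}\,c_k\,e^{i(E_j-E_k)t}\,B_{jk}.
\]
Since the non-degenerate gap condition implies a non-degenerate spectrum, the infinite-time average of $e^{i(E_j-E_k)t}$ is $\delta_{jk}$, so $\bar B=\sum_jp_jB_{jj}=\tr(\bar\phi_nB)$, consistent with (\ref{eq:dep}); the fluctuating part is therefore $\sum_{j\neq k}\overline{c_j}c_ke^{i(E_j-E_k)t}B_{jk}$.

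Next I would square this and take the time average. Expanding the modulus squared produces a double sum over off-diagonal pairs $(j\neq k)$ and $(j'\neq k')$ carrying the phase $e^{i(E_j-E_k-E_{j'}+E_{k'})t}$. Here the full strength of Definition \ref{d:ndg} enters: the time average of the phase equals $1$ exactly when $E_j-E_k=E_{j'}-E_{k'}$, and for off-diagonal index pairs the non-degenerate gap condition forces $(j,k)=(j',k')$, while all other contributions average to zero. This collapses the double sum to
\[
\Delta B=\sum_{j\neq k}p_jp_k\,|B_{jk}|^2\le\sum_{j,k}p_jp_k\,|B_{jk}|^2,
\]
the inequality following from discarding the non-negative diagonal. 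Correctly invoking the non-degenerate gap condition to kill every cross term is the conceptual crux, although the bookkeeping is short.

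Finally I would convert the right-hand side into the stated bound. Using $p_jp_k\le\tfrac12(p_j^2+p_k^2)$ together with symmetry in $(j,k)$,
\[
\sum_{j,k}p_jp_k\,|B_{jk}|^2\le\frac12\sum_jp_j^2\sum_k|B_{jk}|^2+\frac12\sum_kp_k^2\sum_j|B_{jk}|^2.
\]
Since $\sum_k|B_{jk}|^2=(BB^\dag)_{jj}\le\|B\|^2$ and $\sum_j|B_{jk}|^2=(B^\dag B)_{kk}\le\|B\|^2$ (valid even for non-Hermitian $B$, as $\|BB^\dag\|=\|B^\dag B\|=\|B\|^2$), each term is at most $\tfrac12\|B\|^2\sum_jp_j^2$, giving $\Delta B\le\|B\|^2\sum_jp_j^2=\|B\|^2\tr(\bar\phi_n^2)=\|B\|^2/D^\textnormal{eff}_h$ by the definition of the effective dimension. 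The one remaining technical point is the standard replacement of the finite-$\tau$ average $\e_{t\in[0,\tau]}$ by the infinite-time average flagged before Theorem \ref{thm:half}: because the surviving off-diagonal phases oscillate at the nonzero frequencies $E_j-E_k$, their finite-time averages decay and the resulting error is negligible for $\tau$ sufficiently large. I expect no serious obstacle beyond carefully tracking the gap condition in the phase cancellation.
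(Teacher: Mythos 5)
Your proposal is correct and takes essentially the same route as the paper's proof: expand $\tr(\phi_n(t)B)$ in the energy eigenbasis, invoke the non-degenerate gap condition to collapse the time-averaged double sum to $\sum_{j\neq k}|c_j|^2|c_k|^2|B_{jk}|^2$, and bound this by $\|B\|^2\sum_j|c_j|^4=\|B\|^2/D^\textnormal{eff}_h$. The only immaterial difference is that you use $p_jp_k\le\tfrac12(p_j^2+p_k^2)$ where the paper applies Cauchy--Schwarz before bounding $(BB^\dag)_{jj}$ and $(B^\dag B)_{kk}$ by $\|B\|^2$; both yield the identical bound.
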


\begin{proof}
We include the proof of this lemma for completeness. Let $B_{jk}:=\langle j|_hB|k\rangle_h$ be the matrix element in the energy eigenbasis and $c_j:=\langle\phi|j\rangle_h$. Let $E_j$ be the eigenvalue corresponding to the eigenvector $|j\rangle_h$. Writing out the matrix elements,
\begin{align}
&\Delta B=\lim_{\tau\to\infty}\frac1\tau\int_0^\tau\big|\langle\phi|e^{iHt}Be^{-iHt}|\phi\rangle-\bar B\big|^2\,\mathrm dt=\lim_{\tau\to\infty}\frac1\tau\int_0^\tau\left|\sum_{j\neq k}c_jc_k^*B_{jk}e^{i(E_j-E_k)t}\right|^2\,\mathrm dt\nonumber\\
&=\sum_{j\neq k,~j'\neq k'}c_jc_k^*c_{j'}^*c_{k'}B_{jk}(B^\dag)_{k'j'}\lim_{\tau\to\infty}\frac1\tau\int_0^\tau e^{i(E_j-E_k-E_{j'}+E_{k'})t}\,\mathrm dt=\sum_{j\neq k}|c_j|^2|c_k|^2B_{jk}(B^\dag)_{kj}\nonumber\\
&\le\sqrt{\sum_{j\neq k}|c_j|^4B_{jk}(B^\dag)_{kj}\times\sum_{j\neq k}|c_k|^4(B^\dag)_{kj}B_{jk}}\le\sqrt{\sum_j|c_j|^4(BB^\dag)_{jj}\times\sum_k|c_k|^4(B^\dag B)_{kk}}\nonumber\\
&\le\|B\|^2\sum_j|c_j|^4=\|B\|^2/D^\textnormal{eff}_h,
\end{align}
where we used the non-degenerate gap condition and the fact that $(BB^\dag)_{jj}\le\|B\|^2$.
\end{proof}

Let
\begin{equation}
\|B\|_1:=\tr\sqrt{B^\dag B},\quad\|B\|_2:=\sqrt{\tr(B^\dag B)}
\end{equation}
denote the trace and Frobenius norms, respectively. It is well known that
\begin{equation} \label{eq:12norm}
\|B\|_2\le\|B\|_1\le\sqrt{\rank B}\|B\|_2.
\end{equation}

\begin{lemma} \label{l:eq}
Let $A'_1,A'_2,\ldots,A'_k$ be $k$ arbitrary subsystems, each of which has $L$ fermionic modes. For $\nu\le1/2$ and $L\le N/2-c\sqrt{N\ln N}$ with a sufficiently large constant $c$,
\begin{multline}
\Pr_h\left(\Pr_J\left(\lim_{\tau\to\infty}\frac1\tau\int_0^\tau\frac1k\sum_{j=1}^k\|\phi(t)_{A'_j}-\bar\phi_{A'_j}\|_1\,\mathrm dt\le\frac{d'}{\sqrt{D^\textnormal{eff}_h}}+e^{-\frac{\Omega(N/2-L)^2}L}\right)=1\right)\\
\ge1-1/\poly(N),
\end{multline}
where ($c'>0$ is an arbitrarily small constant)
\begin{equation} \label{eq:d}
d'=O(e^{H_b(\nu)L+c'(N/2-L)\ln(1/\nu-1)}).
\end{equation}
\end{lemma}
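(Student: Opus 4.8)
The plan is to split the trace distance for each subsystem $A'_j$ into a ``bulk'' contribution, supported on subsystem fermion numbers near the typical value $\nu L$, and a ``tail'' contribution, and to control these by equilibration and by concentration of the subsystem particle number, respectively. The structural fact I would exploit first is that, because $\phi_n(t)$ and $\bar\phi_n$ carry a definite total fermion number $n$, the reduced states $\phi_n(t)_{A'_j}$ and $\bar\phi_{n,A'_j}$ are block diagonal in the number $\ell$ of fermions in $A'_j$ (coherences between sectors pair with orthogonal bath states and vanish upon tracing out $\bar A'_j$). Hence their difference $X_j:=\phi_n(t)_{A'_j}-\bar\phi_{n,A'_j}$ is block diagonal and $\|X_j\|_1$ is the sum of the blockwise trace norms. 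I would fix a window half-width $w=c'(N/2-L)$, let $\Pi$ project onto the blocks with $|\ell-\nu L|\le w$, and estimate the window dimension $d'=\sum_{|\ell-\nu L|\le w}\binom L\ell$ using Stirling's formula (\ref{eq:bn}) and the Taylor expansion (\ref{eq:tl}); since $\nu\le1/2$ the upper edge dominates, giving $d'=O(e^{H_b(\nu)L+c'(N/2-L)\ln(1/\nu-1)})$, which is exactly (\ref{eq:d}).

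For the bulk part I would use (\ref{eq:12norm}) with $\rank(\Pi X_j\Pi)\le d'$ to get $\|\Pi X_j\Pi\|_1\le\sqrt{d'}\,\|\Pi X_j\Pi\|_2$, and then expand in a unitary operator basis $\{G_a\}$ of the $d'$-dimensional window subspace, normalized so that $\|G_a\|=1$ and $\tr(G_a^\dagger G_b)=d'\delta_{ab}$; this yields $\|\Pi X_j\Pi\|_2^2=\frac1{d'}\sum_a|\tr(G_a^\dagger X_j)|^2$ because $\Pi G_a\Pi=G_a$. Each coefficient $\tr(G_a^\dagger X_j)=\tr\big((G_a^\dagger\otimes I_{\bar A'_j})(\phi_n(t)-\bar\phi_n)\big)$ is the deviation of an observable from its time average, so Lemma \ref{l:op}, which applies once Lemma \ref{l:ndg} supplies non-degenerate gaps almost surely over $J$, gives time-averaged value at most $1/D^\textnormal{eff}_h$. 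Summing over the $(d')^2$ basis operators gives a time-averaged $\|\Pi X_j\Pi\|_2^2$ at most $d'/D^\textnormal{eff}_h$, and Jensen's inequality then yields the advertised $d'/\sqrt{D^\textnormal{eff}_h}$. The point of using a basis of the window subspace rather than of the full $2^L$-dimensional subsystem is that the Frobenius bound scales with $d'$ instead of $2^L$, an exponential improvement when $\nu<1/2$.

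For the tail part, block diagonality gives a time-averaged $\|(I-\Pi)X_j(I-\Pi)\|_1$ bounded by $2\,\tr((I-\Pi)\bar\phi_{n,A'_j})$, since the time average of $\tr((I-\Pi)\phi_n(t)_{A'_j})$ equals $\tr((I-\Pi)\bar\phi_{n,A'_j})$. Writing $\bar\phi_n=\sum_j|\langle\phi_n|j\rangle_h|^2\,|j\rangle_h\langle j|_h$, this tail is a $|\langle\phi_n|j\rangle_h|^2$-weighted average of the probabilities $P_j$ that $|j\rangle_h$ has $Q_{A'_j}$ outside the window. Here I would invoke that each $|j\rangle_h$ is Gaussian (Lemma \ref{l:haar}): its reduction to $A'_j$ factorizes into independent modes, so $Q_{A'_j}$ is a sum of $L$ independent Bernoulli variables and Hoeffding's inequality gives $P_j\le 2e^{-\Omega(w^2/L)}=e^{-\Omega((N/2-L)^2/L)}$, provided the eigenstate mean $\mu_j=\langle j|_hQ_{A'_j}|j\rangle_h$ lies within $w/2$ of $\nu L$.

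The main obstacle is controlling $\mu_j$ uniformly over all $\binom Nn$ eigenstates with nonzero weight. I would observe that $\mu_j=\tr(\Pi_{A'_j}U^\dagger\Pi_nU)$ is the overlap of two subspaces (here $\Pi_n=\mathrm{diag}(I_n,0)$ and $U$ is the Haar unitary of Definition \ref{def:haar}), a $2\sqrt n$-Lipschitz function of $U$ with mean $\nu L$, so Lévy's concentration of measure on $U(N)$ gives $\Pr(|\mu_j-\nu L|>w/2)\le e^{-\Omega(w^2)}$ for each $j$. Because $w\ge c'c\sqrt{N\ln N}$ under the hypothesis $L\le N/2-c\sqrt{N\ln N}$, a sufficiently large $c$ makes this superpolynomially small, so a union bound over the $\binom Nn\le 2^N$ eigenstates (which needs only the marginal law of each $|j\rangle_h$) shows that with probability $1-1/\poly(N)$ over $h$ every $\mu_j$ lies within $w/2$ of $\nu L$; combined with the lower bound on $D^\textnormal{eff}_h$ from Lemma \ref{l:eff}, this gives the stated estimate. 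I expect the delicate points to be the Lipschitz constant for $\mu_j$ and checking that the window is simultaneously wide enough for the union bound to close and narrow enough to keep $d'$ comfortably below $\sqrt{D^\textnormal{eff}_h}$; both hinge on the margin $N/2-L\ge c\sqrt{N\ln N}$.
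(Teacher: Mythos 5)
Your proposal is correct, and it splits naturally into two halves with different relationships to the paper's proof. The bulk (window) part is essentially the paper's own argument: your unitary basis $\{G_a\}$ is, up to normalization, exactly the clock-and-shift basis $\{B_j\}$ that the paper constructs following Short's method, combined with Lemma \ref{l:op} (enabled by Lemma \ref{l:ndg}), the rank bound (\ref{eq:12norm}), and Jensen, yielding the same $d'/\sqrt{D^\textnormal{eff}_h}$. Your tail estimate, however, takes a genuinely different route. The paper never examines the number statistics of individual eigenstates: it bounds $\e_h\tr(\bar\phi_{n,A}P_A^{>l})$ by Cauchy--Schwarz against $L_2$ (the expected purity of $\bar\phi_n$, computed exactly from the Jacobi ensemble, Eq. (\ref{eq:l2})), uses the unitary invariance of the ensemble of $h$ to replace the fixed subsystem by an average over all size-$L$ subsystems, applies the hypergeometric tail bound (\ref{eq:trunc1}) (which holds for \emph{any} fixed $n$-fermion state once averaged over subsystems), and finishes with Markov's inequality over $h$; the hypothesis $L\le N/2-c\sqrt{N\ln N}$ is what lets the exponential tail survive the $\poly(N)$ loss from Markov and the $\sqrt{n+1}$ from Cauchy--Schwarz. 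You instead control every eigenstate uniformly: the full counting statistics of a number-conserving Gaussian state (so $Q_A$ is a sum of independent Bernoullis --- a true but nontrivial fact that deserves a citation or a short proof via diagonalizing the restricted correlation matrix), then Hoeffding, then L\'evy concentration on the unitary group for the means $\mu_j$, then a union bound, with the same hypothesis on $L$ making $e^{-\Omega((N/2-L)^2)}$ beat the union bound. Your route buys a stronger conclusion (a per-subsystem, per-eigenstate uniform tail with superpolynomially small failure probability over $h$, rather than a Markov-type $1-1/\poly(N)$ statement about the subsystem average), at the cost of heavier probabilistic machinery; the paper's route needs only second moments and combinatorics. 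Two small repairs to yours: the union bound must range over the $k$ subsystems (or all $\binom NL$ of them) as well as the $\binom Nn$ eigenstates --- harmless, since your per-pair failure probability is $e^{-\Omega(N\ln N)}$ --- and the closing appeal to Lemma \ref{l:eff} is unnecessary, since the lemma keeps $D^\textnormal{eff}_h$ explicit in its bound.
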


\begin{proof}
Let $A$ be an arbitrary subsystem of $L$ fermionic modes. Let $i_1<i_2<\cdots<i_L$ be the indices of the $L$ fermionic modes in $A$. Define
\begin{equation} \label{eq:PAdef}
P_A^{>l}=\sum_{(n_1,n_2,\ldots,n_L)\in\{0,1\}^{\times L}:~\sum_{j=1}^Ln_j>l}\prod_{j=1}^L\big(n_j+(1-2n_j)a_{i_j}a_{i_j}^\dag\big),\quad P_A^{\leqslant l}=\mathbbm 1_A-P_A^{>l},
\end{equation}
where $\mathbbm 1_A$ is the identity operator on $A$. Recall that $\{|\psi_R\rangle\}_{R\in{[N]\choose n}}$ is the complete set of computational basis states with $n$ fermions. By construction, $P_A^{>l}=(P_A^{>l})^2$ is a projector such that
\begin{equation}
P_A^{>l}|\psi_R\rangle=|\psi_R\rangle~\textnormal{or}~0
\end{equation}
if $|\psi_R\rangle$ has or does not have more than $l$ fermions in $A$, respectively. Let
\begin{equation}
l=\nu L+c'(N/2-L).
\end{equation}
Let $\e_{|A|=L}$ denote averaging over all subsystems of $L$ fermionic modes. There are ${N\choose L}$ such subsystems. Using the tail bound for the hypergeometric distribution \cite{Ska13},
\begin{equation} \label{eq:after-avg-def}
\e_{|A|=L}\|P_A^{>l}|\psi_R\rangle\|={N\choose L}^{-1}\sum_{j'>l}{n\choose j'}{N-n\choose L-j'}=e^{-\frac{\Omega(N/2-L)^2}L}.
\end{equation}
Any state $|\xi\rangle$ with $n$ fermions can be expanded as $|\xi\rangle=\sum_{R\in{[N]\choose n}}c'_R|\psi_R\rangle$ so that
\begin{equation} \label{eq:trunc1}
\e_{|A|=L}\|P_A^{>l}|\xi\rangle\|^2=\e_{|A|=L}\sum_{R\in{[N]\choose n}}|c'_R|^2\|P_A^{>l}|\psi_R\rangle\|=\sum_{R\in{[N]\choose n}}|c'_R|^2\e_{|A|=L}\|P_A^{>l}|\psi_R\rangle\|=e^{-\frac{\Omega(N/2-L)^2}L}.
\end{equation}
Using Eqs.~(\ref{eq:dep}), (\ref{eq:ldef}), (\ref{eq:l2}), (\ref{eq:trunc1}),
\begin{align}
&\e_h\tr(\bar\phi_AP_A^{>l})=\e_h\tr(\bar\phi P_A^{>l})=\e_h\sum_{j:~Q|j\rangle_h=n|j\rangle_h}\big|\langle\phi|j\rangle_h\big|^2\|P_A^{>l}|j\rangle_h\|^2\nonumber\\
&\le\sqrt{\mu_2\e_h\sum_{j:~Q|j\rangle_h=n|j\rangle_h}\|P_A^{>l}|j\rangle_h\|^4}=\sqrt{\mu_2\e_{|A|=L}\e_h\sum_{j:~Q|j\rangle_h=n|j\rangle_h}\|P_A^{>l}|j\rangle_h\|^4}\nonumber\\
&\le\sqrt{n+1}e^{-\frac{\Omega(N/2-L)^2}L}
\end{align}
so that
\begin{equation}
\e_h\frac1k\sum_{j=1}^k\tr(\bar\phi_{A'_j}P_{A'_j}^{>l})=\sqrt{n+1}e^{-\frac{\Omega(N/2-L)^2}L}.
\end{equation}
Markov's inequality implies that for $L\le N/2-c\sqrt{N\ln N}$ with a sufficiently large constant $c$,
\begin{equation} \label{eq:100}
\Pr_h\left(\frac1k\sum_{j=1}^k\tr(\bar\phi_{A'_j}P_{A'_j}^{>l})=e^{-\frac{\Omega(N/2-L)^2}L}\right)\ge1-1/\poly(N).
\end{equation}
Combining this inequality with Lemma \ref{l:ndg}, it suffices to prove that
\begin{equation} \label{eq:toprove}
\lim_{\tau\to\infty}\frac1\tau\int_0^\tau\frac1k\sum_{j=1}^k\|\phi(t)_{A'_j}-\bar\phi_{A'_j}\|_1\,\mathrm dt\le\frac{d'}{\sqrt{D^\textnormal{eff}_h}}+e^{-\frac{\Omega(N/2-L)^2}L}
\end{equation}
under the assumptions that (1) the spectrum of $H$ has non-degenerate gaps and (2)
\begin{equation} \label{eq:trunc2}
\frac1k\sum_{j=1}^k\tr(\bar\phi_{A'_j}P_{A'_j}^{>l})=e^{-\frac{\Omega(N/2-L)^2}L}.
\end{equation}
Lemma \ref{l:op} implies that
\begin{equation}
\lim_{\tau\to\infty}\frac1\tau\int_0^\tau\left|\tr\big(\phi(t)_AP_A^{>l}\big)-\tr(\bar\phi_AP_A^{>l})\right|\,\mathrm dt\le\frac1{\sqrt{D^\textnormal{eff}_h}}
\end{equation}
so that
\begin{equation} \label{eq:103}
\lim_{\tau\to\infty}\frac1\tau\int_0^\tau\frac1k\sum_{j=1}^k\tr\big(\phi(t)_{A'_j}P_{A'_j}^{>l}\big)\,\mathrm dt\le\frac1{\sqrt{D^\textnormal{eff}_h}}+e^{-\frac{\Omega(N/2-L)^2}L}.
\end{equation}
Although not a state with definite fermion number, $\bar\phi_A$ is a mixture of states, each of which has a definite fermion number. Hence,
\begin{equation} \label{eq:104}
\bar\phi_A=P_A^{>l}\bar\phi_AP_A^{>l}+P_A^{\leqslant l}\bar\phi_AP_A^{\leqslant l}\implies\tr(\bar\phi_AP_A^{>l})=\|\bar\phi_A-P_A^{\leqslant l}\bar\phi_AP_A^{\leqslant l}\|_1.
\end{equation}
Similarly,
\begin{equation} \label{eq:105}
\tr\big(\phi(t)_AP_A^{>l}\big)=\|\phi(t)_A-P_A^{\leqslant l}\phi(t)_AP_A^{\leqslant l}\|_1,\quad\forall t\in\mathbb R.
\end{equation}

A substantial part of the rest of the proof follows the calculation in Section 5 of Ref.~\cite{Sho11}. Let $\vec n=(n_1,n_2,\ldots,n_L)$ and $\vec n'=(n'_1,n'_2,\ldots,n'_L)$ be binary vectors of length $L$. Define
\begin{equation}
B_{\vec n}^{\vec n'}=\prod_{j=1}^L(a_{i_j}^\dag)^{n'_j}\times\prod_{j=1}^La_{i_j}a_{i_j}^\dag\times\prod_{j=1}^La_{i_j}^{n_j}.
\end{equation}
Any linear operator $B$ on subsystem $A$ can be expanded in the basis $\{B_{\vec n}^{\vec n'}\}_{\vec n,\vec n'\in\{0,1\}^{\times L}}$. Furthermore,
\begin{equation}
P_A^{\leqslant l}BP_A^{\leqslant l}\in\mathcal B,\quad\mathcal B:=\Span\left\{B_{\vec n}^{\vec n'}:{\sum_{j=1}^Ln'_j\le l\textnormal{ and }\sum_{j=1}^Ln_j\le l}\right\}.
\end{equation}
Let $g:\{0,1,\ldots,d-1\}\to\mathcal G$ be an arbitrary one-to-one mapping, where
\begin{equation}
\mathcal G:=\left\{\vec n\in\{0,1\}^{\times L}:\sum_{j=1}^Ln_j\le l\right\},\quad d:=|\mathcal G|=\sum_{j=0}^l{L\choose j}.
\end{equation}
Define $d^2$ operators
\begin{equation}
B_{dj_1+j_2+1}=\frac1{\sqrt d}\sum_{j=0}^{d-1}e^{2\pi ij_2j/d}B_{g(j)}^{g((j_1+j)\bmod d)}
\end{equation}
for $j_1,j_2=0,1,\ldots,d-1$. It is easy to see that
\begin{equation} \label{eq:orth}
\|B_j\|=1/\sqrt d,\quad\tr(B_j^\dag B_{j'})=\delta_{jj'},\quad\mathcal B=\Span\{B_1,B_2,\ldots,B_{d^2}\}.
\end{equation}
Since $P_A^{\leqslant l}(\phi(t)_A-\bar\phi_A)P_A^{\leqslant l}\in\mathcal B$, it can be expanded as
\begin{equation}
P_A^{\leqslant l}\big(\phi(t)_A-\bar\phi_A\big)P_A^{\leqslant l}=\sum_{j=1}^{d^2}c_j(t)B_j,\quad c_j(t)=\tr\left(B_j^\dag\big(\phi(t)_A-\bar\phi_A\big)\right).
\end{equation}
Lemma \ref{l:op} implies that
\begin{equation} \label{eq:113}
\lim_{\tau\to\infty}\frac1\tau\int_0^\tau|c_j(t)|^2\,\mathrm dt\le\frac{\|B_j^\dag\|^2}{D^\textnormal{eff}_h}=\frac1{dD^\textnormal{eff}_h}.
\end{equation}
Using (\ref{eq:12norm}), (\ref{eq:orth}),
\begin{equation}
\left\|P_A^{\leqslant l}\big(\phi(t)_A-\bar\phi_A\big)P_A^{\leqslant l}\right\|_1\le\sqrt{\rank P_A^{\leqslant l}\times\sum_{j,j'=1}^{d^2}c_j^*(t)c_{j'}(t)\tr(B_j^\dag B_{j'})}=\sqrt{d\sum_{j=1}^{d^2}|c_j(t)|^2}
\end{equation}
so that
\begin{equation} \label{eq:115}
\lim_{\tau\to\infty}\frac1\tau\int_0^\tau\left\|P_A^{\leqslant l}\big(\phi(t)_A-\bar\phi_A\big)P_A^{\leqslant l}\right\|_1\,\mathrm dt\le\sqrt{\sum_{j=1}^{d^2}\lim_{\tau\to\infty}\frac d\tau\int_0^\tau|c_j(t)|^2\,\mathrm dt}\le\frac d{\sqrt{D^\textnormal{eff}_h}}.
\end{equation}
Finally, (\ref{eq:trunc2}), (\ref{eq:103}), (\ref{eq:104}), (\ref{eq:105}), (\ref{eq:115}) imply (\ref{eq:toprove}), where
\begin{equation}
d'=d+1=O(e^{H_b(\min\{l/L,1/2\})L})
\end{equation}
satisfies Eq.~(\ref{eq:d}).
\end{proof}

\begin{lemma} \label{l:short}
If the spectrum of $H$ has non-degenerate gaps, then
\begin{equation}
\lim_{\tau\to\infty}\frac1\tau\int_0^\tau\|\phi(t)_A-\bar\phi_A\|_2^2\,\mathrm dt\le\frac{2^L}{D^\textnormal{eff}_h}.
\end{equation}
\end{lemma}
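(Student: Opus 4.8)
The plan is to reduce the Frobenius-norm fluctuation of the reduced state to a sum of scalar expectation-value fluctuations, each of which is controlled by Lemma \ref{l:op}. First I would fix a Hilbert--Schmidt orthonormal basis $\{B_j\}_{j=1}^{4^L}$ of operators on $A$, normalized so that $\tr(B_j^\dagger B_{j'})=\delta_{jj'}$, and chosen so that every $B_j$ has the \emph{same small operator norm} $\|B_j\|=2^{-L/2}$. Products of Majorana operators restricted to $A$ do exactly this: for each subset $S\subseteq\{1,2,\dots,2L\}$ the product $\chi_S:=\prod_{j\in S}\chi_j$ is, up to a phase, a unitary, so $\|\chi_S\|=1$ and $\tr(\chi_S^\dagger\chi_{S'})=2^L\delta_{S,S'}$ on the $2^L$-dimensional Fock space of $A$; there are $2^{2L}=4^L$ such products, so $B_S:=2^{-L/2}\chi_S$ is the desired basis.

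Next I would expand $\phi_n(t)_A-\bar\phi_{n,A}$ in this basis. Since the basis is Hilbert--Schmidt orthonormal, Parseval's identity gives
\begin{equation}
\|\phi_n(t)_A-\bar\phi_{n,A}\|_2^2=\sum_{j=1}^{4^L}\left|\tr\!\big(B_j^\dagger(\phi_n(t)_A-\bar\phi_{n,A})\big)\right|^2.
\end{equation}
The key observation is that each coefficient is a full-system expectation value minus its time average: viewing $B_j^\dagger$ as acting on the whole system through $B_j^\dagger\otimes I_{\bar A}$ and tracing out $\bar A$,
\begin{equation}
\tr\!\big(B_j^\dagger(\phi_n(t)_A-\bar\phi_{n,A})\big)=\tr\!\big(\phi_n(t)\,B_j^\dagger\big)-\overline{B_j^\dagger},
\end{equation}
whose time-averaged squared modulus is precisely the fluctuation $\Delta(B_j^\dagger)$ defined earlier.

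Then I would time-average the Parseval identity and invoke Lemma \ref{l:op} term by term. Because $\|B_j^\dagger\otimes I_{\bar A}\|=\|B_j\|=2^{-L/2}$, every term satisfies $\Delta(B_j^\dagger)\le\|B_j\|^2/D^\textnormal{eff}_h=2^{-L}/D^\textnormal{eff}_h$, and summing over the $4^L$ basis elements yields
\begin{equation}
\lim_{\tau\to\infty}\frac1\tau\int_0^\tau\|\phi_n(t)_A-\bar\phi_{n,A}\|_2^2\,\mathrm dt=\sum_{j=1}^{4^L}\Delta(B_j^\dagger)\le\frac{4^L\cdot2^{-L}}{D^\textnormal{eff}_h}=\frac{2^L}{D^\textnormal{eff}_h},
\end{equation}
which is the claimed bound.

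The only real obstacle here is bookkeeping rather than analysis: one must use a basis that is \emph{simultaneously} Hilbert--Schmidt orthonormal and has uniformly bounded operator norm, since these two norms disagree in general. The exponential factors conspire correctly---$4^L$ basis elements times operator-norm-squared $2^{-L}$ collapses to exactly $2^L$---precisely because Majorana (equivalently Pauli) strings are unitary and hence have operator norm $1$ before normalization. Had I instead used the matrix-unit basis, the operator-norm bounds would be far from tight and the dimension count would not close to give the clean factor $2^L$.
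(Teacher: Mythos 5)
Your proof is correct and is essentially the paper's own argument: the paper likewise expands $\phi_n(t)_A-\bar\phi_{n,A}$ in a Hilbert--Schmidt-orthonormal basis of $4^L$ operators on $A$, each of operator norm $2^{-L/2}$ (obtained by setting $l=L$ in the proof of Lemma \ref{l:eq}, where Fourier-transformed combinations of fermionic matrix units play exactly the role of your normalized Majorana strings), then applies Lemma \ref{l:op} coefficient by coefficient and sums, giving $4^L\cdot 2^{-L}/D^\textnormal{eff}_h=2^L/D^\textnormal{eff}_h$. The only difference is the concrete choice of basis, which is immaterial since both constructions have precisely the two properties the argument needs: Hilbert--Schmidt orthonormality (for Parseval) and uniformly small operator norm (for Lemma \ref{l:op}).
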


\begin{proof}
Using the same notation as in the proof of Lemma \ref{l:eq}, we choose $l=L$ so that $P_A^{\leqslant l}=\mathbbm 1_A$ and that $d=2^L$. Lemma \ref{l:short} follows from (\ref{eq:113}) and
\begin{equation}
\|\phi(t)_A-\bar\phi_A\|_2^2=\sum_{j,j'=1}^{d^2}c_j^*(t)c_{j'}(t)\tr(B_j^\dag B_{j'})=\sum_{j=1}^{d^2}|c_j(t)|^2.
\end{equation}
\end{proof}

\subsection{Proof of Theorem \ref{thm:half}}

\begin{proof} [Proof of (\ref{eq:10})]
Using Lemmas \ref{l:ndg} and \ref{l:eff}, it suffices to prove that
\begin{equation} \label{eq:2p}
\Pr_{t\in[0,\tau]}\left(\frac1m\sum_{j=1}^mS(\phi(t)_{A_j})=L\ln2-\frac{O(L\ln N)}N\right)=1-e^{-\Omega(N)}
\end{equation}
under the assumptions that (1) the spectrum of $H$ has non-degenerate gaps and (2)
\begin{equation}
S(\bar\phi)=N\ln2-O(\ln N),\quad D^\textnormal{eff}_h\ge2^N/\poly(N).
\end{equation}
Using the strong subadditivity of the von Neumann entropy,
\begin{equation} \label{eq:split}
\frac1m\sum_{j=1}^mS(\bar\phi_{A_j})\ge\frac{LS(\bar\phi)}N=L\ln2-\frac{O(L\ln N)}N.
\end{equation}
For $L\le N/10$, Lemma \ref{l:short} and (\ref{eq:12norm}) imply that
\begin{equation} \label{eq:pp}
\lim_{\tau\to\infty}\frac1\tau\int_0^\tau\frac1m\sum_{j=1}^m\|\phi(t)_{A_j}-\bar\phi_{A_j}\|_1\,\mathrm dt=e^{-\Omega(N)}.
\end{equation}
Markov's inequality implies that for sufficiently large $\tau$,
\begin{equation} \label{eq:prob}
\Pr_{t\in[0,\tau]}\left(\frac1m\sum_{j=1}^m\|\phi(t)_{A_j}-\bar\phi_{A_j}\|_1\le\frac1{\poly(N)}\right)=1-e^{-\Omega(N)}.
\end{equation}
Equation (\ref{eq:2p}) follows from (\ref{eq:split}), (\ref{eq:prob}), and the following lemma.
\end{proof}

\begin{lemma} [continuity of the von Neumann entropy \cite{Fan73, Aud07}] \label{l:cont}
Let $T:=\|\rho_1-\rho_2\|_1/2$ be the trace distance between two density operators $\rho_1,\rho_2$ on the Hilbert space $\mathbb C^D$. Then,
\begin{equation}
    |S(\rho_1)-S(\rho_2)|\le T\ln(D-1)-T\ln T-(1-T)\ln(1-T).
\end{equation}
\end{lemma}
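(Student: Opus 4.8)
The plan is to prove the sharp Fannes--Audenaert inequality by reducing to commuting density operators and then solving the resulting classical optimization exactly. Write $f(x):=x\ln(D-1)+H_b(x)$ for the claimed bound, so that, after assuming without loss of generality that $S(\rho_1)\ge S(\rho_2)$ (the reverse case follows by swapping $\rho_1,\rho_2$, which fixes $T$), the goal is $S(\rho_1)-S(\rho_2)\le f(T)$. Since the von Neumann entropy depends only on the spectrum, I would let $r_1\ge r_2\ge\cdots\ge r_D$ and $s_1\ge s_2\ge\cdots\ge s_D$ be the decreasingly ordered eigenvalues of $\rho_1,\rho_2$ and write $S(\rho_i)$ as the Shannon entropy $H(r):=-\sum_i r_i\ln r_i$ of the corresponding probability vector. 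The spectral fact I would invoke is Mirsky's inequality, $\tfrac12\|r-s\|_1\le\tfrac12\|\rho_1-\rho_2\|_1=T$, i.e. that sorting the two eigenvalue lists the same way minimizes their $\ell^1$ distance; this lets me replace $\rho_1,\rho_2$ by the diagonal operators carrying their sorted spectra, preserving both entropies while not increasing the trace distance, and thereby reduces the claim to the classical statement $H(r)-H(s)\le f(t)$ for probability vectors at total-variation distance $t=\tfrac12\|r-s\|_1\le T$.

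To prove the classical bound I would fix $t$ and maximize $H(r)-H(s)$ over probability vectors with $\tfrac12\|r-s\|_1=t$. A Lagrange-multiplier and convexity analysis shows the maximizer has a simple two-level structure: the optimal $s$ is concentrated on the single coordinate where $r$ is largest, so the problem collapses to the extremal family in which $\rho_2=|0\rangle\langle0|$ is pure while $\rho_1$ carries weight $1-t$ on $|0\rangle$ and spreads the remaining weight $t$ uniformly over the other $D-1$ levels. Evaluating the entropy difference on this family gives $S(\rho_1)-S(\rho_2)=-(1-t)\ln(1-t)-t\ln\tfrac{t}{D-1}=f(t)$ exactly; the same family furnishes the equality case, which simultaneously proves tightness for $t\le1-1/D$ and shows the constant cannot be improved.

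The main obstacle is executing the reduction to the commuting case at \emph{fixed} trace distance rather than lossily. Because $f$ is increasing only on $[0,1-1/D]$ and decreasing thereafter, one cannot simply combine the classical bound $f(t)$ with monotonicity to pass back from $t$ to $T$ once $T>1-1/D$; the argument must instead show that the global maximizer of the entropy gap at a prescribed trace distance already commutes, so that $t=T$ at the extremum. Establishing this---say by a perturbation argument that removes any off-diagonal structure without decreasing the entropy gap or increasing $T$---together with the explicit variational solution is where essentially all the difficulty lies. I note, however, that in the only place this lemma is used (the proof of (\ref{eq:10}) and its analogues) one has $D=2^L$ and $T\le1/\poly(N)\ll1-1/D$, so the delicate large-$T$ regime never arises: there the reduction via $t\le T$ combined with monotonicity of $f$ on $[0,1-1/D]$ already suffices, and even the cruder original Fannes estimate would deliver the qualitative continuity needed.
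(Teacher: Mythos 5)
First, a point of comparison: the paper does \emph{not} prove Lemma \ref{l:cont} at all --- it is the Fannes--Audenaert inequality, quoted as a known result with citations \cite{Fan73, Aud07} --- so there is no in-paper argument to measure your proposal against; it must stand on its own as a proof. Judged that way, it has a genuine gap, and it is precisely the one you flag yourself. The Mirsky reduction to identically sorted spectra only yields $t:=\tfrac12\sum_i|\lambda_i^\downarrow(\rho_1)-\lambda_i^\downarrow(\rho_2)|\le T$, and since $f(x):=x\ln(D-1)+H_b(x)$ increases only on $[0,1-1/D]$ and decreases thereafter, the chain ``gap $\le f(t)\le f(T)$'' fails exactly when $T>1-1/D$, which is part of the stated range $0\le T\le1$. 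Your proposed repair --- showing by perturbation that the maximizer of the entropy gap at \emph{fixed} trace distance commutes --- is not carried out, and it is not a routine step: the natural operations (pinching in the eigenbasis of one state, or passing to sorted spectra) inherently shrink the trace distance, which is exactly why Audenaert's published proof has to work harder than the commuting reduction. In addition, your classical step is asserted rather than proven: the extremal family you exhibit ($s$ pure, $r=(1-t,\tfrac t{D-1},\ldots,\tfrac t{D-1})$) is correct and does achieve $f(t)$, but establishing that it is the \emph{global} maximizer is the content of the classical theorem, not a one-line Lagrange-multiplier remark; a clean route is to take a maximal coupling of $r$ and $s$ with $\Pr[X\ne Y]=t$ and combine $H(X)-H(Y)\le H(X|Y)$ with Fano's inequality $H(X|Y)\le H_b(t)+t\ln(D-1)$.

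For what it is worth, the large-$T$ regime can be closed within your framework without ever proving commutation of the extremizer. Ky Fan's inequality gives the complementary estimate $T\le t':=\tfrac12\sum_i|\lambda_i^\downarrow(\rho_1)-\lambda_i^\uparrow(\rho_2)|$ (opposite sorting), and the classical bound applies to that pairing of the spectra as well, so the entropy gap is at most $\min\{f(t),f(t')\}$; since $f$ is concave on $[0,1]$, its minimum over the interval $[t,t']\ni T$ is attained at an endpoint, whence $\min\{f(t),f(t')\}\le f(T)$ for every $T\in[t,t']$. Finally, your closing observation is correct and relevant: everywhere the paper invokes Lemma \ref{l:cont} (e.g.\ in deriving (\ref{eq:2p}) from (\ref{eq:split}) and (\ref{eq:prob})), the trace distance is $1/\poly(N)$ while $D=2^L$, so only the monotone regime $T\le1-1/D$ is ever used; there your lossy reduction suffices, and indeed even the original, non-sharp Fannes bound would do.
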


Since by definition $0\le T\le1$, the right-hand side of this inequality is well defined.

\begin{proof} [Proof of (\ref{eq:50})]
Using Lemmas \ref{l:ndg} and \ref{l:eff}, it suffices to prove that
\begin{equation} \label{eq:124}
\Pr_{t\in[0,\tau]}\big(S_2(\phi(t)_A)=L\ln2-O(\ln N)\big)=1-e^{-\Omega(N/2-L)}/\poly(N)
\end{equation}
under the assumptions that (1) the spectrum of $H$ has non-degenerate gaps and (2)
\begin{equation} \label{eq:125}
S_2(\bar\phi)=\ln D^\textnormal{eff}_h=N\ln2-O(\ln N).
\end{equation}
Using the weak subadditivity \cite{LLZZ18} of the R\'enyi-$2$ entropy,
\begin{equation} \label{eq:126}
S_2(\bar\phi_A)\ge S_2(\bar\phi)-(N-L)\ln2=L\ln2-O(\ln N).
\end{equation}
Lemma \ref{l:short} and (\ref{eq:125}) imply that
\begin{equation}
\lim_{\tau\to\infty}\frac1\tau\int_0^\tau\|\phi(t)_A-\bar\phi_A\|_2^2\,\mathrm dt\le2^{L-N}\poly(N).
\end{equation}
Markov's inequality implies that for sufficiently large $\tau$,
\begin{equation} \label{eq:128}
\Pr_{t\in[0,\tau]}\big(\|\phi(t)_A-\bar\phi_A\|_2\le2^{-L/2}\poly(N)\big)=1-e^{-\Omega(N/2-L)}/\poly(N).
\end{equation}
Equation (\ref{eq:124}) follows from (\ref{eq:126}), (\ref{eq:128}), and
\begin{equation}
e^{-S_2(\phi(t)_A)/2}=\|\phi(t)_A\|_2\le\|\bar\phi_A\|_2+\|\phi(t)_A-\bar\phi_A\|_2=e^{-S_2(\bar\phi_A)/2}+\|\phi(t)_A-\bar\phi_A\|_2.
\end{equation}
\end{proof}

\subsection{Proof of Theorem \ref{thm:l}}

\begin{proof} [Proof of (\ref{eq:less})]
Using Lemmas \ref{l:eff} and \ref{l:eq}, it suffices to prove that
\begin{equation} \label{eq:suf}
\Pr_{t\in[0,\tau]}\left(\frac1m\sum_{j=1}^mS(\phi(t)_{A_j})\ge H_b(\nu)L-\frac{O(L\ln N)}N\right)=1-e^{-\frac{\Omega(N/2-L)^2}N}
\end{equation}
under the assumptions that
\begin{gather}
S(\bar\phi)\ge H_b(\nu)N-O(\ln N),\\
\lim_{\tau\to\infty}\frac1\tau\int_0^\tau\frac1m\sum_{j=1}^m\|\phi(t)_{A_j}-\bar\phi_{A_j}\|_1\,\mathrm dt\le e^{-(H_b(\nu)-c'\ln\frac{1-\nu}\nu)(\frac N2-L)}\poly(N)+e^{-\frac{\Omega(N/2-L)^2}L}.\label{eq:132}
\end{gather}
Using the strong subadditivity of the von Neumann entropy,
\begin{equation} \label{eq:enl}
\frac1m\sum_{j=1}^mS(\bar\phi_{A_j})\ge\frac{LS(\bar\phi)}N\ge H_b(\nu)L-\frac{O(L\ln N)}N.
\end{equation}
For $\nu=\Omega(1)$ and $L\le N/2-c\sqrt{N\ln N}$, (\ref{eq:132}) implies that
\begin{equation} 
\lim_{\tau\to\infty}\frac1\tau\int_0^\tau\frac1m\sum_{j=1}^m\|\phi(t)_{A_j}-\bar\phi_{A_j}\|_1\,\mathrm dt=e^{-\frac{\Omega(N/2-L)^2}N}.
\end{equation}
Markov's inequality implies that for sufficiently large $\tau$,
\begin{equation} \label{eq:last}
\Pr_{t\in[0,\tau]}\left(\frac1m\sum_{j=1}^m\|\phi(t)_{A_j}-\bar\phi_{A_j}\|_1\le\frac1{\poly(N)}\right)=1-e^{-\frac{\Omega(N/2-L)^2}N}.
\end{equation}
Equation (\ref{eq:suf}) follows from (\ref{eq:enl}), (\ref{eq:last}), and the continuity of the von Neumann entropy (Lemma \ref{l:cont}).
\end{proof}

\begin{proof} [Proof of (\ref{eq:half})]
We first consider the case that $L\le N/2-c\sqrt{N\ln N}$ for a sufficiently large constant $c$. Recall the definition (\ref{eq:PAdef}) of $P_A^{\leqslant l}$. Let $l=\nu L-O(\sqrt{L\ln N})$ with a sufficiently large constant hidden in the big-O notation. Similar to (\ref{eq:100}),
\begin{equation}
\Pr_h\big(\tr(\bar\phi_AP_A^{\leqslant l})\le1/\poly(N)\big)\ge1-1/\poly(N).
\end{equation}
Using this inequality and Lemmas \ref{l:eff} and \ref{l:eq}, it suffices to prove that
\begin{equation} \label{eq:137}
\Pr_{t\in[0,\tau]}\big(S(\phi(t)_A)\ge H_b(\nu)L-O(\sqrt{L\ln N})\big)=1-e^{-\frac{\Omega(N/2-L)^2}N}
\end{equation}
under the assumptions that
\begin{gather}
\tr(\bar\phi_AP_A^{\leqslant l})\le1/\poly(N),\quad S(\bar\phi)\ge H_b(\nu)N-O(\ln N),\label{eq:138}\\
\lim_{\tau\to\infty}\frac1\tau\int_0^\tau\|\phi(t)_A-\bar\phi_A\|_1\,\mathrm dt\le e^{-(H_b(\nu)-c'\ln\frac{1-\nu}\nu)(N/2-L)}\poly(N)+e^{-\frac{\Omega(N/2-L)^2}L}.\label{eq:139}
\end{gather}
$\bar\phi$ (\ref{eq:dep}) is a mixture of pure states, each of which has $n$ fermions. Hence,
\begin{gather}
\bar\phi^{>l}:=P_A^{>l}\bar\phi P_A^{>l}=P_{\bar A}^{<(n-l)}\bar\phi P_{\bar A}^{<(n-l)},\quad\bar\phi_A^{>l}:=\tr_{\bar A}\bar\phi^{>l}=P_A^{>l}\bar\phi_AP_A^{>l},\\
\tr(\bar\phi P_A^{\leqslant l})=\tr(\bar\phi_AP_A^{\leqslant l})=\|\bar\phi_A-\bar\phi_A^{>l}\|_1,\label{eq:141}
\end{gather}
where $P_{\bar A}^{<(n-l)}$ is defined in the same way as $P_A^{>l}$ (\ref{eq:PAdef}), and
\begin{align} \label{eq:147}
&\|\bar\phi-\bar\phi^{>l}\|_1\le\sum_{j=1}^{2^N}\big|\langle\phi|j\rangle_h\big|^2\cdot\big\||j\rangle_h\langle j|_h-P_A^{>l}|j\rangle_h\langle j|_hP_A^{>l}\big\|_1\nonumber\\
&=\sum_{j=1}^{2^N}\big|\langle\phi|j\rangle_h\big|^2\cdot\big\|P_A^{\leqslant l}|j\rangle_h\langle j|_h+P_A^{>l}|j\rangle_h\langle j|_hP_A^{\leqslant l}\big\|_1\le2\sum_{j=1}^{2^N}\big|\langle\phi|j\rangle_h\big|^2\sqrt{\langle j|_hP_A^{\leqslant l}|j\rangle_h}\nonumber\\
&\le2\sqrt{\sum_{j=1}^{2^N}\big|\langle\phi|j\rangle_h\big|^2\langle j|_hP_A^{\leqslant l}|j\rangle_h}=2\sqrt{\tr(\bar\phi P_A^{\leqslant l})}.
\end{align}
Using the weak subadditivity of the von Neumann entropy,
\begin{multline} \label{eq:142}
S(\bar\phi_A^{>l})\ge S(\bar\phi^{>l})-\ln\rank P_{\bar A}^{<(n-l)}=S(\bar\phi^{>l})-\ln\sum_{j=0}^{\nu(N-L)+O(\sqrt{L\ln N})}{N-L\choose j}\\
\ge S(\bar\phi^{>l})-H_b(\nu)(N-L)-O(\sqrt{L\ln N}).
\end{multline}
(\ref{eq:138}), (\ref{eq:141}), (\ref{eq:147}), (\ref{eq:142}), and the continuity of the von Neumann entropy imply that 
\begin{equation} \label{eq:143}
S(\bar\phi_A)\ge H_b(\nu)L-O(\sqrt{L\ln N}).
\end{equation}
For $\nu=\Omega(1)$ and $L\le N/2-c\sqrt{N\ln N}$, (\ref{eq:139}) implies that
\begin{equation} 
\lim_{\tau\to\infty}\frac1\tau\int_0^\tau\|\phi(t)_A-\bar\phi_A\|_1\,\mathrm dt=e^{-\frac{\Omega(N/2-L)^2}N}.
\end{equation}
Markov's inequality implies that for sufficiently large $\tau$,
\begin{equation} \label{eq:145}
\Pr_{t\in[0,\tau]}\big(\|\phi(t)_A-\bar\phi_A\|_1\le1/\poly(N)\big)=1-e^{-\frac{\Omega(N/2-L)^2}N}.
\end{equation}
Equation (\ref{eq:137}) follows from (\ref{eq:143}), (\ref{eq:145}), and the continuity of the von Neumann entropy.

If $N/2-c\sqrt{N\ln N}<L\le N/2$, let $A'$ be an (arbitrary) subsystem of $N/2-c\sqrt{N\ln N}$ fermionic modes in $A$. We have proved that
\begin{multline}
\Pr_h\left(\Pr_J\left(\Pr_{t\in[0,\tau]}\big(S(\phi(t)_{A'})\ge H_b(\nu)N/2-O(\sqrt{N\ln N})\big)\ge1-1/\poly(N)\right)=1\right)\\
\ge1-1/\poly(N).
\end{multline}
We complete the proof by noting that for any $t\in\mathbb R$,
\begin{equation}
S(\phi(t)_A)\ge S(\phi(t)_{A'})-\big(L-(N/2-c\sqrt{N\ln N})\big)\ln2\ge S(\phi(t)_{A'})-O(\sqrt{N\ln N}).
\end{equation}
\end{proof}

\section*{Acknowledgments} \addcontentsline{toc}{section}{Acknowledgments}

We would like to thank David A. Huse, Shao-Kai Jian, Xiao-Liang Qi for independently asking questions that motivated this work; Yingfei Gu, Pengfei Zhang, Tianci Zhou for explaining the path-integral approach in previous studies of entanglement dynamics in SYK models; Hong Liu for a discussion on the Page curve; and Jacobus J. M. Verbaarschot for pointing out the relationship between SYK models and embedded Gaussian ensembles \cite{MF75}. This material is based upon work supported by the U.S. Department of Energy, Office of Science, National Quantum Information Science Research Centers, Quantum Systems Accelerator.  AWH was also supported by NSF grants CCF-1729369 and PHY-1818914 and NTT (Grant AGMT DTD 9/24/20).

\paragraph{Data availability}There is no data associated with this manuscript.

\paragraph{Conflict of interest}The authors declare no competing interests.

\appendix

\section{Index of notation}

\small
\begin{longtable}{c|c|c}
notation & definition & defined in \\
\hline
$|0\rangle$ & vacuum state & after Eq.~(\ref{eq:ini}) \\
$\mathbbm 1_A$ & identity operator on subsystem $A$ & after Eq.~(\ref{eq:PAdef}) \\
$A$ & a subsystem smaller than or equal to half the system size & \S\ref{ss:et} \\
$\bar A$ & complement of subsystem $A$ (rest of the system) & \S\ref{ss:et} \\
$A_1,A_2,\ldots$ & a collection of subsystems & before (\ref{eq:phit}) \\
$B$ & a (not necessarily Hermitian) linear operator & before Eq.~(\ref{eq:bb}) \\
$\bar B$ & average of the expectation value of $B$ over time & Eq.~(\ref{eq:bb}) \\
$\Delta B$ & fluctuation of the expectation value of $B$ & Eq.~(\ref{eq:fl}) \\
$\mathbf C$ & correlation matrix between annihilation/creation operators & Eq.~(\ref{eq:def}) \\
$D^\textnormal{eff}_h$ & effective dimension of $|\phi\rangle$ & (\ref{eq:eff}) \\
$\{E_j\}$ & spectrum of a Hamiltonian that is clear from the context & Definition \ref{d:ndg} \\
$\e_{|A|=L}$ & average over all subsystems of $L$ fermionic modes & before Eq.~(\ref{eq:after-avg-def}) \\
$H$ & the model in which we prove entropy thermalization & Eq.~(\ref{eq:model}) \\
$H_\textnormal{SYK2}$ & Hamiltonian of the complex SYK2 model & Eq.~(\ref{eq:SYK2}) \\
$H_\textnormal{SYK4}$ & Hamiltonian of the complex SYK4 model & Eq.~(\ref{eq:SYK4}) \\
$H_\textnormal{SYK}$ & $H_\textnormal{SYK2}+\epsilon_2H_\textnormal{SYK4}$ & Eq.~(\ref{eq:SYK24}) \\
$H_\textnormal{d}$ & a model with random all-to-all density-density interactions & Eq.~(\ref{eq:dd}) \\
$H_b(\cdot)$ & binary entropy function & Table \ref{t:summary}, caption \\
$I_{N'}$ & identity matrix of size $N'\times N'$ & before Eq.~(\ref{eq:72}) \\
$J$ & a collection of random variables as coefficients of $H_\textnormal{SYK4}$ & Definition \ref{def:SYK4} \\
$K$ & a collection of random variables as coefficients of $H_\textnormal{d}$ & before Eq.~(\ref{eq:dd}) \\
$L$ & size of subsystem $A$ & \S\ref{s:rf}, 2nd paragraph \\
$\mathbf M$ & covariance matrix for Majorana operators & Eq.~(\ref{eq:covm}) \\
$\mathcal M_n$ & subspace of pure states with $n$ fermions & Eq.~(\ref{eq:Mn}) \\
$M_n$ & set of product states with $n$ spins up and $N-n$ spins down & \S\ref{app:ub}, 1st paragraph \\
$M'_n$ & orthonormal basis of $\Span M_n$ & \S\ref{app:ub}, 1st paragraph \\
$N$ & number of fermionic modes in the system & \S\ref{ss:m} \\
$N$ in \S\ref{sec:UB-proofs} & number of spins in the system & \S\ref{sec:UB-proofs}, 2nd paragraph \\
$[N]$ & set of integers from $1$ to $N$ & \S\ref{ss:m} \\
${[N]\choose n}$ & set of size-$n$ subsets of $[N]$ & Eq.~(\ref{eq:ncn}) \\
$O(\cdot)$ & Big-O notation & \S\ref{s:rf}, 1st paragraph\\
$\tilde O(\cdot)$ & Big-O with polylogarithmic corrections & \S\ref{s:rf}, 1st paragraph\\
$P_A^{>l},P_A^{\le l}$ & projector onto the subspace with $>l$ or $\le l$ fermions in $A$ & (\ref{eq:PAdef}) \\
$Q$ & fermion number operator & Eq.~(\ref{eq:Q}) \\
$Q_A$ & restriction of $Q$ to subsystem $A$ & Eq.~(\ref{eq:qa})\\
$R$ & a size-$n$ subset of $[N]$ & Eq.~(\ref{eq:ncn}) \\
$S(\cdot)$ & von Neumann entropy & Eq.~(\ref{eq:S}) \\
$S_\alpha(\cdot)$ & quantum R\'enyi entropy & Eq.~(\ref{eq:Sa}) \\
$U$ & a Haar-random unitary matrix & after Eq.~(\ref{eq:haar}) \\
$a$ & column vector $(a_1,a_2,\ldots,a_N)^T$ of annihilation operators & Definition \ref{def:SYK2} \\
$a^\dag$ & row vector $(a_1^\dag,a_2^\dag,\ldots,a_N^\dag)$ of creation operators & Definition \ref{def:SYK2} \\
$f$ & $L/N$ & Table \ref{t:summary}, caption \\
$h$ & random matrix from the Gaussian unitary ensemble & Definition \ref{def:SYK2} \\
$\{|j\rangle_h\}_{j=1}^{2^N}$ & eigenbasis of $H_\textnormal{SYK2}=a^\dag ha$ & near the end of \S\ref{ss:m} \\
$n$ & fermion number of the initial and time-evolved states & \S\ref{ss:ri}, 1st paragraph \\
$o(\cdot)$ & Little-O notation & \S\ref{s:rf}, 1st paragraph \\
$q$ & each term in the SYK$q$ Hamiltonian acts on $q$ sites & \S\ref{ss:m}, 2nd paragraph \\
$t$ & evolution time & (\ref{eq:24}) \\
$\Gamma(\cdot)$ & gamma function & \\
$\Theta(\cdot)$ & Big-Theta notation & \S\ref{s:rf}, 1st paragraph \\
$\Omega(\cdot)$ & Big-Omega notation & \S\ref{s:rf}, 1st paragraph \\
$\alpha$ & R\'enyi index & Definition \ref{def:ee} \\
$\beta$ & inverse temperature & Eq.~(\ref{eq:thermal}) \\
$\delta$ & Kronecker delta & \\
$\epsilon_1,\epsilon_2$ & infinitesimal parameters in the definition of $H$ & before Eq.~(\ref{eq:model}) \\
$\eta$ & column vector $(a_1,a_2,\ldots,a_N,a_1^\dag,a_2^\dag,\ldots,a_N^\dag)^T$ & (\ref{eq:etachi}) \\
$\mu_\alpha$ & $\alpha$-th moment of $|\langle\phi|j\rangle_h|^2$ & Eq.~(\ref{eq:ldef}) \\
$\nu$ & $n/N$, filling fraction & \S\ref{ss:ri}, 1st paragraph \\
$\sigma_\beta$ & thermal state at inverse temperature $\beta$ & Eq.~(\ref{eq:thermal}) \\
$\sigma_{\beta,A}$ & reduced density matrix of $\sigma_\beta$ for subsystem $A$ & Eq.~(\ref{eq:thr}) \\
$\tau$ & sufficiently large such that (\ref{eq:prob}), (\ref{eq:128}), (\ref{eq:last}), (\ref{eq:145}) hold & before Theorem \ref{thm:half} \\
$|\phi\rangle$ & initial state & Eq.~(\ref{eq:ini}) \\
$\phi(t)$ & density matrix of the time-evolved state $e^{-iHt}|\phi\rangle$ & (\ref{eq:phit}) \\
$\phi(t)_{A_j}$ & reduced density matrix of $\phi(t)$ for subsystem $A_j$ & (\ref{eq:phit}) \\
$\bar\phi$ & average of $\phi(t)$ over time & (\ref{eq:bphi}) \\
$\bar\phi_{A_j}$ & reduced density matrix of $\bar\phi$ for subsystem $A_j$ & (\ref{eq:bphi}) \\
$\chi$ & column vector $(\chi_1,\chi_2,\ldots,\chi_{2N})^T$ of Majorana operators & (\ref{eq:etachi}) \\
$|\psi\rangle_R$ & product state with modes in $R$ occupied & Eq.~(\ref{eq:psi}) \\
$\psi_R(t)$ & density matrix of the time-evolved state $e^{-iHt}|\psi\rangle_R$ & (\ref{eq:psit}) \\
$\psi_R(t)_A$ & reduced density matrix of $\psi_R(t)$ for subsystem $A$ & (\ref{eq:psit})
\end{longtable}

\printbibliography[heading=bibintoc]

\end{document}